\newcommand\black{\ensuremath{\blacktriangleright}}
\newcommand\white{\ensuremath{\vartriangleright}}
\newif\ifamsfontsloaded
  \newcommand\whbl{\white\kern-.1em--\kern-.1em\black}
  \newcommand\blwh{\black\kern-.1em--\kern-.1em\white}
  \newcommand\blbl{\black\kern-.1em--\kern-.1em\black}
  \newcommand\whwh{\white\kern-.1em--\kern-.1em\white}
       \newcommand\whbl{\white\kern-.125em--\kern-.125em\black}%
       \newcommand\blwh{\black\kern-.125em--\kern-.125em\white}%
       \newcommand\blbl{\black\kern-.125em--\kern-.125em\black}%
       \newcommand\whwh{\white\kern-.125em--\kern-.125em\white}}
\newtheorem{example}{Example}
\newtheorem{theorem}{Theorem}
\newtheorem{lemma}{Lemma}
\newtheorem{definition}{Definition}
\newcommand{\interp}[1]{\llbracket #1 \rrbracket} 
\newtheorem{corollary}{Corollary}
\newtheorem{proposition}{Proposition}
\newcommand{\remph}[1]{\textcolor{blue}{#1}}
\title[Operational Semantics of Resolution and Productivity in Horn Clause Logic]
      {Operational Semantics of Resolution  and Productivity in Horn Clause Logic}
\author[P. Fu and E. Komendantskaya]
    {Peng Fu$^1$ and Ekaterina Komendantskaya$^1$\\
     $^1$Heriot-Watt University, Edinburgh, UK}
\begin{document}
\label{firstpage}

\makecorrespond

\maketitle

\begin{abstract}
  This paper presents a study of operational and type-theoretic properties of different resolution strategies in Horn clause logic. 
  We distinguish four different kinds of resolution: resolution by unification (SLD-resolution), resolution by term-matching, the recently introduced structural resolution, and partial (or lazy)
  resolution. We express them all uniformly as abstract reduction systems, which allows us to undertake a thorough comparative analysis of their properties.
  To match this small-step semantics,
  we propose to take Howard's System $\mathbf{H}$ as a type-theoretic semantic counterpart.
  Using System $\mathbf{H}$, we interpret Horn formulas  as types, and
  a derivation for a given formula as the proof term inhabiting the type given by the formula.
  We prove soundness of these abstract reduction systems relative to System $\mathbf{H}$, and we show completeness of SLD-resolution and structural resolution relative to System $\mathbf{H}$.
  We identify conditions under which  structural resolution is operationally equivalent to
   SLD-resolution.
We show correspondence between  term-matching resolution for Horn clause programs without existential variables and term rewriting.

\textbf{Keywords:} Logic Programming, Typed Lambda Calculus, Reduction Systems, Structural Resolution, Termination, Productivity.

\end{abstract}
\section{Introduction}
\label{intro}

Horn clause logic is a fragment of first-order logic that gives theoretical foundation to logic programming. A set of Horn clauses is called a logic program. 
SLD-resolution is the most common algorithm in logic programming for automatically inferring whether, given a logic program
$\Phi$ and a first-order formula $A$,  $\Phi \vdash \sigma A$ holds for some substitution $\sigma$. 
SLD-resolution is  semi-decidable as not all derivations by 
SLD-resolution  terminate.
Terminating SLD-resolution is quite well understood (see for example the textbook by Lloyd~\cite{Llo88}), and SLD-resolution has been successfully incorporated into a number of logic 
programming language implementations. 
However, nonterminating SLD-resolution is more challenging to handle.

\begin{example}\label{ex:zstream} 
Consider the following logic program defining the infinite stream of zeros. It consists of one Horn clause:

\begin{center}
  $\kappa_1:\forall y.  Stream(y)  \Rightarrow Stream(Cons(0, y))$
\end{center}

\noindent For query $Stream(x)$, it gives rise to the following SLD-derivation: 

  \begin{center}
    $\Phi \vdash \{{Stream}(x)\} \leadsto_{\kappa_1, [{Cons}(0, y_1)/x]} \{{Stream}(y_1)\} \leadsto_{\kappa_1, [Cons(0, y_2)/y_1, {Cons}(0, Cons(0, y_2))/x]} 
    \{{Stream}(y_2)\} \leadsto_{\kappa_1,  [{Cons}(0, y_3)/y_2, Cons(0, {Cons}(0, y_3))/y_1, {Cons}(0, Cons(0, {Cons}(0, y_3)))/x]} 
    \{{Stream}(y_3)\} \leadsto ... $
  \end{center}

  \noindent At each derivation step, we record the clause that is used to make the resolution step and the computed substitution.
  For this derivation, it is impossible to find a finite substitution $\sigma$ such that $\Phi \vdash \sigma(Stream(x))$.
  Nevertheless, the query $Stream(x)$ is computationally 
meaningful, since it computes the infinite stream of zeros for $x$.

\end{example}

The importance of developing approaches for computing with infinite data structures in logic programming
has been argued by van Emden, Lloyd, \emph{et al} \cite{Llo88} in the 80s and more recently the topic has been revived by Gupta, Simon \emph{et al.}~\cite{Gupta07,Simon07}.
In the classical approach~\cite{Llo88}, a semantic view was taken: if a nonterminating SLD-resolution derivation for $\Phi$ and $A$
accumulates computed substitutions
$\sigma_0, \sigma_1 , \ldots $ in such a way that $ (\ldots (\sigma_1(\sigma_0(A))))$ is an infinite ground formula, then $(\ldots (\sigma_1(\sigma_0(A))))$ is said to be \textit{computable at infinity}.
Computation at infinity is proven to be sound with respect to the greatest Herbrand model, i.e., given a logic program $\Phi$,  if a formula $A$ is computable at infinity with respect to $\Phi$, then $A$
 is also in the greatest Herbrand model of $\Phi$. Importantly for us, the notion of \textit{infinite formula computed at infinity} captures the modern-day notion of producing an infinite data structure. 
We will use the terminology \emph{global productivity} to describe computation at infinity. For example, the derivation shown in Example~\ref{ex:zstream} is globally productive, as it computes the infinite stream of zeros at infinity.
However, this approach did not result in implementation and in general proving global productivity
 is nontrivial.

An alternative approach has been proposed by Gupta, Simon \emph{et al.}~\cite{Gupta07,Simon07}: subgoals produced in the course of an infinite SLD-derivation can be memorized, and if any two subgoals are unifiable,
then the derivation is said to be closed coinductively.
This approach was implemented as an extension to Prolog and called CoLP (Coinductive Logic Programming). Its applications are limited, as 
CoLP does not terminate for the SLD derivation that produces an infinite formula with irrational tree structure, as in this case the derivation does not feature any unifiable subgoals.
CoLP's approach was not intended to capture the notion of  global productivity. 
For example, the query $P(x)$ for the clause $P(y) \Rightarrow P(y)$ will exhibit a cycle and will be coinductively proven by CoLP, but it will not compute an infinite formula at infinity. In other words, the derivation for $P(x)$ is not globally productive despite being coinductively provable by CoLP.

In this paper, we introduce yet another approach to the potentially infinite derivations by the SLD-resolution. 
When SLD-resolution produces a finite or infinite ground answer for a variable in the query, we say the query is \textit{locally productive} at that variable (see Definition \ref{local}).
This gives us an alternative notion of productivity for logic programming.
In order to formally define this notion of local productivity, we introduce a lazy version of
resolution (called \emph{partial resolution}). Firstly, we label those variables in the queries for which we want to compute substitutions.
Partial resolution then takes the labels into account when performing the computation, by giving priority to subgoals with labelled variables.
The resolution will stop when all the labels in the queries are eliminated, or in other words, when all required substitutions have been computed.

Finally,  a third notion of productivity for logic programming, an \emph{observational productivity}, has been recently introduced by Johann, Komendantskaya \emph{et al}~\cite{JKK15,KJS16}. 
It depends on a new kind of resolution (\emph{structural resolution}). 
Structural resolution depends crucially on term-matching resolution, obtained by restricting unification used in the SLD-resolution to term matching.
 Term-matching resolution is used in e.g. type class resolution~\cite{Jones97} in functional programming. 
It has different operational properties compared to the SLD-resolution.
For example, taking the program in Example \ref{ex:zstream}, the query $Stream(x)$ can not be reduced
by term-matching
resolution, as it is not possible to match $Stream(Cons(0, y))$ with $Stream(x)$.

Structural resolution combines terminating term matching steps with unification steps.
For example, consider the following derivation (where $\to$ denotes a term-matching step, and $\hookrightarrow$  applies the substitution obtained by unification to the current query): 
  \begin{center}
    $\Phi \vdash \{{Stream}(x)\} \hookrightarrow_{\kappa_1, [{Cons}(0, y_1)/x]} \{{Stream}({Cons}(0,y_1))\} \to_{\kappa_1}
    \{{Stream}(y_1)\}  \hookrightarrow_{\kappa_1, [{Cons}(0, y_2)/y_1, {Cons}(0, {Cons}(0, y_2))/x ]} \{{Stream}({Cons}(0, y_2))\} \to_{\kappa_1} 
    \{{Stream}(y_2)\} \hookrightarrow_{\kappa_1,  [{Cons}(0, y_3)/y_2, {Cons}(0,{Cons}(0, y_3) )/y_1, {Cons}(0, {Cons}(0, {Cons}(0, y_3)))/x]} \{{Stream}({Cons}(0, y_3))\} \to_{\kappa_1} \{{Stream}(y_3)\} \hookrightarrow ... $
  \end{center}

  \noindent Note that the overall derivation is nonterminating, but all term-matching derivations are finite in the above resolution trace.
  This separation of term-matching and unification allows the formulation of \emph{observational productivity}: given a program $\Phi$ and a query $A$, if a derivation for
  $A$ is infinite, and it features only terminating term-matching resolution steps, then this derivation is called \emph{observationally productive}.
  As discussed by Komendantskaya, Johann \emph{et al.}~\cite{KJ15,KJS16}  observational productivity implies global productivity.
 

  

  To illustrate these three notions of productivity, we consider three logic programs in
the following table. 


\begin{center}
  {
    \begin{tabular}{|p{2cm}|  p{2.5cm}  p{4.5cm} p{4.5cm}|}
      Name & $\Phi_1$ & $\Phi_2$ & $\Phi_3$   \\ & & & \\
      Program & $P(x) \Rightarrow P(x)$  & $P(x) \Rightarrow P(K(x))$ & $P(x, y) \Rightarrow P(x, G(y))$ 
      \\ & & &\\
      Query &  $P(x)$ & $P(x)$ & $P(x,y)$ \\ & & & \\
      Productivity & None & Global,  Observational, Local at $x$ &  Observational, Local at $y$  \\
    \end{tabular}
  }
\end{center}

\begin{itemize}
\item Program $\Phi_1$ is not productive for the query $P(x)$ by any of these three notions of productivity: it does 
not compute an infinite ground formula, it is not terminating by term-matching resolution, and it does not compute a ground answer for $x$. 


\item Program $\Phi_2$ is globally productive for the query $P(x)$ as it computes an infinite formula $P(K(K(...)))$.
We can see that $\Phi_2$ is observationally productive because it is terminating by term-matching resolution.
Also, $\Phi_2$ is locally productive at $x$ for the query $P(x)$ since SLD-resolution computes a ground infinite answer $K(K(...))$ for the variable $x$.


\item  Program $\Phi_3$ is not globally productive for the query $P(x,y)$ as SLD-resolution
  computes an infinite formula $P(x, G(G(...)))$ that is not ground.
  It is observationally productive, since the second argument for $P$
is  decreasing from right to left by the subterm relation, i.e. $y <_{\mathrm{subterm}} G(y)$ and that ensures termination of term-matching resolution. Note that $\Phi_3$
is not locally productive at $x$ for the query $P(x, y)$, but it is locally productive at $y$ since $G(G(...))$ is an infinite ground answer for $y$. 

\end{itemize}


In this paper, we establish a framework for a
comparative analysis of different kinds of resolution and different notions of productivity.
Firstly, we use a uniform style of small-step semantics for all these kinds of resolution and
formulate them as abstract reduction systems. We call the resulting abstract
reduction systems \emph{LP-Unif}, \emph{partial LP-Unif}, \emph{LP-TM}, and \emph{LP-Struct},  respectively.
Using this framework, we ask and answer several research questions about operational properties and relations of these reduction systems.
Are LP-Unif and LP-Struct equivalent for terminating derivations, and under what conditions?
Are LP-Unif and LP-Struct equivalent for observationally productive programs?
Since the termination of LP-TM is essential for the observational productivity, are there any suitable 
program transformation methods to ensure LP-TM termination?


We give a type-theoretic semantics to all these reduction systems.
Notably, we take System $\mathbf{H}$ (based on Howard's work \cite{ho80}) as a calculus
to capture the type-theoretic meaning of logic programming.
We show that LP-Unif, partial LP-Unif, LP-TM and LP-Struct are sound
relative to System $\mathbf{H}$. Moreover, LP-Unif is complete relative to System $\mathbf{H}$, and under a meaning preserving transformation,
LP-Struct is also complete relative to System $\mathbf{H}$. 

We discover that, given a program $\Phi$ and a formula $A$,
LP-Struct is operationally equivalent to LP-Unif 
under
two conditions: when  all LP-Struct derivations for $\Phi$ are observaionally productive and when all clauses in $\Phi$ are non-overlapping.
Thus the termination of LP-TM plays a crucial role not only in ensuring observational productivity, but also in ensuring the operational equivalence of LP-Struct and LP-Unif,
which in its turn is crucial for our proofs of soundness and completeness of LP-Struct with respect to $\mathbf{H}$.

We give a formal analysis of properties of LP-TM resolution. We show how LP-TM relates to
term rewriting systems by introducing a transformation method that translates any logic
program without existential variables into a term rewriting system
(we call this process \emph{functionalisation}). After functionalisation, 
standard term rewriting methods for detecting termination can be applied. We also give an
alternative transformational method that renders all logic programs LP-TM terminating and
non-overlapping. The method is related to Kleene's realizability method \cite{KleeneSC:1952}, and we therefore call it \emph{realizability transformation}.


The technical content of this paper is organized as follows.

\begin{itemize}
   
\item In Section \ref{pre}, we prove soundness and completeness of LP-Unif with
  respect to the type system $\mathbf{H}$. This means $\mathbf{H}$ can be used
  to model logic programming. 

  \item In Section \ref{partial}, we formally define partial resolution as an abstract reduction system and call it
    \textit{partial LP-Unif}. Based on this formalism we define local productivity. 
  Partial LP-Unif provides a possibility of shifting the focus
from deciding entailment for a given query to computing substitution answers.
  
\item In Section \ref{rt:s}, we formally define LP-Struct  
  and identify two conditions that ensure that
  LP-Struct is operationally equivalent to LP-Unif. 

\item In Section \ref{s:func}, we define  \emph{functionalisation} 
  and show the exact relation of LP-TM to term rewriting systems. We use existing termination detection techniques from term rewriting to
  detect termination of LP-TM. 
  
\item In Section \ref{s:real}, we define \textit{realizability transformation} 
  and show that this transformation
      preserves the operational meaning of a logic program. We use it to show the equivalence of LP-Struct
    and LP-Unif for the transformed program.  As a corollary, we obtain the soundness and completeness of LP-Struct relative to System $\mathbf{H}$ for the transformed program.

 
\end{itemize}

 Finally, in Sections~\ref{rw} and~\ref{concl} we survey related work and conclude the paper.



\section{Horn Formulas as Types}
\label{pre}

In this section, we use Howard's type system $\mathbf{H}$ to model logic programming. 
We use an abstract reduction system (called LP-Unif) to model the small-step semantics of the SLD-resolution. The purpose of this section is
to set up a type-theoretic framework for the rest of the paper, where Horn formulas are viewed as types in a type system and resolution corresponds to the proof construction. 
We show the correspondence between the small-step semantics of resolution and System $\mathbf{H}$. This result can be viewed  as an alternative
to the classical-style soundness and completeness
 results for the SLD-resolution relative to Herbrand models. Using System \textbf{H} as an alternative
semantics for logic programming may be beneficial in two ways: (1) 
the usual notion of an unsuccessful SLD-derivation can be understood as proving an implicative formula in which the unresolved subgoals comprise the antecedent 
(see Lemma \ref{sound}). (2) It allows further extensions such as adding fixpoint typing rule by Fu \emph{et al}~\cite{flops16},  
which provides proofs for some nonterminating computations.

\begin{definition}[Syntax]

\

  Term $t \ ::= \ x \ | \ K(t_1,..., t_n)$

  Atomic Formula $A, B, C, D\ ::= \ P(t_1,...,t_n)$

  Formula $F \ ::=  A\ | \ F \Rightarrow F' \ | \ \forall x. F$

  Horn Formula/Horn Clause $H \ ::=  \forall \underline{x} . A_1,...,A_n \Rightarrow B$

  Proof Evidence $p, e \ ::= \ \kappa \ | \ a \ | \ e\  e' \ | \ \lambda a . e$

  Axioms/Logic Programs $\Phi \ ::= \cdot \ | \ \kappa :  H, \Phi \ | \ a : F, \Phi$
\end{definition}
Proof evidence is given by lambda terms. We use capitalised words to denote function symbols.  Constant evidence is denoted by $\kappa$.    
We write $A_1,..., A_n \Rightarrow B$ as a short hand for $A_1 \Rightarrow ... \Rightarrow A_n \Rightarrow B$. We write $\forall \underline{x} . F$ for quantifying over all free term variables in $F$, and $[\forall x].F$ denotes $F$ or $\forall x . F$. We use $\underline{A}$ to denote $A_1,..., A_n$, when the number $n$ is unimportant. If $n$ is zero for $\underline{A} \Rightarrow B$, then we write $\Rightarrow B$.  
Horn clause formulas have the form $\forall \underline{x} . \underline{A} \Rightarrow B$, and queries are given by  atomic formulas. We use $\mathrm{FV}(t)$ to denote the set of all free term variables in $t$. 


The following is a type system based on Howard's work \cite{ho80}, intended to provide a type theoretic interpretation for LP.  
\begin{definition}[Howard's System $\mathbf{H}$ for logic programming]
  \label{proofsystem}

\

{
\begin{tabular}{lllll}
\\
\infer[\textsc{Axiom}]{\Phi \vdash \kappa  : H}{(\kappa : H) \in \Phi}    
&

&

\infer[\textsc{Var}]{\Phi \vdash  a : F}{(a : F) \in \Phi}

&

&
\infer[\textsc{App}]{\Phi \vdash e_1 \ e_2 : F_2}{\Phi \vdash e_1 : F_1 \Rightarrow F_2 & \Phi \vdash e_2 : F_1}
\\
\\
\infer[\textsc{Inst}]{\Phi \vdash e : [t/x]F}{\Phi \vdash e : \forall x . F}

&

&
\infer[\textsc{Gen}]{\Phi \vdash  e: \forall x . F}{\Phi \vdash e : F}

& &

\infer[\textsc{Abs}]{\Phi \vdash \lambda a. e : F_1 \Rightarrow F_2}{\Phi, a : F_1 \vdash e : F_2}
  \end{tabular}
}  
\end{definition}

Note that the type for the constant in the rule \textsc{Axiom} is required to be Horn formula. 
 It has been observed that the \textsc{Cut} rule and proper axioms in intuitionistic sequent calculus can emulate logic programming~\cite{Girard:1989}(\S 13.4). 

The following rule is a version of \textsc{Cut} rule, working only with Horn formulas.  

\begin{center}
  \infer[\textsc{Cut}]{\Phi \vdash \lambda \underline{a}. \lambda
    \underline{b}. (e_2 \ \underline{b})\ (e_1\ \underline{a}) :
    \underline{A}, \underline{B} \Rightarrow C}{\Phi \vdash e_1 :
    \underline{A} \Rightarrow D & \Phi \vdash e_2 : \underline{B}, D
    \Rightarrow C}
\end{center}
We can use rules \textsc{Abs} and \textsc{App} to emulate \textsc{Cut} rule, thus the \textsc{Cut} rule is admissible in Howard's system $\mathbf{H}$. We will use $\mathbf{C}$
to denote the deduction system that consists of rules \textsc{Axiom}, \textsc{Cut}, \textsc{Inst}, and \textsc{Gen}. 
The subsystem $\mathbf{C}$ provides a natural framework to work with Horn formulas, but $\mathbf{H}$ is more expressive, since it allows full intuitionistic formulas, e.g. $\mathbf{H}$ would allow a formula of the form $(F_1 \Rightarrow F_2) \Rightarrow F_3$. 

\begin{definition}
  Beta-reduction on proof evidence is defined as the congruence closure of the following relation:
 $(\lambda a . e)\ e' \to_\beta [e'/a]e$. We say a proof evidence $e$ is strongly normalizing if $e$ admits 
no infinite $\to_\beta$-reductions. 
\end{definition}

The following three theorems are standard for a type system such as \textbf{H}. For proofs
we refer the reader to Barendregt's excellent book \cite{barendregt1993}. 

\begin{theorem}[Strong Normalization]
\label{real:sn}
 If $\Phi \vdash e : F$ in $\mathbf{H}$, then $e$ is strongly realisable with respect to beta-reduction on proof evidence.
\end{theorem}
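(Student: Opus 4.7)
The plan is to adapt Tait's reducibility (computability) method, which is the standard technique for strong normalization of simply typed lambda calculi and mild extensions. System $\mathbf{H}$ is essentially simply typed $\lambda$-calculus decorated with first-order quantifiers, and crucially the rules \textsc{Inst} and \textsc{Gen} do not alter the proof evidence. This last observation means the quantifier cases of the proof can be handled trivially.

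First I would define a family of reducibility predicates $\mathrm{Red}_F$ by induction on the formula $F$: for atomic $A$, $e \in \mathrm{Red}_A$ iff $e$ is strongly normalizing; for $F_1 \Rightarrow F_2$, $e \in \mathrm{Red}_{F_1 \Rightarrow F_2}$ iff $e\,e' \in \mathrm{Red}_{F_2}$ for every $e' \in \mathrm{Red}_{F_1}$; and for $\forall x.\,F$, simply set $\mathrm{Red}_{\forall x.\,F} = \mathrm{Red}_F$, which is consistent precisely because \textsc{Inst}/\textsc{Gen} leave proof terms untouched. Next I would verify the three Girard conditions simultaneously by induction on $F$: (CR1) $e \in \mathrm{Red}_F$ implies $e$ is strongly normalizing; (CR2) $\mathrm{Red}_F$ is closed under $\to_\beta$; (CR3) if $e$ is neutral (not a $\lambda$-abstraction) and every one-step reduct of $e$ lies in $\mathrm{Red}_F$, then $e \in \mathrm{Red}_F$. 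The only subtle case is the $\Rightarrow$ case, handled in the textbook manner.

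The main lemma is the abstraction lemma and the fundamental (substitution) lemma. The abstraction lemma states that $\lambda a.\,e_1 \in \mathrm{Red}_{F_1 \Rightarrow F_2}$ provided $[e_2/a]e_1 \in \mathrm{Red}_{F_2}$ for every $e_2 \in \mathrm{Red}_{F_1}$; its proof uses an inner induction on the sum of the maximal $\beta$-reduction lengths of $e_1$ and $e_2$, together with CR3. The fundamental lemma then asserts: for every derivation $\Phi \vdash e : F$ and every substitution $\gamma$ assigning to each $a : F'$ a term $\gamma(a) \in \mathrm{Red}_{F'}$, we have $\gamma(e) \in \mathrm{Red}_F$. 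This is proved by induction on the typing derivation: \textsc{Var} is immediate, \textsc{App} uses the definition of $\mathrm{Red}_{F_1 \Rightarrow F_2}$, \textsc{Abs} invokes the abstraction lemma, \textsc{Inst} and \textsc{Gen} are trivial by the definition of $\mathrm{Red}_{\forall x.\,F}$, and \textsc{Axiom} follows because a constant $\kappa$ is neutral with no one-step reducts, so CR3 places it in $\mathrm{Red}_H$ vacuously.

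Finally, instantiating $\gamma$ as the identity substitution (each proof variable $a$ is neutral with no reducts, hence in $\mathrm{Red}_{F'}$ by CR3) yields $e \in \mathrm{Red}_F$, whence $e$ is strongly normalizing by CR1. The main obstacle is the abstraction lemma, where one must carefully orchestrate the nested induction on reduction lengths to cover the $\beta$-redex $(\lambda a.\,e_1)\,e_2$ while still reducing inside $e_1$ and $e_2$; beyond that, the quantifier cases are trivial because proof evidence is invariant under \textsc{Inst}/\textsc{Gen}, so no extension of the reducibility definition to dependent or polymorphic types is required.
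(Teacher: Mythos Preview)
Your Tait/Girard reducibility argument is correct and is exactly the standard proof one finds in the literature. The paper does not give its own proof of this theorem: it simply remarks that strong normalization (together with inversion and subject reduction) is ``standard for a type system such as $\mathbf{H}$'' and refers the reader to Barendregt's book. So there is nothing to compare against beyond the textbook development you have reproduced.

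One small sharpening is worth making explicit. You justify setting $\mathrm{Red}_{\forall x.F} = \mathrm{Red}_F$ by saying \textsc{Inst}/\textsc{Gen} leave proof terms untouched, but that fact is what makes the \emph{fundamental lemma} go through for those rules; what makes the \emph{definition} coherent is the separate observation that $\mathrm{Red}_F$ depends only on the $\Rightarrow/\forall$-skeleton of $F$, so that $\mathrm{Red}_{[t/x]F} = \mathrm{Red}_F$ for every first-order term $t$. This is an easy induction on $F$ (the base case holds because $\mathrm{Red}_A = \mathrm{SN}$ for every atomic $A$ regardless of the terms inside it), and once stated it makes both the \textsc{Inst} and \textsc{Gen} cases of the fundamental lemma immediate. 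With that clarification your outline is complete.
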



\begin{theorem}[Inversion]
  
  \begin{itemize}
  \item If $\Phi \vdash a : F$, then $(a : F') \in \Phi$ and $\sigma F' \equiv F$ for some substitution $\sigma$.
      \item If $\Phi \vdash \kappa : H$, then $(\kappa : \forall \underline{x} . A_1,..., A_n \Rightarrow B) \in \Phi$ and $\sigma (A_1,..., A_n \Rightarrow B) \equiv H$ for some substitution $\sigma$.
            \item If $\Phi \vdash \lambda a.e : F$, then $\Phi, a : F_1 \vdash e: F_2$ and $\sigma (F_1 \Rightarrow F_2) \equiv F$ for some substitution $\sigma$.
                    \item If $\Phi \vdash e_1\ e_2 : F$, then $\Phi \vdash e_1 : F_1 \Rightarrow F_2$, $\Phi \vdash e_2 : F_1$ and $\sigma F_2 \equiv F$ for some substitution $\sigma$.
        
  \end{itemize}
\end{theorem}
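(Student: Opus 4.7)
The plan is to prove the four parts simultaneously by induction on the derivation of $\Phi \vdash e : F$, splitting cases on the final rule applied. For each part we have already fixed the syntactic shape of $e$ (a variable $a$, a constant $\kappa$, an abstraction $\lambda a.e'$, or an application $e_1\ e_2$), which constrains which rules could have produced the derivation.

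When the last rule is the one whose conclusion naturally matches the shape of $e$, namely \textsc{Var} for $a$, \textsc{Axiom} for $\kappa$, \textsc{Abs} for $\lambda a.e'$, and \textsc{App} for $e_1\ e_2$, the required witnesses are read directly off the premises of that rule and the claim holds with $\sigma$ the identity substitution. These cases are immediate.

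The interesting cases are the two rules that do not alter the proof term, \textsc{Inst} and \textsc{Gen}. For \textsc{Inst}, we have $F = [t/x]F''$ derived from $\Phi \vdash e : \forall x. F''$; the induction hypothesis applied to the premise supplies a substitution $\sigma'$ witnessing the inversion claim for $F''$, and we then take $\sigma = [t/x] \circ \sigma'$ (after an $\alpha$-renaming of bound variables if necessary to avoid capture). For \textsc{Gen}, we have $F = \forall x. F''$ derived from $\Phi \vdash e : F''$ with $x$ fresh for $\Phi$; the induction hypothesis applied to the premise yields witnesses that do not depend on $x$, and we can quantify them to match $F$ without disturbing the substitution $\sigma$. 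In each subcase, composition and capture-avoidance are routine.

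The main obstacle is precisely this bureaucracy surrounding \textsc{Inst} and \textsc{Gen}: the quantifier prefix introduced by generalisation must be compatible with the right-hand side $\sigma(F_1 \Rightarrow F_2)$, $\sigma F'$, etc., so the relation $\equiv$ has to be read up to $\alpha$-equivalence and vacuous quantification, and one must be careful that the term $t$ used in \textsc{Inst} does not capture variables free in the witnesses produced by the induction hypothesis. Once these points are discharged, each of the four cases collapses to a mechanical calculation; a fully spelled-out version of this argument for essentially the same system appears in Chapter 17 of Barendregt's book, which is the reference we rely on.
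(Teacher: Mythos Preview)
The paper does not actually prove this theorem: it declares the result standard and refers the reader to Barendregt's book. Your induction on derivations, with the base cases read off directly from \textsc{Var}, \textsc{Axiom}, \textsc{Abs}, \textsc{App} and the recursive work concentrated in \textsc{Inst} and \textsc{Gen}, is precisely the standard argument one finds in that reference, so there is nothing substantive to compare.

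One remark on your handling of \textsc{Gen}. You say the added quantifier can be absorbed because the witnesses ``do not depend on $x$'' and $\equiv$ is to be read ``up to vacuous quantification''. But the bound variable need not be vacuous: from $\Phi, a{:}F_1 \vdash e:F_2$ and \textsc{Abs} one gets $\Phi \vdash \lambda a.e : F_1 \Rightarrow F_2$, and a subsequent \textsc{Gen} over some $x \in \mathrm{FV}(F_1 \Rightarrow F_2)$ fresh for $\Phi$ is perfectly legal and genuinely binding. In that situation no substitution $\sigma$ can make $\sigma(F_1 \Rightarrow F_2)$ literally equal to $\forall x.(F_1 \Rightarrow F_2)$, since the head symbol of the former is $\Rightarrow$. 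The statement therefore really requires $\equiv$ to identify a formula with its universal closure (equivalently, the conclusion should read ``$F$ is a generic instance of $F_1 \Rightarrow F_2$'' in the Damas--Milner sense). That is how Barendregt formulates it, and it is also how the paper actually \emph{uses} inversion later (only on quantifier-free target types, where the issue does not arise). Your proof goes through once this is made precise; just note that ``vacuous quantification'' undersells what is needed in the \textsc{Gen} case.
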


\begin{theorem}[Type Preservation]
System $\mathbf{H}$ is type preserving, i.e. if $\Phi \vdash e : F$ in \textbf{H} and $e \to_\beta e'$, then $\Phi \vdash e' : F$. 
\end{theorem}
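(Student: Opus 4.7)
The plan is to prove subject reduction by induction on the derivation of $e \to_\beta e'$, relying on a substitution lemma for proof evidence variables as the key technical tool. The preparatory step is to establish: if $\Phi, a : F_1 \vdash e_1 : F_2$ and $\Phi \vdash e_2 : F_1$, then $\Phi \vdash [e_2/a]e_1 : F_2$. I would prove this by induction on the derivation of the first judgement. The critical case is \textsc{Var}, where we split on whether the variable is $a$ (in which case the conclusion is exactly the hypothesis) or some other variable (where nothing changes). The \textsc{Axiom}, \textsc{App}, \textsc{Abs}, \textsc{Inst}, and \textsc{Gen} cases all follow by invoking the induction hypothesis and reapplying the same rule, with the observation that the eigenvariable side-condition for \textsc{Gen} is preserved because $a$ carries no term variables of the context. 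I would also need a parallel term-variable substitution lemma: if $\Phi \vdash e : F$ then $\sigma\Phi \vdash e : \sigma F$ for any substitution $\sigma$. This too is a direct induction over the typing rules.

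The main theorem then goes by induction on the congruence structure of $\to_\beta$. For the base case $(\lambda a. e_1) e_2 \to_\beta [e_2/a]e_1$, assume $\Phi \vdash (\lambda a. e_1) e_2 : F$. Applying the Inversion theorem at the application gives $\Phi \vdash \lambda a. e_1 : G_1 \Rightarrow G_2$, $\Phi \vdash e_2 : G_1$, and $\sigma G_2 \equiv F$. A second Inversion at the abstraction yields $\Phi, a : H_1 \vdash e_1 : H_2$ with $\sigma'(H_1 \Rightarrow H_2) \equiv G_1 \Rightarrow G_2$, so $\sigma' H_1 \equiv G_1$ and $\sigma' H_2 \equiv G_2$. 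Using the term-variable substitution lemma to align $\sigma' H_1$ with $G_1$ and then invoking the evidence substitution lemma yields $\Phi \vdash [e_2/a]e_1 : \sigma'H_2$, which is convertible to $F$ via $\sigma$. The congruence cases, namely reduction inside the body of an abstraction or on either side of an application, are immediate from the induction hypothesis and a final application of \textsc{Abs} or \textsc{App}.

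The main obstacle is that the Inversion theorem only determines the types involved up to unifying substitutions $\sigma$ and $\sigma'$, because \textsc{Inst} and \textsc{Gen} allow the displayed type to drift during a derivation. Consequently, the parameter type $H_1$ extracted from the abstraction and the argument type $G_1$ of $e_2$ agree only modulo $\sigma'$, and neither necessarily matches $F$ on the nose. The term-variable substitution lemma is introduced precisely to bridge this gap, bringing the two typings into a common instance on which the evidence substitution lemma can be applied. Once this alignment is in place the remainder is routine bookkeeping, and the overall argument is entirely analogous to the standard subject-reduction proof for the simply typed $\lambda$-calculus extended with prenex polymorphism, as treated in Barendregt.
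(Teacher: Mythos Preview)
The paper does not give its own proof of this theorem: immediately before stating it (together with Strong Normalisation and Inversion) the authors write that these three results ``are standard for a type system such as $\mathbf{H}$'' and refer the reader to Barendregt's book. So there is no in-paper argument to compare your proposal against. Your sketch is the standard subject-reduction argument for a Curry-style system with \textsc{Gen}/\textsc{Inst}, and is precisely the kind of proof one finds in the reference the paper cites; in that sense your approach and the paper's (deferred) approach coincide.

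One small remark on the technical details you flag: your handling of the substitutions produced by Inversion is where the real work lies, and you have identified it correctly. When you apply the term-variable substitution lemma to pass from $\Phi, a:H_1 \vdash e_1:H_2$ to an instance with $\sigma'H_1$, you need $\sigma'$ not to disturb $\Phi$ itself; this is ensured by the (implicit) eigenvariable condition on \textsc{Gen}, which guarantees that the variables generalised along the way are fresh for the ambient context. The paper does not state that side-condition explicitly, but the Inversion theorem it records only makes sense if it is in force, and your proof uses it in the right place.
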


Note that system $\mathbf{C}$  as a type system is not type preserving. For example, consider $\Phi = (\kappa_1 : A \Rightarrow B, \kappa_2 : B \Rightarrow C, \kappa_3 : C \Rightarrow D)$. In \textbf{C}, we have 
$\Phi \vdash \lambda a . \kappa_3 \ ((\lambda b . \kappa_2\ (\kappa_1\ b))\ a) : A \Rightarrow D$. But 
$\lambda a . \kappa_3 \ ((\lambda b . \kappa_2\ (\kappa_1\ b))\ a) \to_\beta \lambda a . \kappa_3 \ (\kappa_2\ (\kappa_1\ a))$ and $\Phi \not \vdash \lambda a . \kappa_3 \ (\kappa_2\ (\kappa_1\ a)) : A \Rightarrow D$ 
in \textbf{C}. 
 Thus we often 
work with $\mathbf{C}$ through its embedding in $\mathbf{H}$, which is type preserving and strongly 
normalising.   

\begin{definition}[Unification]
  We say that $t$ is unifiable with $t'$  with substitution $\gamma$ (denoted $t \sim_\gamma t'$),  if $\{t = t'\} 
\rightarrowtail^* \gamma$ according to the following rules:
 
\[
\begin{array}{lll}

\{K(t_1,...,t_n) = K(s_1,...,s_n)\} \cup E & \rightarrowtail & \{t_1 = s_1 ,...,t_n = s_n\} \cup E

\\
\\

\{K(t_1,...,t_n) = G(s_1,...,s_m)\} \cup E &\rightarrowtail & \bot

\\
\\

\{t = t \} \cup E & \rightarrowtail &  E

\\
\\

\{K(t_1,...,t_n) = x \} \cup E & \rightarrowtail&  \{x = K(t_1,...,t_n) \} \cup E

\\
\\

\{x = K(t_1,...,t_n) \} \cup E & \rightarrowtail&  \bot \ \text{if}~x \in \mathrm{FV}(K(t_1,...,t_n)) 

\\
\\

\{ x = t \} \cup E & \rightarrowtail& \{ x = t \} \cup [t/x]E \ \text{if} \ x \notin \mathrm{FV}(t)

  \end{array}
\]
\end{definition}

Unification can be routinely extended to atomic formulas. The symbol $\bot$ denotes failure of unification. The following is a formulation of the SLD-resolution as a reduction system, as given in Nilsson and Maluszynski \cite{nilsson1990logic}.

\begin{definition}[LP-Unif reduction]
\label{red}
Given a set of axioms $\Phi$, we define a reduction relation on the multiset of atomic formulas: 

\noindent $\Phi \vdash \{A_1,..., A_i, ..., A_n\} \leadsto_{\kappa, \gamma \cdot \gamma'} \{\gamma A_1,..., \gamma B_1,..., \gamma B_m, ..., \gamma A_n\}$ for any substitution $\gamma'$, if there exists $\kappa : \forall \underline{x} . B_1,..., B_n \Rightarrow C \in \Phi$ such that $C \sim_{\gamma} A_i$.
\end{definition}
The second subscript in the reduction is intended as a state, it will be updated by composition along with reductions. Notation $\gamma \cdot \gamma'$ should be read as follows:
the old state $\gamma'$ is updated, producing a new state $\gamma \cdot \gamma'$. We assume fresh names in the form of new numeric indices for the quantified variables each time the above rule is applied. We write $\leadsto$ when we leave the associated state implicit. We use $\leadsto^*$ to denote the reflexive and transitive closure of $\leadsto$. Notation $\leadsto_{\gamma}^*$ is used when the final state along the reduction path is $\gamma$.

Given a program $\Phi$ and
a set of queries $\{B_1, \ldots, B_n\}$, SLD-resolution uses LP-Unif reduction to reduce $\{B_1, \ldots, B_n\}$: 

\begin{definition}[LP-Unif]
\
\noindent  Given a logic program $\Phi$, LP-Unif is given by the abstract reduction system $(\Phi, \leadsto)$. 
\end{definition}

\begin{example}\label{ex:conn}
Consider the following logic program  $\Phi$, consisting of Horn formulas labelled by $\kappa_1$, $\kappa_2$, $\kappa_3$, defining connectivity for a graph with three nodes:

  \begin{center}
  $\kappa_1 : \forall x. \forall y. \forall z. Connect(x, y), Connect(y, z) \Rightarrow {Connect}(x, z)$

  $\kappa_2 : \ \Rightarrow {Connect}({Node_1}, {Node_2})$
  
  $\kappa_3 : \ \Rightarrow {Connect}({Node_2}, {Node_3})$
    \end{center}

The usual SLD-resolution for the query ${Connect}(x, y)$ can be represented as the following 
 LP-Unif reduction:

  \begin{center}
    $\Phi \vdash \{{Connect}(x, y)\} \leadsto_{\kappa_1, [x/x_1, y/z_1]}
    \{{Connect}(x, y_1), {Connect}(y_1, y)\} \leadsto_{\kappa_2, [{Node_1}/x, {Node_2}/y_1, {Node_1}/x_1, y/z_1]}
    \{{Connect}({Node_2}, y)\} \leadsto_{\kappa_3, [{Node_3}/y, {Node_1}/x, {Node_2}/y_1,
      {Node_1}/x_1, {Node_3}/z_1]} \emptyset $
  \end{center}

 \noindent The first reduction $\leadsto_{\kappa_1, [x/x_1, y/z_1]}$  unifies the query ${Connect}(x, y)$ with the head of the rule $\kappa_1$ (which is ${Connect}(x_1, z_1)$
 after renaming) with the substitution $[x/x_1, y/z_1]$ ($x_1$ is replaced by $x$ and $z_1$ is replaced by $y$).
So the query is \emph{resolved} with $\kappa_1$, 
  producing the next queries: ${Connect}(x, y_1)$, ${Connect}(y_1, y)$. Note that
  the substitution in the subscript of $\leadsto$ is a state that will be updated alongside the derivation. In the final state we have an answer $[{Node_3}/y, {Node_1}/x]$ for the query ${Connect}(x, y)$.

\end{example}

\subsection{Soundness and Completeness of LP-Unif}
We have introduced the Howard's system \textbf{H} and LP-Unif. 
Now we will show the soundness of LP-Unif, i.e., we show that a reduction by LP-Unif corresponds to an 
intuitionistic proof. On the other hand, any first order ground evidence of type $A$ in \textbf{H} corresponds to a successful LP-Unif reduction (which is the essence of the completeness result). 

\begin{lemma}[Soundness Lemma]
\label{sound}
  If $\Phi \vdash \{A\} \leadsto^*_{\gamma} \{B_1,..., B_n\}$, then $\Phi \vdash e : B_1,..., B_n \Rightarrow \gamma A$ for some $e$ in $\mathbf{C}$. 
\end{lemma}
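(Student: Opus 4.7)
The plan is to proceed by induction on the length $k$ of the LP-Unif reduction $\Phi \vdash \{A\} \leadsto^*_{\gamma} \{B_1, \ldots, B_n\}$. For the base case $k = 0$, we have $n = 1$, $B_1 = A$, and $\gamma$ is the identity substitution, so the required evidence has type $A \Rightarrow A$, which is the trivial identity implication (derivable in $\mathbf{C}$, e.g.\ by a degenerate instance of \textsc{Cut} with empty antecedents on both sides).

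For the inductive step, I factor a reduction of length $k+1$ as
\[
\Phi \vdash \{A\} \leadsto^*_{\gamma'} \{A_1, \ldots, A_i, \ldots, A_m\} \leadsto_{\kappa,\, \gamma \cdot \gamma'} \{\gamma A_1, \ldots, \gamma C_1, \ldots, \gamma C_j, \ldots, \gamma A_m\},
\]
where (after the fresh renaming required by Definition \ref{red}) $\kappa : \forall \underline{x}.\, C_1, \ldots, C_j \Rightarrow D$ lies in $\Phi$ and $D \sim_{\gamma} A_i$. By the induction hypothesis there is some $e$ in $\mathbf{C}$ with $\Phi \vdash e : A_1, \ldots, A_m \Rightarrow \gamma' A$. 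Applying \textsc{Axiom} to $\kappa$ and then \textsc{Inst} with $\gamma$ yields
\[
\Phi \vdash \kappa : \gamma C_1, \ldots, \gamma C_j \Rightarrow \gamma A_i,
\]
using the unification identity $\gamma D = \gamma A_i$. Applying \textsc{Gen} over the free variables affected by $\gamma$, followed by \textsc{Inst} with $\gamma$, to the induction-hypothesis judgement produces
\[
\Phi \vdash e : \gamma A_1, \ldots, \gamma A_m \Rightarrow \gamma \gamma' A.
\]
Finally, I invoke \textsc{Cut} with cut-formula $\gamma A_i$ to combine these two judgements, obtaining evidence of type
\[
\gamma C_1, \ldots, \gamma C_j, \gamma A_1, \ldots, \gamma A_{i-1}, \gamma A_{i+1}, \ldots, \gamma A_m \Rightarrow \gamma \gamma' A,
\]
which, viewed as a multiset of antecedents, is precisely what the lemma demands.

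The main obstacle is substitution bookkeeping. To pass from $e : A_1, \ldots, A_m \Rightarrow \gamma' A$ to its $\gamma$-instance via \textsc{Gen}/\textsc{Inst}, one must be sure the variables we generalize over are not captured by terms in the range of $\gamma$, which is guaranteed by the fresh-names convention of Definition \ref{red} but needs to be spelled out carefully. A secondary technicality is that the stated \textsc{Cut} rule only cuts a single formula, so the antecedents of the induction-hypothesis judgement must be permuted so that $\gamma A_i$ is isolated before the cut; since LP-Unif operates on multisets this reorganisation is harmless but notationally bulky.
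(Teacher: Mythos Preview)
Your inductive step is essentially identical to the paper's: factor off the last $\leadsto$-step, apply \textsc{Axiom}+\textsc{Inst} to the clause $\kappa$, push the new unifier $\gamma$ through the IH judgement via \textsc{Gen}/\textsc{Inst}, and combine with \textsc{Cut}. The bookkeeping remarks you make (freshness, multiset reordering of antecedents) are exactly the conventions the paper relies on silently.

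The one genuine problem is your base case. Taking $k=0$ requires $\Phi \vdash e : A \Rightarrow A$ in $\mathbf{C}$, but $\mathbf{C}$ consists only of \textsc{Axiom}, \textsc{Cut}, \textsc{Inst}, \textsc{Gen}; it has neither \textsc{Var} nor \textsc{Abs}. A ``degenerate \textsc{Cut} with empty antecedents on both sides'' still needs two already-derived premises $e_1:\ \Rightarrow D$ and $e_2:D\Rightarrow C$ and only yields $\ \Rightarrow C$, never the identity $A\Rightarrow A$; in particular if $\Phi$ is empty, nothing whatsoever is derivable in $\mathbf{C}$. So the length-$0$ case cannot be discharged this way. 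The paper sidesteps the issue by taking the base case to be a single reduction step, where \textsc{Axiom} followed by \textsc{Inst} directly gives $\kappa : \gamma B_1',\ldots,\gamma B_n' \Rightarrow \gamma C$. You should do the same (or, equivalently, note that the length-$0$ case is never needed for the intended applications, which always start from at least one step).
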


\begin{proof}
  By induction on the length of the reduction.
  
\begin{itemize}
\item Base Case. Suppose the length is one, namely, $\Phi \vdash \{A\} \leadsto_{\kappa, \gamma} \{B_1,..., B_n\}$. It implies $(\kappa : \forall \underline{x} . B_1',..., B_n' \Rightarrow C) \in \Phi$, $\gamma B_i' \equiv B_i$ and $C \sim_\gamma A$.  So we have $\Phi \vdash \kappa : \gamma B_1',..., \gamma B_n' \Rightarrow \gamma C$ by the rules \textsc{Axiom} and \textsc{Inst}. 

\item Step Case. Suppose
 $\Phi \vdash \{A\} \leadsto_{\gamma_1}^* \{A_1, ..., A_i,..., A_n\} \leadsto_{\kappa, \gamma_2\cdot \gamma_1} \{\gamma_2 A_1,..., \gamma_2 B_1,..., \gamma_2 B_m,..., \gamma_2 A_n\}$, where $\kappa : \forall \underline{x} . B_1,..., B_m \Rightarrow C$ and $C \sim_{\gamma_2} A_i$. By inductive hypothesis (IH), we have $\Phi \vdash e_1 : A_1,..., A_n \Rightarrow \gamma_1 A$. 
By \textsc{Inst} and \textsc{Gen}, 
we have $\Phi \vdash e_1 : \gamma_2 A_1,..., \gamma_2 A_i,..., \gamma_2 A_n \Rightarrow \gamma_2 \gamma_1 A$ and
 $\Phi \vdash \kappa : \gamma_2 B_1,..., \gamma_2 B_m \Rightarrow A_i$. Since $\gamma_2$ is idempotent, we 
have $\Phi \vdash \kappa : \gamma_2 B_1,..., \gamma_2 B_m \Rightarrow \gamma_2 A_i$. Thus 
by \textsc{Cut} rule, we have $\Phi \vdash e' :  \gamma_2 A_1,..., \gamma_2 B_1,..., \gamma_2 B_m,..., \gamma_2 A_n \Rightarrow \gamma_2 \gamma_1 A$ for some proof evidence $e'$.

\end{itemize}
\end{proof}

The soundness lemma above ensures that \textit{every} LP-Unif reduction and its answer 
are meaningful. The usual notion of failure in logic programming can be understood as proving an implicative
formula in which the antecedent is comprised of the failed subgoals. The notion of success corresponds to a proof of an atomic formula. For example, consider
the logic program $\Phi = \kappa_1 : P_3(K) \Rightarrow P_1(K), \kappa_2 : P_2(K) \Rightarrow P_3(K)$. We know that the query $P_1(x)$ will fail. But by Lemma \ref{sound}, we know the resolution for query $P_1(x)$ will stop at $P_2(K)$ with substitution $[K/x]$, and we 
have the proof $\Phi \vdash \lambda a. \kappa_1 \ (\kappa_2 \ a) : P_2(K) \Rightarrow [K/x]P_1(x)$ in 
\textbf{C}. 
So in a sense the failed query $P_1(x)$ is still meaningful under the type theoretic interpretation.  
In Section \ref{partial} we will use LP-Unif in a way that it does not have to resolve all the queries, but it still
computes useful answers for the variable that we care about. 

\begin{theorem}[Soundness of LP-Unif]
\label{sound:unif}
    If $\Phi \vdash \{A\} \leadsto^*_\gamma \emptyset$ , then $\Phi \vdash e : \forall \underline{x} .\Rightarrow \gamma A$ for some $e$ in $\mathbf{C}$. 
\end{theorem}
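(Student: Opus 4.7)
The plan is to treat this theorem as a direct corollary of the Soundness Lemma (Lemma \ref{sound}), since the hypothesis $\Phi \vdash \{A\} \leadsto^*_\gamma \emptyset$ is precisely the instance of that lemma in which the residual multiset of atomic formulas is empty, i.e.\ $n = 0$. First I would invoke Lemma \ref{sound} on the given derivation; by the shorthand stipulated in the syntax (``if $n$ is zero for $\underline{A} \Rightarrow B$, then we write $\Rightarrow B$''), its conclusion with an empty antecedent yields $\Phi \vdash e : \Rightarrow \gamma A$, i.e.\ $\Phi \vdash e : \gamma A$, for some proof evidence $e$ built entirely from the rules of $\mathbf{C}$.

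Second, I would apply the \textsc{Gen} rule once for each free term variable in $\gamma A$. Since \textsc{Gen} is part of $\mathbf{C}$ and does not alter the proof evidence, iterating it over the finite set $\mathrm{FV}(\gamma A) = \underline{x}$ lifts the judgement to $\Phi \vdash e : \forall \underline{x}.\,\gamma A$, which in the paper's abbreviated notation is exactly $\Phi \vdash e : \forall \underline{x}.\Rightarrow \gamma A$. The side condition on \textsc{Gen} is trivially satisfied because the variables $\underline{x}$ are free in $\gamma A$ but, by the freshness discipline on reduction steps in Definition \ref{red}, do not appear free in any hypothesis in $\Phi$ (the only hypotheses available are the original axioms $\kappa : H$, which are universally closed).

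There is essentially no obstacle here; the real content of the theorem lives in Lemma \ref{sound}, whose inductive proof has already been carried out. The only minor point worth checking while writing up is that the evidence $e$ produced by the lemma does indeed inhabit $\mathbf{C}$ (not merely $\mathbf{H}$), but this is immediate from the lemma's statement and from the fact that \textsc{Gen} is available in $\mathbf{C}$, so the final $e$ remains a $\mathbf{C}$-derivable term.
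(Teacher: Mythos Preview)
Your proposal is correct and matches the paper's treatment: the paper states the theorem immediately after Lemma~\ref{sound} without giving a separate proof, precisely because it is the $n=0$ instance of that lemma followed by applications of \textsc{Gen} to close over the free term variables. One small remark: the paper's \textsc{Gen} rule is presented without an explicit side condition on $x$, so your discussion of the freshness discipline is more careful than the paper itself, but harmless.
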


An evidence is \textit{ground} if it does not contain free evidence variables. The proof of completeness relies on the strong normalisation and the type preservation property of \textbf{H}. We first show
that the normal form of the proof given by a successful LP-Unif reduction is first-order. We then show that, if an atomic formula is inhabited by a ground evidence, there exists a successful LP-Unif reduction for it. 

\begin{definition}[First-Order Proof Evidence]
  We define first-order proof evidence as follows. 
  \begin{itemize}
  \item A variable proof evidence  $a$ and a constant proof evidence  $\kappa$ are first-order.
  \item If $n, n'$ are first-order, then $n\ n'$ is first-order.
  \end{itemize}
\end{definition}
For example, $(\kappa \ \kappa')$ is considered first-order, but $(\kappa \ (\lambda a . \kappa' \ (\kappa'' \ a)))$
is not first-order.  
\begin{proposition}
  \label{fo:sub}
  If $n, n'$ are first-order, then $[n'/a]n$ is first-order. 
\end{proposition}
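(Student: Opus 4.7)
The plan is to prove the statement by structural induction on the first-order proof evidence $n$, following the three-clause inductive definition of first-orderness. A key simplification here is that lambda-abstractions are explicitly excluded from first-order terms, so the induction has only three cases and no binding issues ever arise.

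First I would handle the two base cases. If $n$ is a constant $\kappa$, substitution is trivial, $[n'/a]\kappa = \kappa$, which is first-order. If $n$ is a variable $b$, then either $b = a$, in which case $[n'/a]n = n'$, and this is first-order by hypothesis; or $b \neq a$, in which case $[n'/a]n = b$, which is first-order as a variable.

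For the inductive step, suppose $n = n_1\, n_2$ with $n_1, n_2$ first-order. By the induction hypothesis applied to each subterm, both $[n'/a]n_1$ and $[n'/a]n_2$ are first-order. Since $[n'/a](n_1\, n_2) = ([n'/a]n_1)\, ([n'/a]n_2)$, the third clause of the definition of first-order immediately gives that $[n'/a]n$ is first-order.

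There is no real obstacle here: the proposition is purely syntactic, and because the grammar of first-order proof evidence does not admit $\lambda$-abstractions, we never encounter the usual complications of capture-avoidance or variable freshness that plague substitution arguments in the general $\lambda$-calculus. The proof is essentially a direct unfolding of the definition combined with a one-step induction, and it will be used later only to propagate first-orderness through $\beta$-reductions of first-order terms.
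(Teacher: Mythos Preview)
Your proof is correct; the paper itself states this proposition without proof, treating it as immediate. Your structural induction on $n$ is exactly the expected argument, and your observation that the absence of $\lambda$-abstractions in first-order evidence eliminates any capture issues is the reason the paper feels safe omitting details.
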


\begin{lemma}
\label{fst:lambda}
   If $\Phi \vdash e : [\forall \underline{x}.] \underline{A} \Rightarrow B$ in $\mathbf{C}$, then either $e$ is a proof evidence constant, variable or it is normalisable to the form $\lambda \underline{a}. n$, where $n$ is first-order normal proof evidence. 
\end{lemma}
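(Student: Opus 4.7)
The plan is to proceed by induction on the derivation of $\Phi \vdash e : [\forall \underline{x}.]\underline{A}\Rightarrow B$ in $\mathbf{C}$. Since every $\mathbf{C}$ derivation embeds into $\mathbf{H}$ (because \textsc{Cut} is admissible there), Theorem \ref{real:sn} guarantees that $e$ is strongly normalising, and the type preservation theorem for $\mathbf{H}$ guarantees that normalising $e$ preserves its type. Thus we may freely replace $e$ with its $\beta$-normal form while analysing cases.

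The \textsc{Axiom} case gives $e = \kappa$, which falls directly into the ``constant'' alternative of the conclusion. The \textsc{Inst} and \textsc{Gen} rules do not alter the proof evidence, and the resulting type still fits the $[\forall \underline{x}.]\underline{A}\Rightarrow B$ shape (either by stripping or adding an outer quantifier), so the inductive hypothesis on the subderivation transfers its conclusion to $e$ unchanged.

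The crucial case is \textsc{Cut}, where $e = \lambda \underline{a}.\lambda \underline{b}.(e_2\ \underline{b})(e_1\ \underline{a})$ with $\Phi \vdash e_1 : \underline{A}\Rightarrow D$ and $\Phi \vdash e_2 : \underline{B}, D\Rightarrow C$ in $\mathbf{C}$. Both subgoal types fit the hypothesis pattern (with empty quantifier prefix), so by induction each of $e_1$ and $e_2$ is either a constant, a variable, or normalises to the form $\lambda \underline{c}. n'$ with $n'$ first-order normal. For each redex $(e_i\ \underline{v})$ we reason as follows: if $e_i$ is a constant or variable then the iterated application to the variables in $\underline{v}$ is a chain of applications of first-order terms, hence itself first-order and normal; if $e_i \to^{\ast}_{\beta} \lambda \underline{c}. n'$, then $(e_i\ \underline{v}) \to^{\ast}_{\beta} [\underline{v}/\underline{c}]n'$, and repeatedly applying Proposition \ref{fo:sub} (using that every $v \in \underline{v}$ is a variable, hence first-order) gives a first-order normal term. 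The outer application $(e_2\ \underline{b})(e_1\ \underline{a})$ is then the application of two first-order normal terms, hence first-order and normal. Therefore $e \to^{\ast}_{\beta} \lambda \underline{a}.\lambda \underline{b}. n$ with $n$ first-order normal, as required.

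The main obstacle is bookkeeping within the \textsc{Cut} case: one must ensure that the successive substitutions of bound variables into the normal forms of $e_1$ and $e_2$ neither create new head redexes nor introduce any non-first-order subterm. This is precisely where Proposition \ref{fo:sub} does the essential work, leveraging the fact that the only terms ever substituted at this stage are the lambda-bound variables $\underline{a}, \underline{b}$, all of which are themselves first-order by definition.
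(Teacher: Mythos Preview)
Your overall approach matches the paper's: induction on the $\mathbf{C}$-derivation, with \textsc{Axiom} as the base case, \textsc{Inst}/\textsc{Gen} passed through directly, and the \textsc{Cut} case carrying the real work via Proposition~\ref{fo:sub}. Your explicit appeal to strong normalisation and type preservation in $\mathbf{H}$ to justify taking normal forms is a welcome clarification that the paper leaves implicit.

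There is, however, a small but genuine gap in your \textsc{Cut} case. You treat $(e_1\ \underline{a})$ and $(e_2\ \underline{b})$ symmetrically, concluding that each reduces to a first-order normal term. This is fine for $e_1$, whose type $\underline{A}\Rightarrow D$ has exactly $|\underline{a}|$ premises, so all binders are consumed. But $e_2$ has type $\underline{B}, D \Rightarrow C$, so by the inductive hypothesis its normal form is either a constant or $\lambda \underline{b}.\lambda d.\, n_2$ with one more binder than $|\underline{b}|$. Hence $(e_2\ \underline{b})$ normalises to $\kappa'\ \underline{b}$ (first-order) \emph{or} to $\lambda d.\, n_2$, which is \emph{not} first-order. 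Your sentence ``The outer application $(e_2\ \underline{b})(e_1\ \underline{a})$ is then the application of two first-order normal terms'' therefore fails in the second alternative.

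The fix is exactly what the paper does: in that alternative the outer application is a $\beta$-redex $(\lambda d.\, n_2)\, m$ with $m$ the first-order normal form of $e_1\ \underline{a}$, and one further reduction yields $[m/d]n_2$, which is first-order by Proposition~\ref{fo:sub}. So you need one more invocation of Proposition~\ref{fo:sub}, this time substituting the (first-order) result of the $e_1$ branch rather than a bound variable.
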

\begin{proof}
  By induction on the derivation of $\Phi \vdash e : [\forall \underline{x}.] \underline{A} \Rightarrow B$.
  \begin{itemize}
  \item Base Case. Rule \textsc{Axiom}. Obvious.
    \item Step Case.

\
      
           \begin{tabular}{l}
\infer[\textsc{Cut}]{\Phi \vdash \lambda \underline{a} . \lambda \underline{b} . (e_2\ \underline{b})\ (e_1\ \underline{a}) : \underline{A}, \underline{B} \Rightarrow C}{\Phi \vdash e_1 : \underline{A} \Rightarrow D & \Phi \vdash  e_2 : \underline{B}, D \Rightarrow C}
\end{tabular}

\

           By IH, we know that $e_1 = \kappa$ or $e_1 = \lambda \underline{a}.n_1$; $e_2 = \kappa'$ or $e_2 = \lambda \underline{b}.\lambda d . n_2$, where $n_1, n_2$ are fist-order. 
 We know that $e_1\ \underline{a}$ will be normalizable to a first-order proof evidence. And $e_2 \ \underline{b}$ will be normalized to either $\kappa'\ \underline{b}$ or $\lambda d . n_2$. So by Proposition \ref{fo:sub}, we conclude that 
           $\lambda \underline{a} . \lambda \underline{b} . (e_2\ \underline{b})\ (e_1\ \underline{a})$ is normalizable to $\lambda \underline{a} . \lambda \underline{b} . n$ for some
           first-order normal term $n$.
           \item The \textsc{Gen} and \textsc{Inst} cases are straightforward.
  \end{itemize}
\end{proof}
\begin{theorem}
  \label{fst}
    If $\Phi \vdash e : [\forall \underline{x}.] \Rightarrow B$ in $\mathbf{C}$, then $e$ is normalizable to a first-order proof evidence.
\end{theorem}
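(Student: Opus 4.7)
The plan is to derive this theorem as a direct corollary of Lemma \ref{fst:lambda}, specialised to the case where the antecedent list $\underline{A}$ is empty. Since the paper's convention writes $\Rightarrow B$ when the list of antecedents has length zero, the hypothesis here is that $e$ inhabits an atomic (possibly universally quantified) formula in $\mathbf{C}$, which is exactly the $\underline{A}=\emptyset$ instance of Lemma \ref{fst:lambda}.

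First I would invoke Lemma \ref{fst:lambda} on the derivation $\Phi \vdash e : [\forall \underline{x}.] \Rightarrow B$. The lemma supplies three alternatives: $e$ is a proof evidence constant $\kappa$, a proof evidence variable $a$, or $e$ normalises to the form $\lambda \underline{a}.n$ with $n$ first-order normal. In the first two alternatives we are done immediately, since constants and variables are first-order by definition. For the third alternative, I would observe that the binder list $\underline{a}$ in the shape produced by Lemma \ref{fst:lambda} corresponds (in its proof, via the \textsc{Cut} clause) to the antecedent list $\underline{A}$; since $\underline{A}$ is empty here, $\underline{a}$ is empty as well, and so $e$ normalises to $n$, which is first-order.

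The only subtlety — and the step I would be most careful about — is justifying that $\underline{a}$ really must be empty when $B$ is atomic. If one prefers not to rely on the inductive shape in the proof of Lemma \ref{fst:lambda}, this can be argued externally: by Theorem \ref{real:sn} the reduct of $e$ exists as a normal form $e'$, by Type Preservation $\Phi \vdash e' : [\forall \underline{x}.] B$, and by the Inversion Theorem a term of the shape $\lambda \underline{a}. n$ with $\underline{a}$ nonempty would have a type of the form $F_1 \Rightarrow F_2$, contradicting atomicity of $B$ (modulo substitution and generalisation, which cannot turn an implication into an atom). Hence $\underline{a}$ is empty and $e$ normalises to a first-order proof evidence, completing the proof.
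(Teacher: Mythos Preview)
Your proposal is correct and is exactly the intended route: the paper states Theorem~\ref{fst} immediately after Lemma~\ref{fst:lambda} without a separate proof, so it is meant to be read as the $\underline{A}=\emptyset$ specialisation of that lemma, which is precisely what you do. Your extra care in arguing that the binder list $\underline{a}$ is empty (either by inspection of the \textsc{Cut} clause in the lemma's proof, or externally via strong normalisation, type preservation in $\mathbf{H}$, and inversion) is a welcome clarification that the paper leaves implicit.
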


Now let us prove the completeness theorem. 

\begin{proposition}
  If $e$ is a ground first-order evidence, then it is of the following form: 
  \begin{itemize}
  \item $\kappa$
    \item $\kappa \ n_1\ ...\ n_l$, where $n_i$ is ground first-order evidence for any $i$. 
  \end{itemize}
\end{proposition}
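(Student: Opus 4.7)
The plan is to proceed by structural induction on the definition of first-order proof evidence, exploiting the groundness hypothesis to eliminate one of the three cases. Recall that first-order evidence is built only from variables $a$, constants $\kappa$, and applications $n\ n'$; no $\lambda$-abstraction is available. Since $e$ is ground, no free evidence variables may occur; but because there is no binder in the grammar of first-order evidence, the only way a variable $a$ could appear is freely. Hence the variable case is vacuous and I immediately discard it, leaving only the constant and application cases.

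The base case $e = \kappa$ matches the first disjunct of the statement verbatim. For the inductive case $e = n\ n'$, both subterms $n$ and $n'$ are first-order by definition, and both are ground since $e$ is ground. Applying the induction hypothesis to $n$ gives either $n = \kappa$ or $n = \kappa\ m_1 \cdots m_k$ for some ground first-order $m_1, \ldots, m_k$. In the first subcase, $e = \kappa\ n'$, which is the second disjunct with $l = 1$ and $n_1 = n'$, and $n'$ is ground first-order by assumption. In the second subcase, $e = \kappa\ m_1 \cdots m_k\ n'$, which is the second disjunct with $l = k+1$; the arguments $m_1, \ldots, m_k$ are ground first-order by the IH applied to $n$, and $n'$ is ground first-order directly.

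The only subtlety, and what might count as the main obstacle, is ensuring that when the left subterm $n$ itself unfolds into a head constant applied to a list of arguments, one correctly recombines these with the trailing argument $n'$ to yield a head-normal-form-style presentation $\kappa\ n_1 \cdots n_l$. This is essentially just left-associativity of application, but it has to be invoked at precisely this step of the induction; otherwise there is no further conceptual difficulty, and the proposition follows by a short case analysis.
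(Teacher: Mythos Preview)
Your proof is correct. The paper states this proposition without proof, evidently regarding it as a routine structural observation, so your argument supplies the details the paper omits. The structural induction on first-order evidence, with the variable case discharged by groundness and the application case handled via left-associativity to accumulate arguments onto the head constant, is exactly the natural way to establish the claim.
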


\begin{theorem}[Completeness of LP-Unif]
If $\Phi \vdash n : \ \Rightarrow A$ where $n$ is in ground first-order normal form in $\mathbf{H}$, then $\Phi \vdash \{A\} \leadsto^* \emptyset$.
\end{theorem}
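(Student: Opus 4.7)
The plan is to prove a strengthened lemma that generalises the statement to multiple subgoals and a possibly non-trivial instantiating substitution, and then proceed by strong induction on the total size of the proof evidence. Specifically, I would prove: if $A_1, \ldots, A_k$ are atomic formulas and $\mu$ is a substitution such that $\Phi \vdash n_i : \mu A_i$ in $\mathbf{H}$ with each $n_i$ ground first-order normal form, then $\Phi \vdash \{A_1, \ldots, A_k\} \leadsto^* \emptyset$. The theorem then follows by taking $k = 1$ and $\mu$ the identity substitution. The reason to strengthen is that one LP-Unif step generally yields several new subgoals whose evidence consists of strictly smaller ground first-order pieces, and we need to record both the evidence for those new subgoals and the evidence for the remaining original subgoals in a uniform statement.

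For the inductive step, pick one subgoal, say $A_1$, and case-analyse $n_1$ using the proposition stating that every ground first-order evidence is of the form $\kappa$ or $\kappa\,n_1 \cdots n_l$. If $n_1 = \kappa$, then by the Inversion theorem $(\kappa : \forall \underline{x}.\,C) \in \Phi$ with no antecedents (since $\mu A_1$ is atomic) and $\sigma C \equiv \mu A_1$; thus $C$ and $A_1$ are unifiable, we take the mgu $\gamma$, perform one LP-Unif step $\{A_1, \ldots, A_k\} \leadsto_{\kappa, \gamma} \{\gamma A_2, \ldots, \gamma A_k\}$, and finish by the IH applied to the remaining subgoals with an appropriate $\nu$ satisfying $\mu = \nu \circ \gamma$ on the relevant variables. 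If $n_1 = \kappa\,m_1 \cdots m_l$, repeated App-inversion together with inversion on $\kappa$ yields $(\kappa : \forall \underline{x}.\,B_1, \ldots, B_l \Rightarrow C) \in \Phi$ with $\sigma C \equiv \mu A_1$ and $m_i : \sigma B_i$ (the number of arguments must match the number of antecedents because the target is atomic). Taking $\gamma$ to be the mgu of $C$ and $A_1$, one LP-Unif step gives $\{A_1, \ldots, A_k\} \leadsto_{\kappa, \gamma} \{\gamma B_1, \ldots, \gamma B_l, \gamma A_2, \ldots, \gamma A_k\}$; since $\sigma$ and $\mu$ factor through $\gamma$ on the relevant variables, some $\nu$ witnesses $m_i : \nu(\gamma B_i)$ and $n_j : \nu(\gamma A_j)$ for $j \geq 2$, so the IH applies to the new multiset (its total evidence size has strictly decreased, as $|m_i| < |n_1|$) and concatenating gives the full reduction.

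The main obstacle will be the bookkeeping of substitutions together with the fresh-variable renaming intrinsic to each LP-Unif step. In particular, I need to argue carefully that the mgu $\gamma$ is indeed more general than the unifier $\sigma$ extracted from inversion (a standard lifting-style fact), that $\nu$ can be consistently extended to variables occurring only in $A_2, \ldots, A_k$ so that $\mu A_j = \nu(\gamma A_j)$ simultaneously for all $j$, and that the fresh variables introduced by the renaming of $\kappa$'s bound variables do not clash with variables already present in the other subgoals. None of this is deep, but without the strengthened statement above these bookkeeping issues would prevent the direct induction on $n$ from going through, since after one LP-Unif step the residual subgoals $\gamma B_i$ are in general strictly more general than the formulas $\sigma B_i$ for which ground first-order evidence is available.
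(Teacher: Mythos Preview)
Your proposal is correct and takes a route that differs from the paper's in a recognisable way. The paper proceeds by a direct structural induction on the single evidence term $n$, with no strengthening: in the step case $n=\kappa\,n_1\cdots n_m$ it performs one LP-Unif step to expose subgoals $C_1,\ldots,C_m$, then applies the inductive hypothesis \emph{sequentially}---first to $n_1$ and $C_1$, obtaining some $\sigma_1$, then uses the \textsc{Gen}/\textsc{Inst} rules of $\mathbf{H}$ to promote $n_2$ to a witness of $\sigma_1 C_2$, applies the hypothesis again, and so on. Your approach instead front-loads this bookkeeping into the strengthened statement (multiple subgoals, explicit instantiating substitution $\mu$), so that a single application of the inductive hypothesis to the whole new multiset suffices, and the substitution-lifting is handled once via the factoring $\mu=\nu\circ\gamma$ through the mgu. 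Both are sound; yours is the classical lifting-lemma shape familiar from SLD-completeness proofs and is arguably more transparent about where the mgu property is used, whereas the paper's version is shorter because it leans on the \textsc{Gen}/\textsc{Inst} closure of $\mathbf{H}$ to avoid an explicit strengthening. Your remark that the direct induction ``would not go through'' without strengthening is therefore too strong: it does go through in the paper, precisely because the type system lets one re-instantiate the same evidence term at a more specific atomic type on the fly.
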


\begin{proof}
  By induction on the structure of $n$.
  \begin{itemize}
  \item Base Case. $n = \kappa$. By inversion, we know 
$\kappa : \forall \underline{x}. \ \Rightarrow A' \in \Phi$ and $\gamma A' \equiv A$ for some substitution $\gamma$. Thus $A' \sim_\gamma A$, which implies $\Phi \vdash \{A\} \leadsto_{\kappa, \gamma} \emptyset$.
  \item Step Case. $n = \kappa\ n_1\ n_2\ ...\ n_m$. By inversion, 
we have $\kappa : \forall \underline{x}. \ C_1,..., C_m\Rightarrow B \in \Phi$. 
    To obtain $\Phi \vdash n : \ \Rightarrow A$, by inversion we have $\Phi \vdash \kappa : \forall \underline{x}. \ C_1,..., C_m\Rightarrow B$ with $\gamma_m ... \gamma_1 (B) \equiv A$, and
     $\Phi \vdash n_1 :\ \Rightarrow C_1, \Phi \vdash n_2 :\ \Rightarrow \gamma_1 C_2 ..., \Phi \vdash n_m : \ \Rightarrow \gamma_{m-1}... \gamma_1\ C_m$. By the rule \textsc{Inst}, we have
         $\Phi \vdash n_1 :\ \Rightarrow \gamma_{m}... \gamma_1 C_1, \Phi \vdash n_2 :\ \Rightarrow \gamma_{m}... \gamma_1 C_2 ..., \Phi \vdash n_m : \ \Rightarrow \gamma_{m}... \gamma_1\ C_m$. 
 Thus we have $\Phi \vdash \{A\} \leadsto_{\kappa, \gamma_m \cdot ... \cdot \gamma_1} \{\gamma_m ... \gamma_1 C_1,..., \gamma_m ... \gamma_1 C_m\}$. By IH, 
we have $\Phi \vdash \{\gamma_{m}... \gamma_1 C_1\} \leadsto^*_{\sigma_1} \emptyset$. So $\Phi \vdash \{A\} \leadsto_{\kappa, \gamma_m \cdot ... \cdot \gamma_1} \cdot \leadsto^*_{\sigma_1} \{\sigma_1 \gamma_{m}... \gamma_1 C_2,..., \sigma_1 \gamma_{m}... \gamma_1 C_m \} $. Again, we have $\Phi \vdash n_2 :\ \Rightarrow \sigma_1 \gamma_m ... \gamma_1 C_2$ by rule \textsc{Inst}. By applying IH repeatedly, we obtain $\Phi \vdash \{A\} \leadsto^* \emptyset$.
  \end{itemize}
\end{proof}



\section{Partial LP-Unif by Labelling}
\label{partial}
In the previous section, we have given a type-theoretic semantics to logic programming. According to it,  an answer for a given
query is a substitution applied to a formula that is inhabited by a proof evidence. 
In that sense, the soundness lemma (Lemma \ref{sound}) gives type-theoretic meaning
to \textit{any} LP-Unif reduction, even if it is a partial derivation, i.e. 
has unresolved subgoals. 
In this section, we build upon this result, and propose a \textit{lazy} version of LP-Unif,
drawing inspiration from lazy functional languages such as Haskell.
In particular, we propose to label certain variables in a given query, 
in order to prioritize those variables for which
we want to compute substitutions. 
Partial LP-Unif resolution will only  
resolve the subgoals that contain labelled variables. This requires to extend the usual unification to account for labels.
We call the resulting unification algorithm \textit{labelled unification} and the resulting reduction strategy -- \textit{partial LP-Unif}.

\begin{definition}
  We extend the term definition:\\
  $t \ ::= x \ | \ x^v \ |\ K(t_1,...,t_n)$,
  where $x^v$ is a labelled variable.\\ Definitions of a Horn formula and a formula are  extended accordingly. 
\end{definition}

A label on a variable can be informally understood as a case-expression on a variable in lazy functional language. When a query has a labelled
variable, it forces  resolution to compute a value for it. But since we are in logic programming, the only
way to force such evaluation is through label propagation and elimination. The following definition
extends unification to achieve this. 

We write $t^v$ to denote the labelled version of $t$, in which all the variables of $t$ are labelled. Note that 
 $x^v$ is identical to $(x^v)^v$. 

\begin{definition}[Labelled Unification]
  We say that $t$ is unifiable with $t'$  with substitution $\gamma$ (denoted $t \sim_\gamma t'$),  if $\{t = t'\} 
\rightarrowtail^* \gamma$ according to the following rules:

\[
\begin{array}{lll}

\{K(t_1,...,t_n) = K(s_1,...,s_n)\} \cup E & \rightarrowtail & \{t_1 = s_1 ,...,t_n = s_n\} \cup E

\\
\\

\{K(t_1,...,t_n) = G(s_1,...,s_m)\} \cup E &\rightarrowtail & \bot

\\
\\

\{t = t \} \cup E & \rightarrowtail &  E

\\
\\

\{K(t_1,...,t_n) = x \} \cup E & \rightarrowtail&  \{x = K(t_1,...,t_n) \} \cup E

\\
\\

\{K(t_1,...,t_n) = \remph{x^v} \} \cup E & \rightarrowtail&  \{\remph{x^v} = K(t_1,...,t_n) \} \cup E

\\
\\

\{x = K(t_1,...,t_n) \} \cup E & \rightarrowtail&  \bot \ \text{if}~x \in \mathrm{FV}(K(t_1,...,t_n)) 

\\
\\

\{\remph{x^v} = K(t_1,...,t_n) \} \cup E & \rightarrowtail&  \bot \ \text{if}~\remph{x^v} \in \mathrm{FV}(K(t_1,...,t_n))

\\
\\

\{ x = t \} \cup E & \rightarrowtail& \{ x = t \} \cup [t/x]E \ \text{if} \ x \notin \mathrm{FV}(t)

\\
\\

\{ \remph{x^v} = t \} \cup E & \rightarrowtail& \{ \remph{x^v} = \remph{t^v} \} \cup [\remph{t^v/x^v}](\ell_t (E)) \ \text{if} \ x^v \notin \mathrm{FV}(t)
  \end{array}
\]
\end{definition}
We use $\ell_t(E)$ to denote a labelling operation that labels all the variables in $E$ that occur in $t$. Formally, $\ell_t(E)$ is defined as $\sigma E$, where $[x^v/x] \in \sigma$ for any $x \in \mathrm{FV}(t)$. The set of equations $\{x_1 = t_1, ..., x_n = t_n\}$ can be viewed as a substitution. The labelled unification of terms can be extended routinely to the unification of atomic formulas. 

We write $|t|$ to denote erasing all the labels in $t$. We write $|\sigma|$ to
denote removing all the labels in the substitution $\sigma$, and $L(A)$ to denote the set of labelled variables in $A$. The following lemma shows  that labelled unification is functionally equivalent
to the usual (unlabelled) unification. 

 \begin{lemma}
   \label{la-unif}
   If $t \simeq_{\gamma} t'$ and $\mathrm{FV}(|t|) \cap \mathrm{FV}(|t'|) = \emptyset $, then $|t| \sim_{|\gamma|} |t'|$.
 \end{lemma}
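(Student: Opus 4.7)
The plan is to proceed by induction on the length of the labelled unification reduction $\{t = t'\} \rightarrowtail^* \gamma$, establishing that each labelled step can be mirrored by the corresponding unlabelled step (or by an identity) after the erasure operation $|\cdot|$, so that the final erased substitution $|\gamma|$ is obtained by unlabelled unification from $\{|t| = |t'|\}$. Before starting the induction, I would record two small auxiliary facts: that $|\cdot|$ commutes with substitution in the sense $|[t^v/x^v]E| = [|t|/x]|E|$, and that $|\ell_s(E)| = |E|$, since $\ell_s$ only relabels existing variables. These let us push $|\cdot|$ through the right-hand side of any labelled rule.

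Next I would do the case analysis on which of the nine labelled rules fires in a single step. The decomposition, clash, trivial, standard orientation, and standard occurs-check rules erase verbatim to their unlabelled counterparts. The two rules specialised to labelled variables on the left (the orientation $\{K(\ldots) = x^v\} \cup E \rightarrowtail \{x^v = K(\ldots)\} \cup E$ and the corresponding occurs-check) erase to the unlabelled orientation and occurs-check, because $|x^v| = x$. The delicate case is the labelled substitution rule $\{x^v = t\} \cup E \rightarrowtail \{x^v = t^v\} \cup [t^v/x^v](\ell_t(E))$, whose erasure is $\{x = |t|\} \cup |E| \rightarrowtail \{x = |t|\} \cup [|t|/x]|E|$; for this to be a legal step of unlabelled unification via its substitution rule, we need $x \notin \mathrm{FV}(|t|)$, which is strictly stronger than the labelled side condition $x^v \notin \mathrm{FV}(t)$.

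This discrepancy is the main obstacle, and to handle it I would establish the following invariant for every equation set $E'$ reachable from $\{t = t'\}$ under the disjointness hypothesis $\mathrm{FV}(|t|) \cap \mathrm{FV}(|t'|) = \emptyset$: the set of underlying variable names in $E'$ can be partitioned into two classes (inherited from $\mathrm{FV}(|t|)$ and $\mathrm{FV}(|t'|)$), and whenever an equation of the form $\{x^v = s\}$ appears in $E'$, the variable $x$ and the variables of $|s|$ lie in opposite classes; in particular $x \notin \mathrm{FV}(|s|)$. I would check that this invariant is preserved by each rule: decomposition, orientation, and the trivial and occurs-check rules do not mix variable classes, and the substitution rule propagates labels $[t^v/x^v](\ell_t(E))$ in a way that only labels variables already sharing a class with $x$, which preserves the cross-class structure. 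The initial set $\{t = t'\}$ satisfies the invariant vacuously by the hypothesis.

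With the invariant in hand, the inductive step is routine: by the induction hypothesis applied to the tail $\rightarrowtail^*$, the erased tail is a valid unlabelled unification producing $|\gamma|$; prepending the erased first step, now certified as a legitimate unlabelled step by the invariant together with the rule-by-rule analysis, yields $\{|t| = |t'|\} \rightarrowtail^* |\gamma|$, i.e.\ $|t| \sim_{|\gamma|} |t'|$. The only real content beyond bookkeeping is the verification of the invariant under the labelled substitution rule, which I expect to be the main place where care is required.
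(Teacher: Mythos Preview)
The paper states this lemma without proof, so there is nothing to compare your approach against; I can only evaluate your argument on its own merits. Your overall plan --- induction on the length of the labelled reduction, the auxiliary facts $|[t^v/x^v]E| = [|t|/x]|E|$ and $|\ell_s(E)| = |E|$, and the identification of the labelled substitution rule as the only non-routine case --- is sound and is exactly how one would expect to proceed.

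The gap is in the invariant you propose. Your two-class partition (variables inherited from $t$ versus variables inherited from $t'$) is \emph{not} preserved by the labelled substitution rule. Take $t = f(x^v, y^v, x^v)$ and $t' = f(z, w, w)$, with class~1 $=\{x,y\}$ and class~2 $=\{z,w\}$. Decomposition gives $\{x^v = z,\; y^v = w,\; x^v = w\}$; applying the labelled substitution rule to $x^v = z$ yields $\{x^v = z^v,\; y^v = w,\; z^v = w\}$. The equation $z^v = w$ now has a labelled left-hand side with $z$ in class~2 and $\mathrm{FV}(|w|) = \{w\}$ also in class~2, so your opposite-class claim fails. (Note that $z \notin \{w\}$, so the unlabelled step is still legal here --- the conclusion you want survives, but not via the argument you give.) The substitution step $[z^v/x^v]$ moves a class-2 variable onto a left-hand side that was previously occupied by a class-1 variable; this is exactly where your ``cross-class structure'' breaks.

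A workable invariant is instead \emph{label-consistency}: every underlying variable name occurs either always labelled or always unlabelled throughout the current equation set. This is preserved by all rules --- the only interesting case is the labelled substitution rule, where $t \mapsto t^v$ and $\ell_t(E)$ together relabel every occurrence of every variable of $t$, so any variable that flips from unlabelled to labelled does so uniformly. Under label-consistency, if $x^v = s$ appears then $x$ is labelled everywhere, so any occurrence of $x$ in $s$ would have to be $x^v$, contradicting the side condition $x^v \notin \mathrm{FV}(s)$; hence $x \notin \mathrm{FV}(|s|)$ as required. One caveat: label-consistency of the initial set $\{t = t'\}$ follows from the disjointness hypothesis only if $t$ and $t'$ are each individually label-consistent (no term contains both $x$ and $x^v$). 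This is tacitly assumed in the paper's intended use --- queries are labelled per variable, clause heads are fresh and unlabelled --- but strictly speaking is an extra hypothesis not visible in the lemma as stated.
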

We use $\ell_\gamma (A)$ to denote another labelling operation that labels all variables in $A$
that are labelled in the codomain of $\gamma$. Formally, $\ell_\gamma (A)$ is defined as $\sigma A$, such that, for any  $x^v \in \mathrm{FV}(\mathrm{codom}(\gamma))$, $[x^v/x]\in \sigma$.
\begin{definition}[Partial LP-Unif]

We define a reduction relation on a multiset of atomic formulas: 

\noindent $\Phi \vdash \{A_1,..., A_i, ..., A_n\} \rightharpoonup_{\kappa, \gamma \cdot \gamma'} \{\gamma A_1,..., \gamma \ell_{\gamma}(B_1),..., \gamma \ell_{\gamma}(B_m), ..., \gamma A_n\}$ for any substitution $\gamma'$, if $L(A_i) \not = \emptyset$ and there exists $\kappa : \forall \underline{x} . B_1,..., B_n \Rightarrow C \in \Phi$ such that $C \simeq_{\gamma} A_i$.

\end{definition}

The labelling operation $\ell_\gamma$ is used in the above definition to make sure that 
the labels are correctly propagated to $B_1,..., B_m$. Consider $\Phi = \kappa_1 :\ \Rightarrow P_1(Int), \kappa_2 :\ \Rightarrow P_2 (Int), \kappa_3 : \forall x . P_1(x), P_2(x) \Rightarrow Q(List(x))$, and the query $Q(z^v)$. The labelled unifier
of $Q(z^v)$ and $Q(List(x))$ is $\gamma = [List(x^v)/z^v]$, we need to apply $\ell_\gamma$ to $P_1(x)$ and $P_2(x)$ to obtain the resolvents $P_1(x^v)$ and $P_2(x^v)$.

As a consequence of Lemma \ref{la-unif}, the partial LP-Unif is essentially a control
strategy for LP-Unif.

\begin{lemma}\label{spu}
  If $\Phi \vdash \{A_1,..., A_i, ..., A_n\} \rightharpoonup_{\kappa, \gamma \cdot \gamma'} \{\gamma A_1,..., 
\gamma \ell_\gamma(B_1),..., \gamma \ell_\gamma(B_m), ..., \gamma A_n\}$, then

\noindent $\Phi \vdash \{|A_1|,..., |A_i|, ..., |A_n|\} \leadsto_{\kappa, |\gamma \cdot \gamma'|} \{|\gamma A_1|,..., |\gamma \ell_\gamma(B_1)|,..., |\gamma \ell_\gamma(B_m)|, ..., |\gamma A_n|\}$. 
\end{lemma}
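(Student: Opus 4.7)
The plan is to decompose the given partial LP-Unif step, strip labels everywhere, and verify that the result is precisely an LP-Unif step in the sense of Definition~\ref{red}. The key auxiliary facts I need are: (i) clauses in $\Phi$ carry no labelled variables, so the heads and bodies used in a reduction are label-free; (ii) label erasure commutes with substitution application and composition, i.e.\ $|\sigma(t)|=|\sigma|(|t|)$ and $|\sigma\cdot\sigma'|=|\sigma|\cdot|\sigma'|$; (iii) the labelling operation $\ell_\gamma$ vanishes under erasure, so $|\ell_\gamma(B_j)|=|B_j|=B_j$; and (iv) Lemma~\ref{la-unif}, which converts labelled unification into ordinary unification.

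First, I would unfold the hypothesis using the definition of partial LP-Unif: there is a freshly renamed clause $\kappa:\forall\underline{x}.\,B_1,\dots,B_m\Rightarrow C\in\Phi$ with $L(A_i)\neq\emptyset$ and $C\simeq_\gamma A_i$. Since clauses carry no labels, $|C|=C$ and $|B_j|=B_j$. The fresh-renaming convention also gives $\mathrm{FV}(|C|)\cap\mathrm{FV}(|A_i|)=\emptyset$, so Lemma~\ref{la-unif} yields $C\sim_{|\gamma|}|A_i|$. This is precisely the unification premise required by Definition~\ref{red} for an LP-Unif step at position $i$ using the clause $\kappa$ with unifier $|\gamma|$.

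Second, I would establish the commutation facts (ii) and (iii) by a routine induction on the structure of terms and atomic formulas, distinguishing the variable, labelled-variable, and function-symbol cases. With these in hand, each component of the resulting multiset matches up: $|\gamma A_j|=|\gamma|(|A_j|)$ for the unresolved subgoals, and $|\gamma\,\ell_\gamma(B_j)|=|\gamma|(|\ell_\gamma(B_j)|)=|\gamma|(B_j)$ for the new body atoms. Combined with $|\gamma\cdot\gamma'|=|\gamma|\cdot|\gamma'|$ updating the state correctly, Definition~\ref{red} delivers exactly the LP-Unif reduction required in the conclusion.

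The only delicate point — otherwise routine bookkeeping — is the definition of $|\sigma|$ as a substitution on unlabelled terms, since $\sigma$ in the labelled setting may treat $x$ and $x^v$ as distinct entries of its domain. I would therefore spend a sentence observing that the labelled unification rules, by inspection, never produce a substitution whose erasure is ill-defined (for any labelled $x^v$ in $\mathrm{dom}(\sigma)$, no separate $x$ appears there), so $|\sigma|$ is unambiguous and $|\sigma(t)|=|\sigma|(|t|)$ holds uniformly. Once this is confirmed, the lemma follows immediately, essentially by rewriting both sides of the conclusion via (i)--(iv).
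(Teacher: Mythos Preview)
Your proposal is correct and follows essentially the same approach as the paper, which simply presents Lemma~\ref{spu} as a direct consequence of Lemma~\ref{la-unif} without an explicit proof. Your detailed unpacking---applying Lemma~\ref{la-unif} to convert $C\simeq_\gamma A_i$ into $C\sim_{|\gamma|}|A_i|$ via the freshness condition, and then verifying that erasure commutes with substitution and with $\ell_\gamma$---is exactly the routine bookkeeping the paper leaves implicit.
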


Note that the above lemma implies, as a corollary, that partial LP-Unif is sound with respect to $\mathbf{H}$. 

\begin{definition}
  If $\Phi \vdash \{A\} \rightharpoonup_\gamma^* \{B_1,..., B_n\}$, $L(A) \not = \emptyset$ and $L(B_i) = \emptyset$ for all $i$, then we say $\gamma$ is the \textit{relative answer} for the labelled variables in $A$ with respect to $B_1,..., B_n$. 
\end{definition}

Comparing to LP-Unif, partial LP-Unif does not resolve subgoals without labelled variables, and therefore it terminates as soon as  all the labelled formulas
are resolved.
From the pragmatic perspective, the termination of partial LP-Unif signifies that 
we have obtained
all the answers we need for the labelled variables, and thus no further computation is necessary.
From the computational perspective, 
some queries may give rise to nonterminating LP-Unif reduction and the partial LP-Unif strategy
offers a lazy version of LP-Unif as an alternative. 
This is a useful lightweight solution, as checking (non)termination for logic programming
can be at best only semi-decidable.
From the theoretical perspective, relative answers are meaningful according
to type-theoretic semantics of Section \ref{pre} via Lemma~\ref{spu}.

Labels on variables give us a precise way to formalize the notion of local productivity we mentioned
in Introduction. 
\begin{definition}[Local Productivity]
    \label{local}

We say the queries $A_1,..., A_n$ are \textit{locally productive} at the set 
 of labelled variables $V$ iff $(\{A_1, ..., A_n\}, V) \in \mathrm{LProd}_m$ for any $m \geq 0$. We define $(\{A_1, ..., A_n\}, V) \in \mathrm{LProd}_m$ as follows:
 \begin{itemize}
 \item $(\{A_1, ..., A_n\}, V) \in \mathrm{LProd}_0$ for any $A_1, ..., A_n, V$.
 \item $(\{A_1, ..., A_n\}, V) \in \mathrm{LProd}_{m+1}$ iff for all $x^v \in V$, there exists
  $\Phi \vdash \{A_1,..., A_n \} \rightharpoonup^*_{\gamma}  \{Q_1,..., Q_l\}$ such
   that $\gamma(x^v) = K(t_1,...,t_n)$ and $(\{Q_1,..., Q_l\}, L(\gamma)) \in \mathrm{LProd}_m$, where $L(\gamma)$ denotes the set of labelled variables in the codomain of $\gamma$.
 \end{itemize}
 
\end{definition}

Note that according to the definition, if a query is trivially locally productive at the
empty set of labelled variables and is trivially locally unproductive if the set of labelled
variables are not a subset of the labelled variables in the query. Thus in general it is
more sensible to ask local productivity of a query at all its labelled variables.  

Local productivity is defined by 
quantifying over all natural numbers $m$ on $\mathrm{LProd}_m$, this allows us to prove
local productivity by induction on natural numbers. The requirement $\gamma(x^v) = K(t_1,...,t_n)$ in the definition of $\mathrm{LProd}_m$
ensures that the answer for a labelled variable $x^v$ is at least observable at the function 
symbol $K$. 

To illustrate local productivity, let us consider the query $P(x^v)$ and the logic program
 $\kappa : \forall x. P(x) \Rightarrow P(K(x))$. We want to show that $\{P(x^v)\}$ is locally
productive at $x^v$ for any variable $x$. We just need to show 
for any $x$, $(\{P(x^v)\}, \{x^v\})\in \mathrm{LProd}_m$ for all $m$. We proceed by induction on $m$. Suppose $m = 0$, we know that $(\{P(x^v)\}, \{x^v\}) \in \mathrm{LProd}_0$. Suppose $m = m'+1$, we have a partial LP-Unif reduction $\Phi \vdash \{P(x^v)\} \rightharpoonup_{[K(x_1^v)/x^v]} \{P(x_1^v)\}$. As $[K(x_1^v)/x^v]x^v = K(x_1^v)$, we just need to show 
$(\{P(x_1^v)\}, \{x_1^v\}) \in \mathrm{LProd}_{m'}$, 
which is by the inductive assumption. This is a very simple case of showing local productivity. In general, it is very challenging to prove local productivity in advance, and we leave 
this to future work.

Partial productivity and global productivity do not coincide, in general. For example, for a logic program $\kappa:  P(K) \Rightarrow P(K) $, the query $P(x^v)$ is locally productive, but it is not globally productive,
as, using terminology of Lloyd~\cite{Llo88}, no infinite term gets produced at infinity.

We give two further examples of performing finite computation on infinite data structures
using partial LP-Unif. 

\begin{example}
    \label{from}
Consider the following logic program $\Phi$, that observes, via the $Nth$ predicate, elements of an infinite stream of successive integers defined by $From$: 

\begin{center}

$\kappa_1 : \forall x . \forall y . From(S(x), y) \Rightarrow From(x, Cons(x,y))$

\

$\kappa_2 : \forall x . \forall y . \Rightarrow Nth(Z, Cons(x, y), x)$

\

$\kappa_3 : \forall x . \forall y . \forall z . \forall u .  Nth(x, z, u) \Rightarrow Nth(S(x), Cons(y, z), u)$
\end{center}
For the query $Nth(S(Z), y, \remph{z^v}), From(S(Z), y)$, we only want to know the answer for 
$\remph{z^v}$, i.e., the 2nd element in the stream generated by $From(S(Z), y)$. We observe the following reduction: 

\begin{center}
{
  $\Phi \vdash \{Nth(S(Z), y, \remph{z^v}), From(S(Z), y)\}
  \rightharpoonup_{\kappa_3, \gamma_1 \equiv [Z/x_1, Cons(y_1, z_1)/y, \remph{u_1^v/z^v}]} \{Nth(Z, z_1, \remph{u_1^v}), From(S(Z), Cons(y_1, z_1))
  \} \rightharpoonup_{\kappa_2, \gamma_2 \equiv [\remph{x_2^v/u_1^v}, Cons(\remph{x_2^v}, y_2)/z_1]\cdot \gamma_1} 
\{From(S(Z), Cons(y_1, Cons(\remph{x_2^v}, y_2) ))\} \rightharpoonup_{\kappa_1, \gamma_3 \equiv [Cons(\remph{x_2^v}, y_2)/y_3,S(Z)/y_1,S(Z)/x_3]\cdot \gamma_2} \{From(S(S(Z)), Cons(\remph{x_2^v}, y_2))\}\rightharpoonup_{\kappa_1, [y_4/y_2, \remph{S(S(Z))/x_2^v} , S(S(Z))/x_4]\cdot \gamma_3} \{From(S(S(S(Z))), y_4)\}
$
}
\end{center}
Thus $S(S(Z))$ is the answer for $z^v$ relative to $From(S(S(S(Z))), y_4)$, i.e. the 2nd element in the stream generated by $From(S(Z), y)$ is $S(S(Z))$. 
\end{example}

\begin{example}
  \label{fib}
Consider the following logic program $\Phi$: 

  \begin{center}
    $\kappa_1 : \forall x . \forall y . \Rightarrow Take(Z,App(x,y),Nil)$

    \

    $\kappa_2 : \forall x . \forall y . \forall z . \forall r . Take(x,z,r) \Rightarrow Take(S(x),App(y,z),Cons(y,r))$

    \

    $\kappa_3 : \forall x . \forall y . \forall s .Fib(y,App(x, y),s) \Rightarrow Fib(x,y,App(x,s)) $
  \end{center}
The formula $Fib(y,App(x, y),s) \Rightarrow Fib(x,y,App(x,s))$ is intended to
 generate (potentially infinitely long) Fibonacci word. For example, $A, B, A\cdot B, B\cdot (A\cdot B), 
 (A\cdot B)\cdot (B\cdot (A\cdot B)) ...$ (where ``,'' and ``$\cdot$'' both are shorthand for $App$,
 each element of the stream is the concatenation of the previous two) for query $Fib(A,B,y^v)$.
 Now let us execute the query $Take(S(S(S(Z))),y, \remph{z^v} ), Fib(A,B,y)$, Intuitively, that query  computes the prefix of length 3 in a Fibonacci word: 

\begin{center}
  $\Phi \vdash \{Take(S(S(S(Z))),y, \remph{z^v} ), Fib(A,B,y)\} \rightharpoonup_{\kappa_2, \gamma_1 \equiv [S(S(Z))/x_1, App(\remph{y_1^v}, z_1)/y, Cons(\remph{y_1^v}, \remph{r_1^v})/\remph{z^v}]} \{Take(S(S(Z)),z_1, \remph{r_1^v} ), Fib(A,B,App(\remph{y_1^v}, z_1))\}
 \rightharpoonup_{\kappa_3,\gamma_2 \equiv [A/x_2, B/y_2, A/\remph{y_1^v}, s_2/z_1]\cdot \gamma_1} 
\{Take(S(S(Z)),s_2, \remph{r_1^v} ), Fib(B, App(A,B), s_2)\}
 \rightharpoonup_{\kappa_2,\gamma_3 \equiv [S(S(Z))/x_3, App(\remph{y_3^v}, z_3)/s_2, Cons(\remph{y_3^v}, \remph{r_3^v})/\remph{r_1^v}]\cdot \gamma_2} 
\{Take(S(Z),z_3, \remph{r_3^v}), Fib(B, App(A,B), App(\remph{y_3^v}, z_3))\} 
 \rightharpoonup_{\kappa_3,\gamma_4 \equiv [B/x_4, App(A,B)/y_4, B/\remph{y_3^v}, s_4/z_3 ]\cdot \gamma_3} 
\{Take(S(Z),s_4, \remph{r_3^v}), Fib(App(A,B), App(B, (App(A, B))), s_4)\} 
 \rightharpoonup_{\kappa_2,\gamma_5 \equiv [S(Z)/x_5, App(\remph{y_5^v}, z_5)/s_4, Cons(\remph{y_5^v}, \remph{r_5^v})/\remph{r_3^v} ]\cdot \gamma_4} 
\{Take(Z,z_5, \remph{r_5^v}), Fib(App(A,B), App(B, (App(A, B))), App(\remph{y_5^v}, z_5) )\} 
 \rightharpoonup_{\kappa_1,\gamma_6 \equiv [App(x_6,y_6)/z_5, Nil/\remph{r_5^v}]\cdot \gamma_5} 
\{Fib(App(A,B), App(B, (App(A, B))), App(\remph{y_5^v}, App(x_6,y_6)) )\} $

$ \rightharpoonup_{\kappa_3,\gamma_7 \equiv [App(A,B)/x_7, App(B, (App(A, B)))/y_7, App(A,B)/\remph{y_5^v}, App(x_6,y_6)/s_7]\cdot \gamma_6}\{Fib(App(B, (App(A, B))), App(App(A,B),App(B, (App(A, B))) ), App(x_6,y_6))\}  
$
\end{center}
We can see that $[Cons(A, Cons(B, Cons(App(A,B),Nil)))/\remph{z^v}]$ is the answer relative to

\noindent $Fib(App(B, (App(A, B))), App(App(A,B),App(B, (App(A, B))) ), App(x_6,y_6))$, i.e.
the prefix of length 3 in a Fibonacci word is indeed $A, B, A\cdot B$. 
\end{example}

Note that a relative answer $\gamma$ for a query $A$ can be meaningless under Herbrand model semantics, as $\gamma A$, or indeed any of its ground instances,
may not be in the least or greatest Herbrand model of the logic program.
The following example illustrates this fact.

\begin{example}
  Consider Example~\ref{from}. Suppose the Horn formula $\kappa_1$ is replaced with the following Horn formula:
  
  $\kappa'_1 : \forall x . \forall y . False(a), From(S(x), y) \Rightarrow From(x, Cons(x,y))$\\
where $False(a)$ is a formula that does not unify with any clause in the given program. However, we can still perform the same partial LP-Unif derivation for  the query $Nth(S(Z), y, \remph{z^v}), From(S(Z), y)$, just as was shown in Example~\ref{from}. But no instance of $From(S(Z), y)$ will be contained in the least or greatest Herbrand model of that program.

  \end{example}

\subsection{Disscussion}
This section  presented an experiment in applying the type-theoretic semantics formulated in the earlier section to a practical problem of establishing  a sound lazy derivation strategy for resolution.
It de-emphasizes the usual notions of refutation and  entailment in logic programming.
Based on the
type-theoretic semantics given to resolution via soundness and completeness theorems of Section \ref{pre}, every reduction path of a given query is computationally
meaningful. This generalizes the traditional declarative semantics approach to logic programming, according to which only refutation -- i.e. the reduction that lead 
to a normal form given by the empty set -- is given a model theoretic meaning. 
 
Labels we introduced in this section allowed us to annotate the intention of making an
observation, and the labelled unification was formulated to hereditarily preserve this intention. Thus, we achieved
a computational behavior that is similar to lazy functional programming languages, i.e. partial LP-Unif can
 make finite observations on the infinite data. 

Related work exists on supporting lazy computation in logic programming. One is by 
annotating each predicate to be inductive/coinductive \cite{Simon07}, with the intention 
of resolving the inductive predicate eagerly and memorizing the coinductive 
predicate at each step, so that one can stop the resolution whenever the current query is a variant of the previous memorized coinductive predicate.
Our approach differs in that memorization and variant detection are not needed. 

\section{Structual Resolution}
\label{rt:s}

In this section, we represent structural resolution using the abstract reduction formalism we introduced in Section \ref{pre}.
We first define structural resolution as \textit{LP-Struct} reduction, thereby also defining \textit{LP-TM} reduction, which replaces the unification in LP-Unif by term-matching. We then identify
two conditions under which LP-Unif and LP-Struct are operationally equivalent. These two conditions are the termination of LP-TM and the non-overlapping requirement for Horn clauses. 



\begin{definition}
\

  \begin{itemize}
  \item \textbf{Term-matching (LP-TM) reduction:}

$\Phi \vdash \{A_1,..., A_i, ..., A_n\} \to_{\kappa} \{A_1,..., \sigma B_1,..., \sigma B_m, ..., A_n\}$, if there exists $\kappa : \forall \underline{x} . B_1,..., B_n \Rightarrow C \in \Phi$ such that $\sigma C \equiv A_i$.

\item \textbf{Substitutional reduction:} 

$\Phi \vdash \{A_1,..., A_i, ..., A_n\} \hookrightarrow_{\kappa, \gamma \cdot \gamma'} \{\gamma A_1,..., \gamma A_i, ..., \gamma A_n\}$ for any substitution $ \gamma'$, if there exists
  
  $\kappa : \forall \underline{x} . B_1,..., B_n \Rightarrow C \in \Phi$ such that $C \sim_{\gamma} A_i$.

\end{itemize}

\end{definition}

When combining LP-TM reduction with substitutional reductions, we sometimes write $\to_{\kappa, \gamma}$, where the second subscript is used to store the substitution $\gamma$.
The second subscript in 
the substitutional reduction is intended as a state (similar to LP-Unif), it will be updated along with reduction. 

Given a program $\Phi$ and
a set of queries $\{B_1, \ldots, B_n\}$, LP-TM uses only term-matching reduction to reduce $\{B_1, \ldots, B_n\}$: 

\begin{definition}[LP-TM]
\
\noindent Given a logic program $\Phi$,  LP-TM is given by an abstract reduction system $(\Phi, \to)$. 
\end{definition}

LP-TM is also sound w.r.t. the type system
of Definition \ref{proofsystem}, which implies that we can obtain a proof for each reductoin of the query. 

\begin{theorem}[Soundness of LP-TM]
\label{sound:tm}
    If $\Phi \vdash \{A\} \to^* \{B_1,..., B_n\}$ , then $\Phi \vdash e : \forall \underline{x} .B_1,..., B_n \Rightarrow A$ in $\mathbf{C}$.
\end{theorem}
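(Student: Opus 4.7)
The plan is to mimic the structure of the soundness proof for LP-Unif (Lemma \ref{sound}), adapting it to the fact that term-matching, unlike unification, leaves the matched atom unchanged and only instantiates the premises of the clause. The proof proceeds by induction on the length of the LP-TM reduction $\Phi \vdash \{A\} \to^* \{B_1,\ldots,B_n\}$.

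For the base case, length zero, we have $n = 1$ and $B_1 \equiv A$, so $e = \lambda a. a$ gives $\Phi \vdash \lambda a.a : A \Rightarrow A$ trivially. For the length-one step, suppose $\Phi \vdash \{A\} \to_\kappa \{\sigma B_1,\ldots,\sigma B_m\}$ with $\kappa : \forall \underline{x}. B_1,\ldots,B_m \Rightarrow C$ and $\sigma C \equiv A$. Using \textsc{Axiom} on $\kappa$ followed by repeated \textsc{Inst} to instantiate the universally quantified $\underline{x}$ according to $\sigma$, we obtain $\Phi \vdash \kappa : \sigma B_1,\ldots,\sigma B_m \Rightarrow \sigma C$, and since $\sigma C \equiv A$ this is precisely the desired judgement, witnessed by $e = \kappa$.

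For the inductive step, suppose $\Phi \vdash \{A\} \to^* \{A_1,\ldots,A_i,\ldots,A_n\} \to_\kappa \{A_1,\ldots,\sigma B_1,\ldots,\sigma B_m,\ldots,A_n\}$ with $\kappa : \forall \underline{x}. B_1,\ldots,B_m \Rightarrow C$ and $\sigma C \equiv A_i$. By the induction hypothesis I obtain $\Phi \vdash e_1 : A_1,\ldots,A_i,\ldots,A_n \Rightarrow A$ in $\mathbf{C}$. By the base-case reasoning I also obtain $\Phi \vdash \kappa : \sigma B_1,\ldots,\sigma B_m \Rightarrow A_i$. Applying the \textsc{Cut} rule with $D := A_i$, where the ``$\underline{A}$'' of the \textsc{Cut} schema is $\sigma B_1,\ldots,\sigma B_m$ and the ``$\underline{B}$'' is $A_1,\ldots,A_{i-1},A_{i+1},\ldots,A_n$, yields a proof evidence $e'$ with $\Phi \vdash e' : A_1,\ldots,\sigma B_1,\ldots,\sigma B_m,\ldots,A_n \Rightarrow A$ in $\mathbf{C}$. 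Finally, prepending the necessary \textsc{Gen} steps universally closes over the free term variables to produce the $\forall\underline{x}$ quantifier in the statement.

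The only subtlety worth noting — rather than an obstacle — is the contrast with the LP-Unif soundness proof: there, because unification substitutions propagate through the whole goal multiset, the conclusion had to be $\gamma A$ rather than $A$, and one had to invoke idempotence of $\gamma_2$ to align the premises with $\gamma_2 A_i$. Here, term-matching means no substitution is applied to $A$ or to the other queries $A_j$ ($j \neq i$), so the statement cleanly refers to $A$ itself and the alignment needed for the \textsc{Cut} rule is automatic. Thus the proof is structurally identical to, but slightly simpler than, that of Lemma \ref{sound}.
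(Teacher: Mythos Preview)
Your proof is correct and follows exactly the approach the paper uses for the analogous Lemma~\ref{sound} (the paper states Theorem~\ref{sound:tm} without proof, but the intended argument is precisely this adaptation). One small caveat: your length-zero witness $\lambda a.a$ is not actually derivable in $\mathbf{C}$, since $\mathbf{C}$ comprises only \textsc{Axiom}, \textsc{Cut}, \textsc{Inst}, and \textsc{Gen}---no \textsc{Abs} or \textsc{Var}; the paper handles this in its proof of Lemma~\ref{sound} by taking length one as the base case, and you should do likewise here.
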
 


Comparing the soundness lemma and Theorem \ref{sound:tm}, we see that for LP-TM, there is no need to accumulate substitutions, this is due to the use of term-matching instead of unification for the LP-TM reduction. 





  We use $\to^\mu$ to denote a reduction path to a $\to$-normal form. If the
  $\to$-normal form does not exist, then $\to^\mu$ denotes an infinite reduction path. We write $\hookrightarrow^1$ to denote at most one step of $\hookrightarrow$.



We can now formally define structural resolution within our formal framework. Given a program $\Phi$ and
a set of queries $\{B_1, \ldots, B_n\}$, LP-Struct  first uses term-matching reduction to reduce  $\{B_1, \ldots, B_n\}$ to a normal form, then performs one step substitutional reduction, and then repeats this process.

\begin{definition}[Structural Resolution (LP-Struct)]
\
\noindent  Given a logic program $\Phi$,  LP-Struct is given by an abstract reduction system $(\Phi, \to^{\mu} \cdot \hookrightarrow^1)$.

\end{definition}

If a finite term-matching reduction path does not exist, then $\to^{\mu} \cdot \hookrightarrow^1$ denotes an infinite path. 
When we write $\Phi \vdash \{\underline{A}\} (\to^{\mu} \cdot \hookrightarrow^1)^* \{\underline{C}\}$, it means a nontrivial finite path will be of the shape $\Phi \vdash \{\underline{A}\} \to^{\mu} \cdot \hookrightarrow \cdot ... \cdot \to^{\mu} \cdot \hookrightarrow \cdot \to^\mu \{\underline{C}\}$. 

Now let us recall the execution trace of the query $Stream(x)$ in Example \ref{ex:zstream} using LP-Struct: 

  \begin{center}
    $\Phi \vdash \{{Stream}(x)\} \hookrightarrow_{\kappa_1, [{Cons}(0, y_1)/x]} \{{Stream}({Cons}(0,y_1))\} \to_{\kappa_1}
    \{{Stream}(y_1)\}  \hookrightarrow_{\kappa_1, [{Cons}(0, y_2)/y_1, {Cons}(0, {Cons}(0, y_2))/x ]} \{{Stream}({Cons}(0, y_2))\} \to_{\kappa_1} 
    \{{Stream}(y_2)\} \hookrightarrow_{\kappa_1,  [{Cons}(0, y_3)/y_2, {Cons}(0,{Cons}(0, y_3) )/y_1, {Cons}(0, {Cons}(0, {Cons}(0, y_3)))/x]} \{{Stream}({Cons}(0, y_3))\} \to_{\kappa_1} \{{Stream}(y_3)\} \hookrightarrow ... $
  \end{center}

\subsection{LP-Struct and LP-Unif}

LP-Struct exibits different execution behavior compared to LP-Unif. In general, they are not equivalent.
Consider the program and the finite LP-Unif derivation of Example~\ref{ex:conn}. LP-Unif has
 a finite successful derivation for the query ${Connect}(x, y)$, but we have the following non-terminating reduction by LP-Struct:  

\begin{center} 
  $\Phi \vdash \{{Connect}(x, y)\} \to_{\kappa_1} \{{Connect}(x, y_1), {Connect}(y_1, y)\} $

$\to_{\kappa_1} \{{Connect}(x, y_2), {Connect}(y_2, y_1), {Connect}(y_1, y)\} \to_{\kappa_1} ... $ 
\end{center}
 The diverging behavior above is due to the divergence of LP-TM reduction.

 
\begin{definition}[LP-TM Termination]
  We say a program $\Phi$ is LP-TM terminating iff it admits no infinite $\to$-reduction.
\end{definition}

LP-TM termination is important for LP-Struct in two aspects: 1. It is
one of the conditions that ensure the operational equivalence of LP-Struct and LP-Unif. 2. The 
finiteness of LP-TM reduction is used in defining the observational productivity in logic programming.

The following example shows that LP-TM termination  alone
is not sufficient to establish that LP-Unif and LP-Struct are operationally equivalent. 
\begin{example}
\label{ex:overlap}
Consider the following logic program (we use $K$ to denote a constant):

  \begin{center}

    \noindent $\kappa_1 :\ \Rightarrow P({K})$ 

   \noindent  $\kappa_2 :  \forall x . Q(x) \Rightarrow P(x)$
  \end{center}

\noindent  The program is LP-TM terminating. For query $P(x)$, we have $\Phi \vdash \{P(x)\} \leadsto_{\kappa_1, [{K}/x]} \emptyset$ with LP-Unif, but there is only one reduction path $\Phi \vdash \{P(x)\} \to_{\kappa_2} \{Q(x)\} \not \hookrightarrow$ for LP-Struct.
\end{example}

Thus the termination of LP-TM is insufficient for establishing the relation between LP-Struct and LP-Unif. In Example \ref{ex:overlap}, the problem is caused by
the overlapping heads $P(K)$ and $P(x)$. Motivated by the non-overlapping condition
for rewrite rules in term rewriting systems (\cite{Baader:1998}, \cite{bezem2003term}), we introduce the following definition.

\begin{definition}[Non-overlapping Condition]
  Axioms $\Phi$ are non-overlapping if for any $\kappa_i : \forall \underline{x}. \underline{B} \Rightarrow C, \kappa_j : \forall \underline{x}. \underline{D} \Rightarrow E \in \Phi$, there are no substitutions $\sigma, \delta$ such that $\sigma C \equiv \delta E$.
\end{definition}

The following lemma shows that an LP-TM step can be viewed as an LP-Unif step without affecting 
the accumulated substiution. 

\begin{lemma}
  \label{tm-to-unif}
If $\Phi \vdash \{D_1,..., D_i,..., D_n\} \to_{\kappa, \gamma} \{D_1,.., \sigma E_1,..., \sigma E_m,..., D_n\}$, with $\kappa : \forall \underline{x}. \underline{E} \Rightarrow C \in \Phi$ and 
$\sigma C \equiv D_i$ for any $\gamma$, then $\Phi \vdash \{D_1,..., D_i, ..., D_n\} \leadsto_{\kappa, \gamma} \{D_1,.., \sigma E_1,..., \sigma E_m,..., D_n\}$. 
\end{lemma}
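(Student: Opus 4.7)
The plan is to show that the matching substitution $\sigma$ used in the LP-TM step can itself serve as the most general unifier that LP-Unif would choose at the same step, so that the two reductions produce identical multisets of atomic formulas. The key enabling fact is the freshness convention on clauses: when $\kappa : \forall \underline{x}. \underline{E} \Rightarrow C$ is selected, the quantified variables $\underline{x}$ are renamed apart from the current query, so $\mathrm{dom}(\sigma) \subseteq \underline{x}$ and $\underline{x} \cap \mathrm{FV}(D_j) = \emptyset$ for every $j \in \{1,\dots,n\}$. In particular $\sigma D_j \equiv D_j$ for all $j$, including $j = i$.

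First, I would check that $\sigma$ is a unifier of $\{C = D_i\}$. This is immediate from $\sigma C \equiv D_i \equiv \sigma D_i$, the last equation holding because the disjointness condition makes $\sigma$ act as the identity on $D_i$.

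Next, the main step: I would argue that $\sigma$ is in fact a most general unifier, so that the unification algorithm of the paper can produce $\sigma$ itself as the substitution witnessing $C \sim_{\sigma} D_i$. Given any unifier $\delta$ with $\delta C \equiv \delta D_i$, set $\theta := \delta$; I need to verify $\delta z \equiv (\theta \circ \sigma) z$ for every variable $z$. For $z \notin \underline{x}$ this is trivial since $\sigma z = z$. For $z \in \underline{x} \setminus \mathrm{FV}(C)$ again $\sigma z = z$. The interesting case is $z \in \mathrm{FV}(C)$: at every position $p$ at which $z$ occurs in $C$, the term $D_i \equiv \sigma C$ has $\sigma z$ at position $p$; hence the identity $\delta C \equiv \delta D_i$ forces $\delta z \equiv \delta(\sigma z) = (\theta \circ \sigma) z$ pointwise. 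This positional argument is where the real work lies and is the only nontrivial part of the proof; it relies crucially on the fact that $D_i$ is obtained from $C$ by the single matcher $\sigma$, so each occurrence of a variable in $C$ is filled uniformly by $\sigma$ in $D_i$.

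With $\sigma$ established as an mgu witnessing $C \sim_{\sigma} D_i$, the LP-Unif rule fires at position $i$ and produces $\{\sigma D_1, \ldots, \sigma E_1, \ldots, \sigma E_m, \ldots, \sigma D_n\}$. By the disjointness observation above each $\sigma D_j$ collapses to $D_j$, yielding exactly the multiset $\{D_1, \ldots, \sigma E_1, \ldots, \sigma E_m, \ldots, D_n\}$ produced by the term-matching step. The stored state $\gamma$ is threaded through both reductions in the same way, so the LP-Unif step $\leadsto_{\kappa, \gamma}$ faithfully simulates the LP-TM step $\to_{\kappa, \gamma}$, as required.
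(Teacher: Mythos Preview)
Your proposal is correct and follows the same route as the paper: observe that $\sigma$ itself serves as the unifier witnessing $C \sim_{\sigma} D_i$, and then use the freshness of the clause variables (so $\mathrm{dom}(\sigma) \subseteq \mathrm{FV}(C)$ is disjoint from every $\mathrm{FV}(D_j)$) to collapse $\sigma D_j$ to $D_j$. The paper's proof is much terser---it simply asserts the LP-Unif step with unifier $\sigma$ and then invokes $\mathrm{dom}(\sigma) \subseteq \mathrm{FV}(C)$---whereas you additionally spell out why $\sigma$ is a most general unifier; one small point the paper makes explicit that you leave implicit is that the same freshness fact is what lets the recorded state stay at $\gamma$ rather than $\sigma \cdot \gamma$.
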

\begin{proof}
  Since for $\Phi \vdash \{D_1,..., D_i,..., D_n\} \to_{\kappa, \gamma} \{D_1,.., \sigma E_1,..., \sigma E_m,..., D_n\}$, with $\kappa : \forall \underline{x}. \underline{E} \Rightarrow C \in \Phi$ and $\sigma C \equiv D_i$, we have $\Phi \vdash \{D_1,..., D_i, ..., D_n\} \leadsto_{\kappa, \sigma \cdot \gamma} \{\sigma D_1,.., \sigma E_1,..., \sigma E_m,..., \sigma D_n\}$. But $\mathrm{dom}(\sigma) \in \mathrm{FV}(C)$, thus we have $\Phi \vdash \{D_1,..., D_i, ..., D_n\} \leadsto_{\kappa, \gamma} \{D_1,.., \sigma E_1,..., \sigma E_m,..., D_n\}$. 
\end{proof}

The following lemma shows that one LP-Struct step corresponds to several LP-Unif steps, given 
the non-overlapping requirement. 

\begin{lemma}\label{struct-to-unif}
Suppose $\Phi$ is non-overlapping and $\{A_1,..., A_n\}$ are $\to$-normal. If $\Phi \vdash \{A_1,..., A_n\}  (\hookrightarrow_{\kappa, \gamma} \cdot \to^\mu_\gamma) \{C_1, ..., C_m\}$, then $\Phi \vdash \{A_1, ..., A_n\} \leadsto^*_\gamma \{C_1, ..., C_m\}$. 
\end{lemma}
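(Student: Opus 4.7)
The plan is to decompose the single LP-Struct transition $\hookrightarrow_{\kappa,\gamma}\cdot \to^{\mu}_{\gamma}$ into one LP-Unif step followed by a tail of LP-TM steps, and then lift the tail to $\leadsto$ via Lemma \ref{tm-to-unif}. Unpacking $\hookrightarrow_{\kappa,\gamma}$ gives a clause $\kappa : \forall \underline{x}.\, B_1,\ldots,B_m \Rightarrow C \in \Phi$ with $C \sim_\gamma A_i$, so $\gamma C \equiv \gamma A_i$ and $\gamma$ restricted to the variables of $C$ is a legal matcher. This witnesses an LP-TM step at position $i$,
\[
\{\gamma A_1,\ldots,\gamma A_n\} \to_{\kappa} \{\gamma A_1,\ldots,\gamma B_1,\ldots,\gamma B_m,\ldots,\gamma A_n\},
\]
and by the non-overlapping hypothesis $\kappa$ is the unique clause whose head matches $\gamma A_i$. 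By Definition \ref{red} the composite prefix $\hookrightarrow_{\kappa,\gamma}\cdot \to_{\kappa}$ is precisely one LP-Unif step $\leadsto_{\kappa,\gamma}$ from $\{A_1,\ldots,A_n\}$ to $\{\gamma A_1,\ldots,\gamma B_1,\ldots,\gamma B_m,\ldots,\gamma A_n\}$.

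Next I will show that the normalising reduction $\to^{\mu}_\gamma$ given in the hypothesis can be permuted so that this $\to_\kappa$ at position $i$ is performed first. The key observation is that in LP-TM every redex is located at a whole atom of the multiset, so any two steps that operate on different atoms act on disjoint positions and commute. Since $\gamma A_i$ is reducible (by $\kappa$) while the terminal multiset $\{C_1,\ldots,C_m\}$ is $\to$-normal, some step of $\to^{\mu}_\gamma$ must reduce $\gamma A_i$; by non-overlapping that step uses $\kappa$. By iterating the commutation with the steps that precede it, I move it to the front, producing
\[
\{\gamma A_1,\ldots,\gamma A_n\} \to_{\kappa} \{\gamma A_1,\ldots,\gamma B_1,\ldots,\gamma B_m,\ldots,\gamma A_n\} \to^{\mu'}_\gamma \{C_1,\ldots,C_m\}
\]
for some remaining LP-TM sequence $\to^{\mu'}_\gamma$ of length one less than the original.

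Finally, Lemma \ref{tm-to-unif} converts every step of $\to^{\mu'}_\gamma$ into a $\leadsto$-step keeping the substitution state $\gamma$, because matchers contribute nothing to the accumulated MGU (their domains lie in the fresh head variables). Concatenating with the initial $\leadsto_{\kappa,\gamma}$ step yields $\Phi \vdash \{A_1,\ldots,A_n\} \leadsto^*_\gamma \{C_1,\ldots,C_m\}$, as required. The main obstacle is the standardisation argument in the second paragraph: the elementary commutation of LP-TM steps at disjoint atoms is routine, but the proof genuinely needs both hypotheses of the lemma. Non-overlapping is required to force the step that reduces $\gamma A_i$ to be a $\kappa$-step (so that it fits together with $\hookrightarrow_{\kappa,\gamma}$ to form a single $\leadsto_{\kappa,\gamma}$), and $\to$-normality of the $A_j$ is used to ensure that any redex appearing in $\{\gamma A_1,\ldots,\gamma A_n\}$ is genuinely created by $\gamma$, so that the identification of the unique first reduction of position $i$ is not obscured by pre-existing redexes elsewhere.
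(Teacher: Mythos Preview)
Your proposal is correct and follows essentially the same line as the paper's proof: rearrange the $\to^\mu$ sequence so that the step reducing $\gamma A_i$ (which, by non-overlapping, must use $\kappa$) comes first, fuse $\hookrightarrow_{\kappa,\gamma}\cdot\to_\kappa$ into a single $\leadsto_{\kappa,\gamma}$ step, and then lift the remaining term-matching steps to $\leadsto$ via Lemma~\ref{tm-to-unif}. Your presentation of the commutation (``standardisation'') argument is in fact more explicit than the paper's, which simply asserts the rearrangement ``due to the property of LP-TM''. One small remark: your final paragraph overstates the role of the $\to$-normality hypothesis --- the commutation and the uniqueness-of-$\kappa$ argument both go through without it (non-overlapping alone forces any reduction of $\gamma A_i$ to use $\kappa$, regardless of whether other $A_j$ carry pre-existing redexes); the hypothesis is present because it reflects the shape of an LP-Struct trace, not because the proof itself requires it.
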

\begin{proof}
  Given $\Phi \vdash \{A_1,..., A_n\} (\hookrightarrow_{\kappa, \gamma} \cdot \to^\mu_\gamma) \{C_1, ..., C_m\}$, we know the actual reduction path can be rearranged to the form $\Phi \vdash \{A_1,..., A_n\}\hookrightarrow_{\kappa, \gamma} \{\gamma A_1,..., \gamma A_n\} \to_{\kappa, \gamma} \{\gamma A_1, ..., \gamma B_1, ..., \gamma B_n,..., \gamma A_n\} \to^\mu_{\gamma} \{C_1, ..., C_m\}$, where $\gamma A_i \equiv \gamma C$ with $\kappa : \forall \underline{x}.\underline{B} \Rightarrow C \in \Phi$ and $A_i \equiv \sigma B$. Note that $\gamma$ is unchanged along the term-matching reduction. We have rearranged the $\to$-step following right after $\hookrightarrow$ using $\kappa$ due to the property of LP-TM. Note that to LP-TM reduce $A_i$ we can only use $\kappa$, since otherwise it would mean with $\kappa' : \forall \underline{x}. \underline{D} \Rightarrow B \in \Phi$. This implies $\gamma C \equiv \gamma \sigma B$, contradicting
the non-overlapping restriction. Thus we have $\Phi \vdash \{A_1,..., A_n\}\leadsto_{\kappa,  \gamma} \{\gamma A_1, ..., \gamma B_1, ..., \gamma B_n,..., \gamma A_n\}$. By Lemma \ref{tm-to-unif},
we have $\Phi \vdash \{A_1,..., A_n\}\leadsto_{\kappa,  \gamma} \{\gamma A_1, ..., \gamma B_1, ..., \gamma B_n,..., \gamma A_n\}$ $\leadsto^*_\gamma \{C_1, ..., C_m\}$. 
\end{proof}

Lemmas \ref{lemm:norm}, \ref{lem:str} and Theorem \ref{ortho:equiv}  show that for a non-overlapping program, LP-Unif is equivalent to LP-Struct for terminating reductions. 

\begin{lemma}
  \label{lemm:norm}
  Given $\Phi$ is non-overlapping, if $\Phi \vdash \{A_1,..., A_n\} (\to^\mu \cdot \hookrightarrow^1)^*_\gamma  \{C_1, ..., C_m\}$ with $\{C_1, ..., C_m\}$ in $\to^\mu \cdot \hookrightarrow^1$-normal form, then $\Phi \vdash \{A_1, ..., A_n\} \leadsto^*_\gamma \{C_1, ..., C_m\}$ with $\{C_1, ..., C_m\}$ in $\leadsto$-normal form. 
\end{lemma}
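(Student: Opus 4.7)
The plan is to proceed by induction on the number of $\hookrightarrow$-steps in the LP-Struct reduction. By the shape convention noted right after the definition of LP-Struct, any nontrivial finite path admits the form $\{A_1,\ldots,A_n\} \to^\mu \cdot \hookrightarrow \cdot \ldots \cdot \to^\mu \cdot \hookrightarrow \cdot \to^\mu \{C_1,\ldots,C_m\}$, so I can cleanly decompose the derivation into alternating $\to^\mu$-blocks and single $\hookrightarrow$-steps, with $\to^\mu$-blocks at both ends.

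For the base case (zero $\hookrightarrow$-steps), the reduction is a pure $\to^\mu$-path. A routine sub-induction on its length, invoking Lemma \ref{tm-to-unif} at each $\to$-step, converts it into a $\leadsto^*$-reduction; crucially, since each $\to$-step does not update the substitution state, the resulting $\leadsto^*$-reduction carries the same accumulated substitution $\gamma$ that labelled the original LP-Struct path.

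For the inductive step, I would peel off the first $\to^\mu \cdot \hookrightarrow \cdot \to^\mu$ portion: use Lemma \ref{tm-to-unif} iteratively on the initial $\to^\mu$ prefix to produce $\leadsto$-steps leading to a $\to$-normal intermediate $\{A'_1,\ldots,A'_k\}$, then apply Lemma \ref{struct-to-unif} to the adjacent $\hookrightarrow \cdot \to^\mu$ block (whose $\to$-normality precondition is now satisfied and whose non-overlapping hypothesis is provided). This yields a sequence of $\leadsto$-steps ending at another $\to$-normal configuration, which is precisely the starting point required for the inductive hypothesis on the remaining path (which has one fewer $\hookrightarrow$-step). Composing these $\leadsto^*$-reductions gives $\Phi \vdash \{A_1,\ldots,A_n\} \leadsto^*_\gamma \{C_1,\ldots,C_m\}$.

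It remains to show $\{C_1,\ldots,C_m\}$ is $\leadsto$-normal. Suppose some clause head unifies with some $C_i$; then, since $\{C_1,\ldots,C_m\}$ is $\to$-normal (as it is reached by a final $\to^\mu$), the trivial zero-step $\to^\mu$ followed by that $\hookrightarrow^1$-step would constitute a nontrivial LP-Struct step, contradicting $\to^\mu \cdot \hookrightarrow^1$-normality. The main obstacle I anticipate is purely bookkeeping: making sure the substitution labels on the reconstructed $\leadsto$-chain compose correctly across successive applications of Lemmas \ref{tm-to-unif} and \ref{struct-to-unif}, and confirming that the alternating shape of LP-Struct paths lets Lemma \ref{struct-to-unif} be invoked exactly where its $\to$-normality precondition is met, so that no structural mismatch arises at the block boundaries.
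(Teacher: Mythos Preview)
Your proposal is correct and follows essentially the same approach as the paper: rebracket the LP-Struct path as an initial $\to^\mu$ block followed by successive $(\hookrightarrow \cdot \to^\mu)$ blocks, handle the former via Lemma~\ref{tm-to-unif} and each of the latter via Lemma~\ref{struct-to-unif}. Your version is more explicit---you organise it as an induction on the number of $\hookrightarrow$-steps and you supply the argument for $\leadsto$-normality of the final configuration, which the paper's proof simply asserts without justification.
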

\begin{proof}
  Since $\Phi \vdash \{A_1, ..., A_n\}  (\to^\mu \cdot \hookrightarrow^1)^*_\gamma \{C_1, ..., C_m\}$, this means the reduction path must be of the form $\Phi \vdash \{A_1,..., A_n\} \to^\mu \cdot \hookrightarrow^1 \cdot \to^\mu \cdot \hookrightarrow^1 ... \to^\mu \cdot \hookrightarrow^1 \cdot \to^\mu  \{C_1, ..., C_m\}$. Thus $\Phi \vdash \{A_1,..., A_n\} \to^\mu \cdot (\hookrightarrow^1 \cdot \to^\mu) \cdot (\hookrightarrow^1 ... \to^\mu) \cdot (\hookrightarrow^1 \cdot  \to^\mu)  \{C_1, ..., C_m\}$. By Lemma \ref{tm-to-unif} and Lemma \ref{struct-to-unif}, we have $\Phi \vdash \{A_1, ..., A_n\} \leadsto^*_\gamma \{C_1, ..., C_m\}$ with $\{C_1, ..., C_m\}$ in $\leadsto$-normal form.
\end{proof}

\begin{lemma}
  \label{lem:str}
  Given $\Phi$ is a non-overlapping, if $\Phi \vdash \{A_1, ..., A_n\} \leadsto^*_\gamma \{C_1, ..., C_m\}$ with $\{C_1, ..., C_m\}$ in $\leadsto$-normal form , then $\Phi \vdash \{A_1, ..., A_n\} (\to^\mu \cdot \hookrightarrow^1)^*_\gamma \{C_1, ..., C_m\}$ with $\{C_1, ..., C_m\}$ in $\to^\mu \cdot \hookrightarrow^1$-normal form. 
\end{lemma}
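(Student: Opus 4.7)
The plan is to induct on the length $l$ of the LP-Unif derivation $\Phi \vdash \{A_1,\ldots,A_n\} \leadsto^*_\gamma \{C_1,\ldots,C_m\}$, using Lemma~\ref{tm-to-unif} and the non-overlapping hypothesis to simulate each LP-Unif step by an appropriate LP-Struct block. A preliminary observation handles both the normal-form side-condition and the $l=0$ base case: $\leadsto$-normality implies $\to^\mu \cdot \hookrightarrow^1$-normality, since any enabled $\hookrightarrow$-step is a $\leadsto$-step by definition, and any enabled $\to$-step lifts to a $\leadsto$-step by Lemma~\ref{tm-to-unif}.

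For the inductive step I would case-split on whether $\{A_1,\ldots,A_n\}$ is itself $\to$-normal. When it is, I simulate the first LP-Unif step $\{A_1,\ldots,A_n\} \leadsto_{\kappa,\sigma_1} \{D_1,\ldots,D_k\}$ by an empty $\to^\mu$-block, then $\hookrightarrow_{\kappa,\sigma_1}$ producing $\{\sigma_1 A_1,\ldots,\sigma_1 A_n\}$, then $\to_\kappa$ (enabled because $\sigma_1 C \equiv \sigma_1 A_i$) reaching $\{D_1,\ldots,D_k\}$. I then invoke the inductive hypothesis on the length-$(l-1)$ tail $\{D_1,\ldots,D_k\} \leadsto^*_\gamma \{C_1,\ldots,C_m\}$; the first $\to^\mu$-block of the resulting LP-Struct derivation absorbs my $\to_\kappa$-step, producing the required overall derivation.

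When $\{A_1,\ldots,A_n\}$ is not $\to$-normal, some $A_i$ is term-matched by a rule $\kappa'$, giving $\{A_1,\ldots,A_n\} \to_{\kappa'} \{A'_1,\ldots,A'_{n'}\}$. The crucial claim is the existence of a length-$(l-1)$ LP-Unif derivation $\{A'_1,\ldots,A'_{n'}\} \leadsto^*_\gamma \{C_1,\ldots,C_m\}$ with the same final state $\gamma$; granting this, the inductive hypothesis supplies the LP-Struct tail from $\{A'_1,\ldots,A'_{n'}\}$, and prepending the $\to_{\kappa'}$-step (absorbed into the first $\to^\mu$-block) completes the derivation.

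The main obstacle is justifying this rearrangement claim. Because $\{C_1,\ldots,C_m\}$ is $\leadsto$-normal and $\kappa'$-term-matchability of $A_i$ is preserved under substitution, some descendant $\gamma' A_i$ must be reduced at some step of the original derivation, and by non-overlapping the reducing rule must be $\kappa'$. Since $\gamma' A_i$ is $\kappa'$-term-matchable, the mgu used at that step is identity on the query side, so by Lemma~\ref{tm-to-unif} the step is state-preserving and amounts to a $\to_{\kappa'}$-step. I would then argue that it commutes with all earlier steps in the derivation, which either act on subgoals disjoint from the descendant of $A_i$ or merely substitute variables inside it in ways that compose cleanly with the term-matching substitution. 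Pulling this step to the front yields the required length-$(l-1)$ LP-Unif derivation from $\{A'_1,\ldots,A'_{n'}\}$ to $\{C_1,\ldots,C_m\}$, closing the induction.
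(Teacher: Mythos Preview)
Your proof is correct and in fact more careful than the paper's own argument. The paper proceeds by a direct induction on the length of the LP-Unif derivation, simulating each step $\{A_1,\ldots,A_n\} \leadsto_{\kappa,\gamma} \{\gamma A_1,\ldots,\gamma B_1,\ldots,\gamma B_m,\ldots,\gamma A_n\}$ by the pair $\hookrightarrow_{\kappa,\gamma}$ followed by $\to_\kappa$, invoking non-overlapping only to rule out some \emph{other} rule $\kappa' \neq \kappa$ term-matching $A_i$, and then appealing to the inductive hypothesis on the tail. It concludes literally with a derivation of shape $(\hookrightarrow \cdot \to)^*$, never checking that each $\hookrightarrow$ is taken from a $\to$-normal multiset as the definition of $(\to^\mu \cdot \hookrightarrow^1)^*$ demands. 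In particular, nothing in the paper's argument covers the case where some $A_j$ (possibly $j\neq i$) is already $\to$-reducible.

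Your case split on whether $\{A_1,\ldots,A_n\}$ is $\to$-normal closes exactly this gap. The $\to$-normal branch coincides with the paper's reasoning (empty initial $\to^\mu$ block, then $\hookrightarrow_{\kappa,\sigma_1}\cdot\to_\kappa$, then absorb into the IH's first $\to^\mu$). The non-$\to$-normal branch, via the permutation argument---locate the LP-Unif step that eventually resolves the descendant of the $\to$-reducible $A_i$, observe that by non-overlapping it must use $\kappa'$ and hence by Lemma~\ref{tm-to-unif} is a state-preserving term-matching step, and commute it to the front---is the extra ingredient the paper omits. This argument is sound; the only delicate points you should spell out in a final write-up are (i) that the pulled-forward step contributes a substitution whose domain consists solely of fresh rule variables, so its removal leaves the accumulated state $\gamma$ unchanged on all variables relevant to the query and the final multiset, and (ii) the usual alignment of freshly renamed existential variables when the same clause is applied at a different position. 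Both work out as you indicate.
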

\begin{proof}
  By induction on the length of $\leadsto^*_\gamma$. 

\begin{itemize}
\item Base Case. $\Phi \vdash \{A_1,...,A_i, ..., A_n\} \leadsto_{\kappa, \gamma} \{\gamma A_1,...,\gamma B_1, ..., \gamma B_m ..., \gamma A_n \}$ with $\kappa : \forall \underline{x}. \ \underline{B}\Rightarrow C \in \Phi$, $C \sim_\gamma A_i$ and $\{\gamma A_1,...,\gamma B_1, ..., \gamma B_m ..., \gamma A_n \}$ in $\leadsto$-normal form . We have $\Phi \vdash \{A_1,...,A_i, ..., A_n\} \hookrightarrow_{\kappa, \gamma} \{\gamma A_1,...,\gamma A_i, ..., \gamma A_n \} \to_\kappa \{\gamma A_1,...,\gamma B_1, ..., \gamma B_m ..., \gamma A_n\}$ with $\{\gamma A_1,...,\gamma B_1, ..., \gamma B_m ..., \gamma A_n\}$ in $\to^\mu\cdot \hookrightarrow$-normal form. Note that there can not be another
$\kappa' : \forall \underline{x}. \underline{B} \Rightarrow C' \in \Phi$ such that $\sigma C' \equiv A_i$, since this would means $\gamma C \equiv \gamma A_i \equiv \gamma \sigma C'$, violating the non-overlapping requirement.

\item Step Case. $\Phi \vdash \{A_1, ..., A_i, ..., A_n\} \leadsto_{\kappa, \gamma} \{\gamma A_1,..., \gamma B_1, ..., \gamma B_l, ..., \gamma A_n\} \leadsto^*_{\gamma'} \{C_1, ..., C_m\}$ with $\kappa : \forall \underline{x}. B_1,..., B_l \Rightarrow C \in \Phi$ and $C \sim_\gamma A_i$.  We have $\Phi \vdash \{A_1, ..., A_i, ..., A_n\} \hookrightarrow_{\kappa, \gamma} \{\gamma A_1, ..., \gamma A_i, ..., \gamma A_n\}\to \{\gamma A_1,..., \gamma B_1, ..., \gamma B_m, ..., \gamma A_n\}$. By the non-overlapping requirement, there can not be another $\kappa' : \forall \underline{x}. \underline{D} \Rightarrow C' \in \Phi$ such that $\sigma C' \equiv A_i$. By IH, we know $\Phi \vdash \{\gamma A_1,..., \gamma B_1, ..., \gamma B_m, ..., \gamma A_n\} (\to^\mu \cdot \hookrightarrow)_{\gamma'}^* \{C_1, ..., C_m\}$. Thus we conclude that $\Phi \vdash \{A_1, ..., A_i, ..., A_n\} (\hookrightarrow \cdot \to)^*_{\gamma'} \{C_1, ..., C_m\}$. 
\end{itemize}
\end{proof}

\begin{theorem}
  \label{ortho:equiv}
  Suppose $\Phi$ is non-overlapping. $\Phi \vdash \{A_1, ..., A_n\} \leadsto^*_\gamma \{C_1, ..., C_m\}$ with $\{C_1, ..., C_m\}$ in $\leadsto$-normal form iff $\Phi \vdash \{A_1, ..., A_n\} (\to^\mu \cdot \hookrightarrow^1)^*_\gamma \{C_1, ..., C_m\}$ with $\{C_1, ..., C_m\}$ in $\to^\mu \cdot \hookrightarrow^1$-normal form.   
\end{theorem}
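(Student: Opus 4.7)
The statement is a biconditional, and the two directions have already been established separately as Lemmas \ref{lemm:norm} and \ref{lem:str}. The plan is therefore just to package those two lemmas together into the equivalence, observing that between them they cover both implications with the same data (program $\Phi$, same initial multiset, same accumulated substitution $\gamma$, same final multiset $\{C_1,\ldots,C_m\}$).

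For the forward direction, I would assume a terminating LP-Unif derivation $\Phi \vdash \{A_1,\ldots,A_n\} \leadsto^*_\gamma \{C_1,\ldots,C_m\}$ whose endpoint is in $\leadsto$-normal form and invoke Lemma \ref{lem:str} directly; this yields an LP-Struct derivation $(\to^\mu \cdot \hookrightarrow^1)^*_\gamma$ with the same endpoint in $\to^\mu \cdot \hookrightarrow^1$-normal form. For the converse, I would assume an LP-Struct derivation with endpoint in $\to^\mu \cdot \hookrightarrow^1$-normal form and apply Lemma \ref{lemm:norm}, which gives the corresponding LP-Unif derivation to a $\leadsto$-normal form. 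Since both lemmas explicitly preserve the accumulated substitution $\gamma$ (thanks to Lemma \ref{tm-to-unif}, which ensures that an LP-TM step can be recast as an LP-Unif step without disturbing the state), the indexing of the reductions matches on both sides.

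There is essentially no obstacle at this point: all of the real work — in particular, using the non-overlapping hypothesis to rule out alternative term-matching steps and to identify each $\hookrightarrow$ followed by its forced $\to^\mu$ block with a run of $\leadsto$-steps — has already been discharged in the two lemmas and in Lemma \ref{struct-to-unif}. The only thing worth being careful about is the bookkeeping of normal forms: one must verify that being in $\leadsto$-normal form on the LP-Unif side corresponds exactly to being in $\to^\mu \cdot \hookrightarrow^1$-normal form on the LP-Struct side, which again follows because Lemma \ref{lem:str} produces a derivation that terminates in a $\to^\mu \cdot \hookrightarrow^1$-normal form whenever the LP-Unif reduction was maximal, and Lemma \ref{lemm:norm} produces a $\leadsto$-normal form whenever the LP-Struct reduction was maximal. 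The proof is then a two-line argument citing these two lemmas in turn.
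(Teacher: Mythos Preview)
Your proposal is correct and matches the paper's approach: the theorem is stated immediately after Lemmas~\ref{lemm:norm} and~\ref{lem:str} with no separate proof, precisely because each direction is one of those lemmas. Your identification of which lemma gives which implication is accurate, and the remark about $\gamma$ being preserved via Lemma~\ref{tm-to-unif} is the right justification for the substitution subscript matching.
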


The theorem above implies that for terminating and non-overlapping programs, LP-Unif is equivalent
to LP-Struct. But the termination requirement can be relaxed by only requiring
the termination of the $\to$-reduction, i.e. by requiring termination of LP-TM.

\begin{lemma}
  \label{lem:subst}
 If $\Phi \vdash \{A\} \to_{\kappa, \gamma} \{B_1,..., B_m\}$ and
 $\mathrm{dom}(\sigma) \cap ((\bigcup_i \mathrm{FV}(B_i)) - \mathrm{FV}(A)) = \emptyset$, then $\Phi \vdash \{\sigma A\} \to_{\kappa, \gamma} \{\sigma B_1,..., \sigma B_m\}$. 
\end{lemma}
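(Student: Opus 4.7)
The plan is to unpack the LP-TM step on $A$, exhibit a corresponding step on $\sigma A$ using a composite matching substitution, and then verify that the resulting multiset coincides with $\{\sigma B_1, \ldots, \sigma B_m\}$. The side condition on $\mathrm{dom}(\sigma)$ is exactly what makes the last verification go through.

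First I would unpack the hypothesis. By the definition of LP-TM reduction, there is a clause $\kappa : \forall \underline{x}.\, E_1,\ldots,E_m \Rightarrow C \in \Phi$ (with $\underline{x}$ assumed fresh with respect to $A$ and $\sigma$) together with a matcher $\tau$ such that $\tau C \equiv A$ and $B_i \equiv \tau E_i$ for each $i$; all free variables of $C$ and of each $E_i$ lie in $\underline{x}$. Then I would consider the composite substitution $\tau' = \sigma \circ \tau$. Since $\tau C \equiv A$, we immediately have $\tau' C = \sigma(\tau C) \equiv \sigma A$, so the LP-TM rule fires on $\sigma A$ with the same clause $\kappa$, giving $\Phi \vdash \{\sigma A\} \to_{\kappa,\gamma} \{\tau' E_1, \ldots, \tau' E_m\}$ (the stored substitution $\gamma$ is carried along unchanged, since it is just a state annotation).

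It then remains to show $\tau' E_i \equiv \sigma B_i$ for each $i$. Unfolding gives $\tau' E_i = (\sigma \circ \tau) E_i$, and this equals $\sigma(\tau E_i) = \sigma B_i$ by associativity of substitution composition, provided $\sigma$ and $\tau$ interact cleanly on the variables of $E_i$. The only subtle case is that of an existential variable $y \in \underline{x}$ that occurs in some $E_i$ but not in $C$: such a $y$ is not in $\mathrm{dom}(\tau)$ (after renaming $\tau$ acts as the identity on it), so $y \in \mathrm{FV}(B_i)$ while $y \notin \mathrm{FV}(A)$, and therefore $y \in (\bigcup_i \mathrm{FV}(B_i)) \setminus \mathrm{FV}(A)$. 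The hypothesis $\mathrm{dom}(\sigma) \cap ((\bigcup_i \mathrm{FV}(B_i)) \setminus \mathrm{FV}(A)) = \emptyset$ forces $\sigma y = y$ for every such $y$, which is precisely what is needed for $\sigma(\tau E_i) = (\sigma \circ \tau) E_i$ to hold literally.

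The main obstacle is thus bookkeeping for existential variables of the clause; without the side condition, $\sigma$ could rename such a $y$ inside $B_i$ and destroy the equality $\sigma B_i \equiv \tau' E_i$. Once this point is handled by direct appeal to the hypothesis, the remainder of the argument is a routine unfolding of definitions and requires no induction.
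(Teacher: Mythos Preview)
The paper states this lemma without proof, offering only the remark afterwards that the side condition is what makes $\to$ closed under substitution. Your approach---unfold the LP-TM step, compose the original matcher with $\sigma$, and check the resulting multiset---is the natural one, and your identification of existential clause variables as the crux is exactly right.

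There is, however, a slip in where you locate the need for the side condition. You write that it is needed so that $\sigma(\tau E_i) = (\sigma \circ \tau) E_i$ ``holds literally'', but that equality is just the definition of substitution composition and holds unconditionally; if one could really take $\tau' = \sigma \circ \tau$ as the matcher in the new LP-TM step, your argument would go through with no hypothesis on $\sigma$ at all, proving a statement the paper explicitly says is false. The point is rather that, under the paper's convention (used explicitly in the proof of Lemma~\ref{tm-to-unif}, where $\mathrm{dom}(\sigma) \subseteq \mathrm{FV}(C)$ is invoked), the matcher's domain is contained in the head variables $\mathrm{FV}(C)$. Hence the matcher available for the step on $\sigma A$ is really $(\sigma \circ \tau)|_{\mathrm{FV}(C)}$, which acts as the identity on every existential variable $y \in \mathrm{FV}(E_i) \setminus \mathrm{FV}(C)$. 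The result of the new step therefore carries $y$ unchanged, whereas $\sigma B_i$ carries $\sigma y$; the side condition is precisely what forces $\sigma y = y$ and makes the two coincide. Once you relocate the use of the hypothesis to this point, the argument is complete.
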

Note that the above lemma shows that the reduction $\to$ is closed under substitution only under the condition
that $\mathrm{dom}(\sigma) \cap ((\bigcup_i \mathrm{FV}(B_i)) - \mathrm{FV}(A)) = \emptyset$, i.e. the domain of the substitution must not contain any variable that are in $B_i$ but not in $A$ for any $i$, otherwise it will not be the case. If $\Phi \vdash \{A\} \to^\mu \{B_1,..., B_m\}$, 
we write $[A]$ to mean the normal form of $A$, i.e. $B_1,..., B_m$. 

\begin{theorem}[Equivalence of LP-Struct and LP-Unif]\label{prod-non-overlap}
  Suppose $\Phi$ is non-overlapping and LP-TM terminating. 
  \begin{enumerate}
  \item If $\Phi \vdash \{A_1,..., A_n\} \leadsto \{B_1,..., B_m\}$, then $\Phi \vdash \{A_1,..., A_n\} (\to^\mu \cdot \hookrightarrow^1)^* \{C_1,..., C_l\}$ and $\Phi \vdash \{B_1,..., B_m\} \to^* \{C_1,..., C_l\}$.
   \item If $\Phi \vdash \{A_1,..., A_n\} (\to^\mu \cdot \hookrightarrow^1)^* \{B_1,..., B_m\}$, then $\Phi \vdash \{A_1,..., A_n\} \leadsto^* \{B_1,..., B_m\}$.  
  \end{enumerate}
\end{theorem}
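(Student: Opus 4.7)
The plan is to leverage the non-overlapping condition to obtain LP-TM confluence (by orthogonality in the term-rewriting sense), which together with LP-TM termination gives every multiset $X$ a unique $\to$-normal form $[X]$. I would handle the two directions separately.

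Part~2 is essentially Lemma~\ref{lemm:norm} with the trailing-normal-form hypothesis dropped. The given path $\{A_1,\ldots,A_n\} (\to^\mu \cdot \hookrightarrow^1)^* \{B_1,\ldots,B_m\}$ unfolds, per the convention described before the theorem, into $\{A_1,\ldots,A_n\} \to^\mu (\hookrightarrow \cdot \to^\mu)^* \{B_1,\ldots,B_m\}$. I would convert each $\to$ step into an $\leadsto$ step via Lemma~\ref{tm-to-unif} and each $\hookrightarrow \cdot \to^\mu$ block into a run of $\leadsto$ steps via Lemma~\ref{struct-to-unif}, concatenating the fragments to produce the desired $\leadsto^*$ reduction.

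For part~1, let the given LP-Unif step be $\{A_1, \ldots, A_i, \ldots, A_n\} \leadsto_{\kappa, \gamma} \{\gamma A_1, \ldots, \gamma B_1, \ldots, \gamma B_m, \ldots, \gamma A_n\}$ with $\kappa : \underline{B} \Rightarrow C$ and $C \sim_\gamma A_i$. I would take $\{C_1,\ldots,C_l\}$ to be the unique $\to$-normal form of the LP-Unif right-hand side, so that $\{B_1,\ldots,B_m\} \to^* \{C_1,\ldots,C_l\}$ is immediate. For the LP-Struct trace I would split on whether $A_i$ is $\to$-normal. If it is, LP-Struct $\to^\mu$-reduces the other atoms, applies $\hookrightarrow_{\kappa,\gamma}$ at $A_i$ producing $\{\gamma [A_1],\ldots,\gamma A_i,\ldots,\gamma [A_n]\}$, and then $\to^\mu$ again; since $\gamma A_i \equiv \gamma C$, the matcher $\gamma|_{\mathrm{FV}(C)}$ fires a $\to_\kappa$-step that expands $\gamma A_i$ into $\gamma B_1,\ldots,\gamma B_m$, which then normalises to $\{C_1,\ldots,C_l\}$. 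If $A_i$ is not $\to$-normal, non-overlapping forces the LP-TM redex at $A_i$ to use $\kappa$ itself (any other clause would yield heads with a common instance of $C$), so $A_i \equiv \sigma C$ for some matcher $\sigma$, and we may take $\gamma = \sigma$ extended by the identity on the query variables; under the standard clause-renaming convention $\gamma$ then acts as the identity on the other atoms $A_j$ and on any body variables not in $\mathrm{FV}(C)$, so LP-Struct reaches $\{C_1,\ldots,C_l\}$ with $\to^\mu$ alone, without any $\hookrightarrow$.

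The main obstacle is the commutation argument needed to conclude that the LP-Struct trace in the first case actually lands on $\{C_1,\ldots,C_l\}$: applying $\gamma$ must commute with the $\to$-reductions that remain after the $\hookrightarrow$ step. This reduces to iterated application of Lemma~\ref{lem:subst}, whose side condition on fresh body variables is satisfied by the standard clause-renaming convention; confluence of LP-TM then guarantees that both orderings of reductions yield the same normal form. Once this commutation is settled, parts~1 and~2 combine to give the claimed operational equivalence of LP-Struct and LP-Unif under the non-overlapping, LP-TM-terminating hypotheses.
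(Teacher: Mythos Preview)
Your proposal is correct and follows the paper's argument closely: Part~2 is identical (the paper too converts the leading $\to^\mu$ with Lemma~\ref{tm-to-unif} and each $\hookrightarrow\cdot\to^\mu$ block with the argument of Lemma~\ref{struct-to-unif}), and your Part~1 case split on whether $A_i$ is $\to$-normal is equivalent, under non-overlapping, to the paper's split on whether $\gamma D \equiv A_i$. The only organisational difference is that the paper lets the LP-Struct trace \emph{define} $\{C_1,\ldots,C_l\}$ and then uses Lemma~\ref{lem:subst} alone to exhibit a concrete path $\{B_1,\ldots,B_m\}\to^*\{\gamma[A_1],\ldots,\gamma E_1,\ldots,\gamma E_q,\ldots,\gamma[A_n]\}\to^*\{C_1,\ldots,C_l\}$, so no confluence argument is needed; by fixing $\{C_1,\ldots,C_l\}$ upfront as the normal form of $\{B_1,\ldots,B_m\}$ you instead have to invoke confluence to know LP-Struct lands on that particular normal form.
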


\begin{proof}
  \begin{enumerate}
  \item   Suppose $\Phi \vdash \{A_1,..., A_n\} \leadsto_{\kappa, \gamma} \{\gamma A_1,...,\gamma E_1, ..., \gamma E_l,..., \gamma A_n\}$, with $\kappa : \underline{E} \Rightarrow D \in \Phi$ and $D \sim_\gamma A_i$. Suppose $\gamma D \equiv A_i$, we have $\Phi \vdash \{A_1,..., A_n\} \to_{\kappa, \gamma} \{\gamma A_1,...,\gamma E_1, ..., \gamma E_q,..., \gamma A_n\} \to^\mu_\gamma \{C_1,..., C_l\}$. Suppose $\gamma D \not \equiv  A_i$. In this case, we have $\Phi \vdash \{A_1,..., A_n\}  \to^\mu \{[A_1], ..., A_i, ..., [A_n]\} \hookrightarrow_{\kappa, \gamma} \{[\gamma A_1], ..., \gamma A_i, ..., [\gamma A_n]\}  \to_{\kappa, \gamma} \{[\gamma A_1],...,\gamma E_1, ..., \gamma E_q,..., [\gamma A_n]\} \to^\mu_\gamma \{C_1,..., C_l\}$. By Lemma \ref{lem:subst}, we know that
    $\Phi \vdash {\gamma A_1} \to^\mu \{[\gamma A_1]\}$, ..., $\Phi \vdash {\gamma A_n} \to^\mu \{[\gamma A_n]\}$. Thus $\Phi \vdash \{\gamma A_1,...,\gamma E_1, ..., \gamma E_q,..., \gamma A_n\} \to^\mu \{C_1,..., C_l\}$.

    \item We just need to show that if $\{A_1,..., A_n\}$ are $\to$-normal and $\Phi \vdash \{A_1,..., A_n\} \hookrightarrow_{\kappa,\gamma} \{\gamma A_1,..., \gamma A_n\} \to^\mu_\gamma \{B_1,..., B_m\}$, then $\Phi \vdash \{A_1,..., A_n\} \leadsto^*_\gamma \{B_1,..., B_m\}$. Suppose $\Phi \vdash \{A_1,..., A_n\} \hookrightarrow_{\kappa,\gamma} \{\gamma A_1,..., \gamma A_n\} \to^\mu_\gamma \{B_1,..., B_m\}$, we have $\Phi \vdash \{A_1,..., A_n\} \hookrightarrow_{\kappa,\gamma} \{\gamma A_1,..., \gamma A_n\} \to_\kappa \{\gamma A_1,..., \gamma C_1,..., \gamma C_l,..., \gamma A_n\}\to^\mu_\gamma \{B_1,..., B_m\}$ with $\kappa : \underline{C} \Rightarrow D \in \Phi$ and $D \sim_\gamma A_i$. Thus we have
      
\noindent $\Phi \vdash \{A_1,..., A_n\} \leadsto_{\kappa,\gamma}  \{\gamma A_1,..., \gamma C_1,..., \gamma C_l,..., \gamma A_n\}$. By Lemma \ref{tm-to-unif}, we have $\Phi \vdash \{A_1,..., A_n\} \leadsto_{\kappa,\gamma}  \{\gamma A_1,..., \gamma C_1,..., \gamma C_l, ...., \gamma A_n\} \leadsto^*_\gamma \{B_1,..., B_m\}$.

  \end{enumerate}
\end{proof}
Note that the above theorem does not rely on the whole program termination, therefore it establishes equivalence of LP-Unif and LP-Struct even for nonterminating programs 
such as the Stream example, as long
as they are LP-TM terminating and non-overlapping. 
This result has not been described in previous work.

\subsection{Discussion}

Structural resolution was first introduced in Komendantskaya and Power's work 
(\cite{KomendantskayaP11}, \cite{komendantskaya2014}) under the name of coalgebraic logic programming. It was further developed into a resolution method  
 based on resolution trees (called \textit{rewriting trees})  generated by term-matching~(\cite{JKK15}, \cite{KJ15}).
The formulation of LP-Struct in this paper is based on the abstract
reduction system framework, instead of the tree formalism in previous work. As a consequence,
for overlapping logic programs, the reduction-based LP-Struct behaves differently compared to the tree-based formalism (see e.g. Example \ref{ex:overlap}).
The novelty of our development in this section is the articulation of the two conditions that ensure the operational equivalence of LP-Struct and LP-Unif (Theorem \ref{prod-non-overlap}). 


\section{Functionalisation of LP-TM}
\label{s:func}
One of the features of LP-Struct is that it refines SLD-resolution by
  a combination of term-matching and unification. LP-TM itself is used in
  the type class context reduction \cite{Jones97}. The termination behavior 
of LP-TM is of practical interest. For example, termination for the type class inference is essential to achieve decidability of the type inference in languages such as Haskell (\cite{Lammel:2005} Section 5). 

Of course, the termination of LP-TM also implies observational productivity 
in the context of LP-Struct. As explained in Introduction and Section \ref{rt:s}, 
LP-TM termination is not only essential to ensure the equivalence of
LP-Struct and LP-Unif, but also is important to allow viewing the LP-TM reductions within 
 the nonterminating LP-Struct reduction as finite observations. 

On the other hand, termination and nontermination detection are well-studied in the context
of term rewriting. In this section we show a method that reuses the techniques developed in term rewriting
to detect termination of LP-TM. We first define a process called \textit{functionalisation} that transforms 
a set of Horn clauses into a set of rewrite rules, where the execution of a query is seen as a
process of rewriting the query to its proof. As a result, termination and nontermination detection techniques from term rewriting can be applied to LP-TM, assuming the logic program
contains no existential variables.

In this section we work only with the Horn formulas without \textit{existential variables}, i.e.
for any Horn formula $\forall \underline{x}. A_1,..., A_n \Rightarrow B$, we have $\bigcup_i \mathrm{FV}(A_i) \subseteq \mathrm{FV}(B)$.
The restriction that Horn clauses should not contain existential variables comes directly from a similar requirement imposed in term-rewriting.

Since the idea of functionalisation is to view LP-TM resolution for a query as a rewriting process to its proof evidence, the rewriting is defined
on \textit{mixed terms}, i.e. a mixture of atomic formulas and proof evidence. 

\begin{definition}[Mixed Terms]

\[\begin{array}{llll}
  \text{Mixed term} & q \ & ::= & \ A \ |\ \kappa \ | \ q \ q' \\
  \text{Mixed term context} & \mathcal{C} \ & ::= & \ \bullet \ | \ \mathcal{C}\ q \ | \ q\ \mathcal{C}\\
\end{array}
\]
\end{definition}
Note that $\mathcal{C}$ can be ground. Let $\mathcal{C}[q_1, ..., q_n]$ mean replacing all the $\bullet$ in $\mathcal{C}$ from left to right by $q_1, ..., q_n$.

\begin{definition}[Functionalisation]
  \label{label}
We can construct a set of rewrite rule $K(\Phi)$ from a set of axioms $\Phi$ as follows.  For each $\kappa :\forall \underline{x}. A_1,..., A_n \Rightarrow B \in \Phi$, we define a rewrite rule $B \to \kappa\ A_1...\ A_n \in K(\Phi)$ on mixed terms. We call $\kappa$ an axiom symbol.
\end{definition}
Note that the evidence constant $\kappa$ for $A_1,..., A_n \Rightarrow B$ becomes a
mixed term function symbol of arity $n$, with $A_1,..., A_n$ as its arguments, which is denoted by $\kappa\ A_1...\ A_n$. 

\begin{definition}
  We define a relation $\mathcal{C}\to \mathcal{C}'$ to mean that $\mathcal{C}'$ can be obtained from $\mathcal{C}$ by replacing a $\bullet$ in $\mathcal{C}$ by some $\mathcal{C}_1$, where $\mathcal{C}_1 \not \equiv \bullet$. 
We also write the reflexive and transitive closure of this relation as $\to^*$.
\end{definition}

The following lemmas show that each LP-TM step corresponds exactly to a rewrite step after functionalisation. As a consequence, it is possible to determine the termination behavior of
LP-TM by analyzing the corresponding term rewriting system.  
\begin{lemma}
\label{short}
$\Phi \vdash \{A_1 , ..., A_n\} \to \{A_1, ..., \sigma B_1, ..., \sigma B_m, ..., A_n\}$, where $\kappa: \forall \underline{x}. B_1 , ..., B_m \Rightarrow B \in \Phi$ and $\sigma B \equiv A_i$ iff $\mathcal{C}[A_1, ..., A_i, ..., A_n] \to \mathcal{C}'[A_1,...,\sigma B_1, ..., \sigma B_m, ..., A_n]$, where $\mathcal{C}, \mathcal{C}'$ do not contain any atomic formulas and $\mathcal{C} \to \mathcal{C}'$.
\end{lemma}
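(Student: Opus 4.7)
The plan is to prove both directions by unpacking Definition~\ref{label} and the context relation $\to$. Both sides should amount to tracking the same local change: one LP-TM step on a multiset of atoms corresponds exactly to inserting $\kappa\,\bullet\,\ldots\,\bullet$ with $m$ holes into a single hole of some atom-free mixed-term context.

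For the forward direction, I would assume the LP-TM step using $\kappa : \forall \underline{x}.\, B_1, \ldots, B_m \Rightarrow B \in \Phi$ with $\sigma B \equiv A_i$. By Definition~\ref{label}, $K(\Phi)$ contains the rule $B \to \kappa\, B_1 \ldots B_m$. I would then exhibit witnesses: take any atom-free context $\mathcal{C}$ with $n$ holes arranged to receive $A_1, \ldots, A_n$ in that order (any such witness suffices for an existence claim), and define $\mathcal{C}'$ by replacing the $i$-th hole of $\mathcal{C}$ by $\kappa\,\bullet\,\ldots\,\bullet$ with $m$ new holes. Then $\mathcal{C} \to \mathcal{C}'$ holds since $\kappa\,\bullet\,\ldots\,\bullet \not\equiv \bullet$, and applying the rewrite rule at position $i$ under substitution $\sigma$ gives the required mixed-term step $\mathcal{C}[A_1, \ldots, A_i, \ldots, A_n] \to \mathcal{C}'[A_1, \ldots, \sigma B_1, \ldots, \sigma B_m, \ldots, A_n]$.

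For the backward direction, I would assume atom-free $\mathcal{C}, \mathcal{C}'$ with $\mathcal{C} \to \mathcal{C}'$ and the rewrite step on the plugged-in mixed terms. Since $\mathcal{C}$ contains only axiom symbols and holes, the redex of this single step must sit at one of the plugged-in atoms; matching the plugged sequences on each side pins the redex to the $i$-th hole, which holds $A_i$. Every rule in $K(\Phi)$ has an atomic head as its left-hand side by construction, so the rule used must be of the form $B \to \kappa\, B_1 \ldots B_m$ with $\sigma B \equiv A_i$; by Definition~\ref{label} this rule comes from exactly one axiom $\kappa : \forall \underline{x}.\, B_1, \ldots, B_m \Rightarrow B \in \Phi$, which in turn delivers the LP-TM step on the multiset side.

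I do not foresee a serious obstacle. The only point that needs care is on the backward side: one must argue that the substructure $\mathcal{C}_1$ witnessing $\mathcal{C} \to \mathcal{C}'$ is forced to be exactly $\kappa\,\bullet\,\ldots\,\bullet$, which follows because $\mathcal{C}, \mathcal{C}'$ are atom-free (so the redex cannot be inside $\mathcal{C}$ itself) and only a single rewrite step is performed. Once that is established, the correspondence between axioms in $\Phi$ and rewrite rules in $K(\Phi)$ makes both directions essentially a direct reading of the definitions.
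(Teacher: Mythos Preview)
Your proposal is correct and follows essentially the same approach as the paper: both directions are obtained by invoking Definition~\ref{label} to pass between axioms $\kappa:\forall\underline{x}.\,B_1,\ldots,B_m\Rightarrow B\in\Phi$ and rewrite rules $B\to\kappa\,B_1\ldots B_m\in K(\Phi)$, with $\mathcal{C}'$ obtained from $\mathcal{C}$ by replacing the $i$th $\bullet$ by $\kappa\,\bullet_1\ldots\bullet_m$. The paper compresses this into two sentences, while you spell out the backward direction (locating the redex at a plugged atom and identifying the inserted substructure) more carefully, but the underlying argument is the same.
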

\begin{proof}

  By Definition \ref{label},  $\kappa: \forall \underline{x}.  B_1 , ..., B_m \Rightarrow A_i \in \Phi$ implies $A_i \to \kappa\ B_1\ ...\ B_m \in K(\Phi)$, and vice versa. So $\mathcal{C}'$ can be obtained by replacing the $i$th $\bullet$ in $\mathcal{C}$ by $\kappa\ \bullet_1\ ...\ \bullet_m$.
\end{proof}

\begin{lemma}
\label{long}
$\Phi \vdash \{A_1,..., A_n\} \to^* \{C_1, ..., C_l\}$ iff $\mathcal{C}[A_1, ..., A_n] \to^* \mathcal{C}'[C_1, ..., C_l]$, where $\mathcal{C} \to^* \mathcal{C}'$ and $\mathcal{C}, \mathcal{C}'$ do not contain any atomic formulas.
\end{lemma}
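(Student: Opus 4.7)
The plan is to prove the biconditional by induction on the length of the reduction in each direction, using Lemma \ref{short} as the one-step correspondence. Throughout, one maintains the invariant that after $k$ rewrite steps from $\mathcal{C}[A_1,\ldots,A_n]$, the resulting mixed term is still of the shape ``atomic-formula-free context plugged with the current atomic formulas'', matching exactly the state of the LP-TM reduction.

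For the forward direction, I would induct on the number $k$ of LP-TM steps. The base case $k=0$ is immediate: take $\mathcal{C}'=\mathcal{C}$, any atomic-formula-free context with $n$ bullets. For the step case, write the reduction as $\Phi\vdash\{A_1,\ldots,A_n\}\to\{A_1,\ldots,\sigma B_1,\ldots,\sigma B_m,\ldots,A_n\}\to^{k-1}\{C_1,\ldots,C_l\}$, apply Lemma \ref{short} to the first step to obtain a single rewrite $\mathcal{C}[A_1,\ldots,A_n]\to\mathcal{C}_1[A_1,\ldots,\sigma B_1,\ldots,\sigma B_m,\ldots,A_n]$ with $\mathcal{C}\to\mathcal{C}_1$ and $\mathcal{C}_1$ atomic-formula-free (taking $\mathcal{C}_1$ to be $\mathcal{C}$ with its $i$-th bullet replaced by $\kappa\,\bullet_1\ldots\bullet_m$), then invoke the IH on the remaining reduction starting from $\mathcal{C}_1$ and compose.

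For the backward direction, I would induct on the length of the mixed-term rewrite sequence. The zero-length case is trivial. For a first step $\mathcal{C}[A_1,\ldots,A_n]\to q$, the key observation is that the redex must occur at one of the $A_i$ positions: every LHS in $K(\Phi)$ is an atomic formula, and since $\mathcal{C}$ contains no atomic formulas the only atomic-formula subterms of $\mathcal{C}[A_1,\ldots,A_n]$ are exactly the $A_i$'s themselves (application subterms $q\,q'$ have the wrong syntactic shape to unify with any rule LHS). Lemma \ref{short} (the ``if'' part) then converts this single rewrite into an LP-TM step, and the resulting mixed term has the required plugged form $\mathcal{C}_1[A_1,\ldots,\sigma B_1,\ldots,\sigma B_m,\ldots,A_n]$ with $\mathcal{C}\to\mathcal{C}_1$ and $\mathcal{C}_1$ still atomic-formula-free. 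Applying the IH to the remaining rewrite sequence finishes the case, and transitivity of $\to^*$ on contexts yields $\mathcal{C}\to^*\mathcal{C}'$.

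The main obstacle I anticipate is purely bookkeeping: establishing and carrying the invariant that after each rewrite the ambient context remains atomic-formula-free and the atomic formulas of the mixed term coincide with the current LP-TM goal multiset. Once this invariant is made explicit, the induction reduces each step to a direct application of Lemma \ref{short}, so no new ideas beyond the one-step case are needed.
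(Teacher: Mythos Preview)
Your proposal is correct and follows essentially the same route as the paper: induction on the length of the reduction in each direction, peeling off the first step via Lemma~\ref{short} and applying the inductive hypothesis to the remainder. Your explicit articulation of the invariant and of why any redex must sit at an $A_i$ position is a welcome clarification, but the underlying argument is identical to the paper's.
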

\begin{proof}
  We prove both direction together. By induction on the length of $\to^*$.

\noindent Base Case. By Lemma \ref{short}.

\noindent Step Case.

\noindent \textit{Left to Right}: 

\noindent Suppose $\Phi \vdash \{A_1,..., A_n\} \to \{A_1, ...,\sigma B_1, ..., \sigma B_m, ..., A_n\} \to^* \{C_1, ..., C_l\}$, with $\kappa : \forall \underline{x}. B_1, ..., B_m \Rightarrow B \in \Phi$ and $\sigma B \equiv A_i$. Then we know $\mathcal{C}[A_1,..., A_n] \to \mathcal{C}''[A_1, ..., \sigma B_1, ..., \sigma B_m, ..., A_n]$. Also, $\mathcal{C} \to \mathcal{C}''$, where $\mathcal{C}''$ can be obtained from $\mathcal{C}$ by replacing its $i$th $\bullet$ by $\kappa \ \bullet_1 ... \ \bullet_m$. By IH, 
$\mathcal{C}''[A_1, ..., \sigma B_1, ..., \sigma B_m, ..., A_n] \to^* \mathcal{C}'[C_1, ..., C_l]$ with $\mathcal{C}'' \to^* \mathcal{C}'$. So $\mathcal{C}[A_1,..., A_n]  \to^* \mathcal{C}'[C_1, ..., C_l]$ with $\mathcal{C} \to^* \mathcal{C}'$. 

\noindent \textit{Right to Left}: 

\noindent Suppose $\mathcal{C}[A_1,..., A_n] \to \mathcal{C}''[A_1, ..., \sigma B_1, ..., \sigma B_m, ..., A_n] \to^* \mathcal{C}'[C_1, ..., C_l]$ with $\mathcal{C}\to \mathcal{C}'' \to^* \mathcal{C}'$, where $B \to \kappa \ B_1\ ... \ B_m$ and $\sigma B \equiv A_i$. So $\forall \underline{x}. B_1, ..., B_m \Rightarrow B \in \Phi$ and $\sigma B \equiv A_i$. Thus $\Phi \vdash \{A_1,..., A_n\} \to \{A_1, ...,\sigma B_1, ..., \sigma B_m, ..., A_n\}$. By IH, $\Phi \vdash \{A_1, ...,\sigma B_1, ..., \sigma B_m, ..., A_n\} \to^* \{C_1, ..., C_l\}$. Thus, we have $\Phi \vdash \{A_1,..., A_n\} \to^* \{C_1, ..., C_l\}$.
\end{proof}

\begin{theorem}
 $\Phi \vdash \{A\} \to^* \emptyset$ iff $A \to^* e$, and $e$ is a ground
  evidence. As a consequence, the query $A$ is LP-TM (non)terminating iff $A$
  is (non)terminating for $K(\Phi)$.
\end{theorem}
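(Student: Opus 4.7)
The plan is to derive both claims as consequences of Lemma~\ref{long}, the bijective correspondence between LP-TM steps on multisets and rewrite steps on mixed terms embedded in an ``evidence skeleton'' context $\mathcal{C}$.

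For the main biconditional, I would instantiate Lemma~\ref{long} with the trivial starting context $\mathcal{C} = \bullet$, so that $\mathcal{C}[A] = A$. Suppose first that $\Phi \vdash \{A\} \to^* \emptyset$. Then Lemma~\ref{long} (left-to-right) yields $A \to^* \mathcal{C}'[]$ for some context $\mathcal{C}'$ containing no atomic formulas and no placeholders (since the list of atomic formulas to fill in is empty). But a context with no $\bullet$ slots and no atomic formulas is just a mixed term built from axiom symbols $\kappa$ by application, i.e., a ground proof evidence $e$. Conversely, if $A \to^* e$ with $e$ a ground evidence, then $e$ contains no atomic formulas and no $\bullet$, so we can write $e = \mathcal{C}'[]$ and apply Lemma~\ref{long} (right-to-left) to conclude $\Phi \vdash \{A\} \to^* \emptyset$.

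For the termination consequence, I would argue by contrapositive using non-termination. If $A$ is LP-TM non-terminating, there exists an infinite sequence $\Phi \vdash \{A\} \to \{\underline{B}^1\} \to \{\underline{B}^2\} \to \cdots$. Applying Lemma~\ref{short} step by step, each single LP-TM step lifts to a single rewrite step $\mathcal{C}^i[\underline{B}^i] \to \mathcal{C}^{i+1}[\underline{B}^{i+1}]$ in $K(\Phi)$; concatenating starting from $\bullet[A] = A$ yields an infinite rewrite sequence from $A$, so $A$ is non-terminating for $K(\Phi)$. Conversely, suppose $A$ admits an infinite rewrite sequence in $K(\Phi)$. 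Every rule in $K(\Phi)$ has shape $B \to \kappa\, A_1\, \ldots\, A_n$, so its left-hand side is always an atomic formula; no rewriting can occur at a position headed by an axiom symbol $\kappa$. Consequently every step in the infinite rewrite sequence is the contraction of some atomic-formula occurrence, and by Lemma~\ref{short} each such rewrite step corresponds to a single LP-TM step on the multiset of those atomic-formula occurrences. Assembling these gives an infinite LP-TM derivation starting from $\{A\}$.

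The only subtlety I anticipate is the bookkeeping for the infinite direction: one must check that the contexts $\mathcal{C}^i$ produced at successive steps genuinely chain together into one coherent infinite rewrite (and, symmetrically, that the atomic-formula occurrences at each rewrite step line up into a well-defined infinite LP-TM derivation on the underlying multiset). Both directions are essentially mechanical once Lemma~\ref{short} is iterated, but it is worth stating the induction on finite prefixes carefully, since non-termination is not itself stated in Lemma~\ref{long}. Other than that, no new unification- or matching-level reasoning is required: everything needed is already packaged in Lemma~\ref{short} and Lemma~\ref{long}, and the theorem is essentially a clean corollary of them together with the observation that the rewrite rules of $K(\Phi)$ never fire inside proof-evidence subterms.
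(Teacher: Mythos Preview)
Your proposal is correct and follows the same approach as the paper, which simply writes ``By Lemma~\ref{long}.'' You have correctly unpacked what that one-line proof entails: instantiating the context with $\mathcal{C}=\bullet$ for the biconditional, and observing that a hole-free, atom-free context is exactly a ground evidence term. Your additional care for the (non)termination consequence---iterating Lemma~\ref{short} on finite prefixes and noting that every $K(\Phi)$-redex is an atomic formula so no rewriting happens under an axiom symbol---is a genuine improvement over the paper, which leaves that passage from finite correspondence (Lemma~\ref{long}) to infinite (non)termination entirely implicit.
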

\begin{proof}
By Lemma \ref{long}. 
\end{proof}

In practice, functionalisation can also be used to implement
LP-TM, especially if computing the proof evidence is the only goal. For example, this is the case for 
type class inference \cite{flops16}. 

Now we demonstrate how to apply
a convenient termination technique in term rewriting called \textit{dependency pair} 
method \cite{Arts2000} to analyze the termination behavior of LP-TM. 
\begin{definition}
  We define the dependency pairs generated from $K(\Phi)$ to be $E(K(\Phi)) = \{ B \to A_i\ |\ B \to \kappa \ A_1 \ ...\ A_n \in K(\Phi)\}$.
\end{definition}
\begin{definition}
A (potentially infinite) sequence of pairs $q_1 \to q'_1, q_2 \to q'_2, ... $ in $E(K(\Phi))$ is
a $K(\Phi)$-chain iff there is a substitution $\sigma$ with $\sigma q_i' \equiv \sigma q_{i+1}$ for all $i$.   
\end{definition}

The above definition of $K(\Phi)$-chain is using the condition $\sigma q'_i \equiv \sigma q_{i+1}$ instead of $\sigma q'_i \to^* \sigma q_{i+1}$ in term rewriting \cite{Arts2000}. Since the dependency pairs generated from a logic program will always be in the form of $A \to B$, where $A, B$ are atomic formulas, rewriting under the predicate is not possible.
This greatly simplifies the termination detection for LP-TM. 

\begin{theorem}[Arts-Giesl \cite{Arts2000}]
\label{arts}
  $K(\Phi)$ is terminating iff no infinite $K(\Phi)$-chain exist.
\end{theorem}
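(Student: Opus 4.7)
The plan is to derive the theorem by invoking the general Arts-Giesl dependency pair theorem for term rewriting systems, after first reconciling the paper's specialised $K(\Phi)$-chain condition (syntactic equality $\sigma q'_i \equiv \sigma q_{i+1}$) with the standard one ($\sigma q'_i \to^*_R \sigma q_{i+1}$). I would first observe that $K(\Phi)$, as produced by functionalisation, is a well-formed TRS on mixed terms: the restriction to Horn formulas without existential variables, imposed at the start of Section \ref{s:func}, guarantees that every rule $B \to \kappa\,A_1\,\cdots\,A_n$ has its right-hand side variables contained in those of the left-hand side, so the general dependency pair method applies. The general theorem then yields that $K(\Phi)$ is terminating iff there is no infinite sequence of dependency pairs $q_1 \to q'_1, q_2 \to q'_2, \ldots$ from $E(K(\Phi))$ together with a substitution $\sigma$ satisfying $\sigma q'_i \to^*_{K(\Phi)} \sigma q_{i+1}$ for all $i$.

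The next step is to argue that in our setting this reducibility condition collapses to syntactic equality. The structural observation, already flagged in the paragraph preceding the theorem, is that every rule in $K(\Phi)$ has an atomic formula $B$ on the left and a mixed term headed by an axiom symbol $\kappa$ on the right. Every dependency pair in $E(K(\Phi))$ has atomic formulas on both sides, so $\sigma q'_i$ and $\sigma q_{i+1}$ are atomic formulas of the form $P(t_1,\ldots,t_m)$. Any root rewrite of $\sigma q'_i$ produces a term whose head is an axiom symbol, which is no longer an atomic formula; and no rule of $K(\Phi)$ can match strictly inside a term argument $t_j$, since all left-hand sides are atomic formulas rather than pure first-order terms. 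Therefore the only reduction sequence between two atomic formulas is the empty one, forcing $\sigma q'_i \equiv \sigma q_{i+1}$.

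The main obstacle I expect is making this positional argument fully rigorous: one has to verify by a small induction on positions within a mixed term that the redex sites for $K(\Phi)$ are exactly those positions where an atomic formula stands in its entirety, so that in particular no rewrite step can ever occur strictly below a predicate symbol. Once the two chain notions are identified on $K(\Phi)$, both directions of the biconditional follow immediately from Arts-Giesl: termination of $K(\Phi)$ rules out every infinite standard chain and hence every infinite $K(\Phi)$-chain; conversely, an infinite $K(\Phi)$-chain is \emph{a fortiori} an infinite standard chain, and its existence then contradicts termination via Arts-Giesl.
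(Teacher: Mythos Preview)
Your proposal is correct and matches the paper's approach. The paper does not give a proof at all: it simply cites Arts--Giesl \cite{Arts2000} and relies on the short remark preceding the theorem (that dependency pairs here are always of the form $A \to B$ with $A,B$ atomic, so ``rewriting under the predicate is not possible'') to justify replacing the standard condition $\sigma q_i' \to^* \sigma q_{i+1}$ by syntactic equality. Your write-up spells out that justification in more detail than the paper does, but the underlying argument is the same.
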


Theorem \ref{arts} allows us to detect the termination of $K(A)$ by looking at the possible $K(\Phi)$-chain.

\begin{example}
Consider the following program $\Phi$:

\begin{center}
\noindent $\kappa_1 : \ \Rightarrow P({Int})$  

\noindent $\kappa_2 : \forall x . P(x), P({List}(x)) \Rightarrow P({List}(x))$
\end{center} 
The dependency pairs of $\Phi$ are $P({List}(x)) \to P({List}(x))$ and $P({List}(x)) \to P(x)$.
We can see $P({List}(x)) \to P({List}(x))$ can form an infinite $E(K(\Phi))$-chain, thus $K(\Phi)$ is not terminating. So $\Phi$ is not LP-TM terminating. 

\end{example}
 
\section{Realizability Transformation and LP-Struct}
\label{s:real}
Functionalisation provides a way to detect LP-TM termination for 
LP-Struct. But sometimes there are logic programs that are not LP-TM terminating but are still 
meaningful from the LP-Unif perspective (cf. Example \ref{ex:conn}). For these programs, we still
want to be able to use LP-Struct. To solve this problem, 
we define a meaning preserving \textit{realizability transformation} that transforms
any logic program into LP-TM terminating one. 

Realizability \cite{KleeneSC:1952}(\S 82) is a technique that uses a number representing the proof of a number-theoretic formula. The transformation described here is similar in the sense that we use a first-order term to represent the proof of a Horn formula. 
More specifically, we use a first-order term as an extra argument for Horn formula to represent a proof of that formula. 

Lemma \ref{fst:lambda} and Theorem \ref{fst} show that  we can use a first-order term to represent a normalized proof evidence.

\begin{definition}[Representing First-Order Proof Evidence]
\label{fst:rep}
  Let $\phi$ be a mapping from proof evidence variables to first-order terms. We define 
a representation function $\interp{\cdot}_\phi$ from first-order normal proof evidence to first-order terms. 
  
\begin{itemize}
\item $\interp{a}_\phi = \phi(a)$.

\item $\interp{\kappa \ p_1 ...p_n}_\phi =
  K_{\kappa}(\interp{p_1}_\phi,..., \interp{p_n}_\phi)$, where
  $K_\kappa$ is a function symbol.
\end{itemize}

\end{definition}

  Let $A \equiv P(t_1,..., t_n)$ be an atomic formula and $t'$ be a term such that $(\bigcup_i \mathrm{FV}(t_i)) \cap \mathrm{FV}(t') = \emptyset$, we write
  $A[t']$ to 
  abbreviate a new atomic formula $P(t_1,..., t_n, t')$.

\begin{definition}[Realizability Transformation]
\label{real}
  We define a transformation $F$ on Horn formula and its normalized proof evidence: 
  \begin{itemize}
  \item $F(\kappa : \forall \underline{x} . A_1, ..., A_m \Rightarrow B) = \kappa : \forall \underline{x} . \forall \underline{y}. A_1[y_1], ..., A_m[y_m] \Rightarrow B[K_\kappa(y_1,...,y_m)]$, where $y_1,..., y_m$ are all fresh and distinct.
  \item $F(\lambda \underline{a} . n : [\forall \underline{x}] . A_1, ..., A_m \Rightarrow B) = \lambda \underline{a} . n : [\forall \underline{x}.\forall \underline{y}]. A_1[y_1], ..., A_m[y_m] \Rightarrow B[\interp{n}_{[\underline{y}/\underline{a}]}]$, where $y_1,..., y_m$ are all fresh and distinct.
  \end{itemize}
     
\end{definition}


The realizability transformation systematically associates a proof to each predicate,
so that the proof can be recorded alongside with reductions. 
Let $F(\Phi)$ mean applying the realizability transformation to every axiom in $\Phi$.
\begin{example}
 \label{ex:conn:real0}
The following logic program $F(\Phi)$ is the result of applying realizability transformation on
the program $\Phi$ in Example \ref{ex:conn}.

  \begin{center}
  $  \kappa_1 : \forall x . \forall y . \forall u_1. \forall u_2 . {Connect}(x, y, u_1), {Connect}(y, z, u_2) \Rightarrow {Connect}(x, z, K_{\kappa_1}(u_1, u_2))$

  $\kappa_2 : \ \Rightarrow {Connect}({Node_1}, {Node_2}, K_{\kappa_2})$
  
  $\kappa_3 : \ \Rightarrow {Connect}({Node_2}, {Node_3}, K_{\kappa_3})$
    \end{center}

\noindent Before the realizability transformation, we have the following judgement in \textbf{H}:

\begin{center}
  $\Phi \vdash \lambda b. (\kappa_1\ b)\ \kappa_2 :
  {Connect}({Node_2}, z) \Rightarrow
  {Connect}({Node_1}, z)$
\end{center}

\noindent We can apply the transformation, we get: 

\begin{center}
  $F(\Phi) \vdash \lambda b. (\kappa_1\ b)\ \kappa_2 :
  {Connect}({Node_2}, z, u_1) \Rightarrow
  {Connect}({Node_1}, z, \interp{(\kappa_1\ b)\ \kappa_2}_{[u_1/b]})$
\end{center}

\noindent which is the same as

\begin{center}
  $F(\Phi) \vdash \lambda b. (\kappa_1\ b)\ \kappa_2 :
  {Connect}({Node_2}, z, u_1) \Rightarrow
  {Connect}({Node_1}, z, K_{\kappa_1}( u_1, K_{\kappa_2}))$
\end{center}


\end{example}

 We write $(F(\Phi), \leadsto)$, to
mean given axioms $F(\Phi)$, use LP-Unif to reduce a given query. Note that for a query $A$ in $(\Phi, \leadsto)$, it becomes a query $A[t]$ for some $t$ such that $\mathrm{FV}(A) \cap \mathrm{FV}(t) = \emptyset$ in $(F(\Phi), \leadsto)$.

 The following theorem shows that realizability transformation does not change the type-theoretic meaning of a program. This is important because it means we can apply different resolution strategies to resolve the query on the transformed program without worrying about the change of meaning. Later we will see that the behavior of LP-Struct
is different for the original program and the transformed program.

\begin{theorem}\label{th6}
\label{realI}
If $\Phi \vdash e: [\forall \underline{x}] . \underline{A}\Rightarrow B$ in $\mathbf{C}$ and $e$
normalized to $n$, then $F(\Phi) \vdash F(n : [\forall \underline{x}] . \underline{A}\Rightarrow B)$ in $\mathbf{H}$. 
\end{theorem}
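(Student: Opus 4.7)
My plan is to reduce the theorem to a structural induction on the normal form $n$ of $e$. By Theorem~\ref{real:sn}, $e$ strongly normalises; by the type preservation of $\mathbf{H}$ (into which $\mathbf{C}$ embeds), the normal form $n$ inherits the type $[\forall \underline{x}].\underline{A}\Rightarrow B$; and by Lemma~\ref{fst:lambda}, $n$ is either a constant $\kappa$, a variable $a$, or of the shape $\lambda\underline{a}.n'$ with $n'$ a first-order normal proof evidence. The first two cases force $\underline{A}$ to be empty (the type is atomic up to quantifiers) and reduce to small base computations; the real content is in the $\lambda$-case, which I would handle via a strengthened lemma.

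The key lemma to prove is: if $\Phi,\,\underline{a}:\underline{A}\vdash n':B$ with $n'$ first-order normal and $B$ atomic, then for any fresh distinct variables $\underline{y}=y_1,\ldots,y_m$ in bijection with $\underline{a}$,
\[
F(\Phi),\,\underline{a}:A_1[y_1],\ldots,A_m[y_m]\ \vdash\ n'\ :\ B\bigl[\interp{n'}_{[\underline{y}/\underline{a}]}\bigr]
\]
holds in $\mathbf{H}$. I would prove this by induction on $n'$. In the variable case $n'=a_i$, the typing forces $B=A_i$ and $\interp{a_i}_{[\underline{y}/\underline{a}]}=y_i$, so the conclusion is immediate by \textsc{Var}. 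In the application case $n'=\kappa\,p_1\cdots p_l$, inversion yields $\kappa:\forall\underline{x'}.C_1,\ldots,C_l\Rightarrow D\in\Phi$ together with a substitution $\sigma$ on $\underline{x'}$ satisfying $\sigma D\equiv B$ and $\Phi,\underline{a}:\underline{A}\vdash p_j:\sigma C_j$. By the inductive hypothesis, $F(\Phi),\underline{a}:\underline{A}[\underline{y}]\vdash p_j:(\sigma C_j)\bigl[\interp{p_j}_{[\underline{y}/\underline{a}]}\bigr]$ for each $j$. The transformed axiom in $F(\Phi)$ is $F(\kappa):\forall\underline{x'}.\forall\underline{z}.\,C_1[z_1],\ldots,C_l[z_l]\Rightarrow D[K_\kappa(z_1,\ldots,z_l)]$. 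I would instantiate $\underline{x'}$ by $\sigma$ and $\underline{z}$ by $\interp{p_1}_{[\underline{y}/\underline{a}]},\ldots,\interp{p_l}_{[\underline{y}/\underline{a}]}$ using \textsc{Inst}, then apply to $p_1,\ldots,p_l$ by \textsc{App}. This yields a derivation of type $B\bigl[K_\kappa\bigl(\interp{p_1}_{[\underline{y}/\underline{a}]},\ldots,\interp{p_l}_{[\underline{y}/\underline{a}]}\bigr)\bigr]$, which by Definition~\ref{fst:rep} is exactly $B\bigl[\interp{\kappa\,p_1\cdots p_l}_{[\underline{y}/\underline{a}]}\bigr]$, as required.

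From the lemma the theorem falls out: for $n=\lambda\underline{a}.n'$, apply \textsc{Abs} $m$ times followed by \textsc{Gen} over $\underline{x}\cup\underline{y}$ to obtain $F(\Phi)\vdash F(n:[\forall\underline{x}].\underline{A}\Rightarrow B)$; the two degenerate cases ($n=\kappa$ or $n=a$) are the base cases of the lemma with $\underline{a}$ empty, giving $F(\Phi)\vdash n:B[\interp{n}_\emptyset]$ matching the second clause of Definition~\ref{real} in the degenerate form.

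The hard part, as I see it, is the bookkeeping of three interleaved substitutions in the inductive step: the object-level $\sigma$ instantiating $\underline{x'}$, the realiser-level $[\underline{y}/\underline{a}]$ driving the interpretation function $\interp{\cdot}$, and the fresh $\underline{z}\mapsto\interp{p_j}$ arising from instantiating the transformed axiom. One must verify two commutations: first, that $(\sigma C_j)[\interp{p_j}]=(\sigma(C_j[z_j]))[\interp{p_j}/z_j]$, which holds because the $z_j$ were introduced fresh and avoid $\underline{x'}$ and $\mathrm{FV}(C_j)$; and second, that no $y_i$ is captured when we substitute the ground-evidence-terms $\interp{p_j}_{[\underline{y}/\underline{a}]}$ into the head $D[K_\kappa(\underline{z})]$. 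A secondary subtlety is that we must carry out the argument in $\mathbf{H}$ rather than $\mathbf{C}$, because $\mathbf{C}$ is not type-preserving; this is the reason the theorem first normalises $e$ to $n$ in $\mathbf{H}$ before applying the transformation.
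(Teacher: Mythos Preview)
Your proposal is correct, but it takes a genuinely different route from the paper's own proof. The paper proceeds by induction on the $\mathbf{C}$-derivation of $\Phi \vdash e : [\forall \underline{x}].\underline{A}\Rightarrow B$: the base case is \textsc{Axiom}, and the main step case is \textsc{Cut}, where one applies the induction hypothesis to each premise (after case-splitting on whether $e_1,e_2$ normalise to a constant or to a $\lambda$-abstraction, via Lemma~\ref{fst:lambda}), then uses \textsc{Gen}/\textsc{Inst} to align the realiser argument in the head with $\interp{n_1}$, and finally reapplies \textsc{Cut} in $F(\Phi)$. The \textsc{Inst} and \textsc{Gen} cases are handled directly. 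You instead first pass to the normal form $n$, extract an open typing $\Phi,\underline{a}{:}\underline{A}\vdash n':B$ in $\mathbf{H}$, and prove an open-context realizability lemma by structural induction on the first-order normal term $n'$.

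What each approach buys: the paper's induction on $\mathbf{C}$-derivations avoids inversion entirely, since the \textsc{Cut} rule already presents the two subproofs with compatible interface types; the price is the four-way case split on the shapes of $e_1,e_2$. Your approach is structurally cleaner (just two cases: variable and head-constant application) and makes the connection between $\interp{n'}$ and the derivation more transparent; the price is that your application case relies on a generation-style inversion to extract a \emph{single} substitution $\sigma$ with $\sigma D\equiv B$ and $p_j:\sigma C_j$ for all $j$ simultaneously. The paper's stated inversion lemma for \textsc{App} only gives one substitution per application node, so strictly speaking you need to argue (by a routine induction, using \textsc{Gen}/\textsc{Inst} and freshness of the axiom's bound variables $\underline{x'}$) that these compose into one uniform $\sigma$. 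You already flag the substitution bookkeeping as the delicate point, so this is a detail to spell out rather than a gap.
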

\begin{proof}
  By induction on the derivation of $\Phi \vdash e : [\forall \underline{x}] . \underline{A}\Rightarrow B$.
  \begin{itemize}
  \item Base Case. 

   \

    \begin{tabular}{l}
     \infer{\Phi \vdash \kappa : \forall \underline{x} . \underline{A}\Rightarrow B}{(\kappa : \forall \underline{x} . \underline{A}\Rightarrow B) \in \Phi}      

    \end{tabular}
   
\

 In this case, we know that $F(\kappa : \forall \underline{x} . \underline{A}\Rightarrow B) = \kappa : \forall \underline{x}. \forall \underline{y} . A_1[y_1], ..., A_n[y_n] \Rightarrow B[f_\kappa (y_1,..., y_n)] \in F(\Phi)$. 

\item Step Case. 

\

\begin{tabular}{l}
\infer{\Phi \vdash \lambda \underline{a} . \lambda \underline{b} . (e_2\ \underline{b})\ (e_1\ \underline{a}) : \underline{A}, \underline{B} \Rightarrow C}{\Phi \vdash e_1 : \underline{A} \Rightarrow D & \Phi \vdash e_2 : \underline{B}, D \Rightarrow C}
  
\end{tabular}

\

By Lemma \ref{fst:lambda}, we know that the normal form of $e_1$ is $\kappa_1$ or $\lambda \underline{a}. n_1$, and the normal form of $e_1$ is $\kappa_2$ or $\lambda \underline{b} d. n_2$, with $n_1, n_2$ are first-order. 
\begin{itemize}
\item $e_1 \equiv \kappa_1, e_2 \equiv \kappa_2$. By IH, we know that $\Phi \vdash \kappa_1 : A_1[y_1],..., A_n[y_n] \Rightarrow D[f_{\kappa_1}(y_1,..., y_n)] $ and 
$\Phi \vdash \kappa_2 : B_1[z_1],..., B_m[z_m], D[y] \Rightarrow C[f_{\kappa_2}(z_1,..., z_m, y)]$. So by \textsc{Gen} and \textsc{Inst}, we have

\noindent $\Phi \vdash \kappa_2 : B_1[z_1],..., B_m[z_m], D[f_{\kappa_1}(y_1,..., y_n)] \Rightarrow C[f_{\kappa_2}(\underline{z}, f_{\kappa_1}(\underline{y}))]$. 

\noindent Then by the \textsc{Cut} rule, we have 

\noindent $\Phi \vdash \lambda \underline{a} . \lambda \underline{b} . \kappa_2\ \underline{b}\ (\kappa_1\ \underline{a}) : A_1[y_1],..., A_n[y_n],  B_1[z_1],..., B_m[z_m] \Rightarrow C[f_{\kappa_2}(\underline{z}, f_{\kappa_1}(\underline{y}))]$. 

\noindent We can see that $\interp{\kappa_2\ \underline{b}\ (\kappa_1\ \underline{a})}_{[\underline{y}/\underline{a}, \underline{z}/\underline{b}]} = f_{\kappa_2}(\underline{z}, f_{\kappa_1}(\underline{y}))$.

\item $e_1 \equiv \lambda \underline{a}. n_1, e_2 \equiv \lambda \underline{b}.d. n_2$. By IH, we know that $\Phi \vdash \lambda \underline{a}. n_1 : A_1[y_1],..., A_1[y_1] \Rightarrow D[\interp{n_1}_{[\underline{y}/\underline{a}]}]$ and 
$\Phi \vdash \lambda \underline{b}. d. n_2 : B_1[z_1],..., B_m[z_m], D[y] \Rightarrow C[\interp{n_2}_{[\underline{z}/ \underline{b}, y/d]}]$. So by \textsc{Gen} and \textsc{Inst}, we have

\noindent $\Phi \vdash \lambda \underline{b}. d. n_2 : B_1[z_1],..., B_m[z_m], D[\interp{n_1}_{[\underline{y}/\underline{a}]}] \Rightarrow C[\interp{n_2}_{[\underline{z}/\underline{b}, \interp{n_1}_{[\underline{y}/\underline{a}]}/d]}]$. 

\noindent Then by the \textsc{Cut} rule and beta reduction, we have

\noindent $\Phi \vdash \lambda \underline{a} . \lambda \underline{b} . ([n_1/d]n_2) : A_1[y_1],..., A_1[y_1],  B_1[z_1],..., B_m[z_m] \Rightarrow C[\interp{n_2}_{[\underline{z}/\underline{b}, \interp{n_1}_{[\underline{y}/\underline{a}]}/d]}]$ in $\mathbf{H}$. We 
know that $\interp{[n_1/d]n_2}_{[\underline{y}/\underline{a}, \underline{z}/\underline{b}]} = \interp{n_2}_{[\underline{z}/\underline{b}, \interp{n_1}_{[\underline{y}/\underline{a}]}/d]}$. 
\item The other cases are handle similarly. 
\end{itemize}

\item Step Case.

\

  \begin{tabular}{l}
\infer{\Phi \vdash \lambda \underline{a} . n : [\underline{t}/\underline{x}]\underline{A} \Rightarrow [\underline{t}/\underline{x}]B }{\Phi \vdash \lambda \underline{a} . n : \forall \underline{x} . \underline{A} \Rightarrow B}
\end{tabular}

\

By IH, we know that $\Phi \vdash \lambda \underline{a} . n : \forall \underline{x}. \forall \underline{y} . A_1[y_1],..., A_n[y_n] \Rightarrow B[\interp{n}_{[\underline{y}/\underline{a}]}]$. By \textsc{Inst} rule,  we have 
$\Phi \vdash \lambda \underline{a} . n : [\underline{t}/\underline{x}] A_1[y_1],..., [\underline{t}/\underline{x}] A_n[y_n] \Rightarrow [\underline{t}/\underline{x}] B[\interp{n}_{[\underline{y}/\underline{a}]}]$

\item Step Case.

\

  \begin{tabular}{l}
\infer{\Phi \vdash e: \forall \underline{x} . F}{\Phi \vdash e : F}
\end{tabular}

\

This case is straightforwardly by IH. 
  \end{itemize}
\end{proof}

 The following lemma and a theorem show that the extra argument can be used to record the term representation of the corresponding proof.
\begin{lemma}
\label{realII}
  If $F(\Phi) \vdash \{A_1[y_1],..., A_n[y_n]\} \leadsto^*_{\gamma} \emptyset$, and $y_1,..., y_n$ are fresh, then $F(\Phi) \vdash e_i : \forall \underline{x} . \Rightarrow \gamma A_i[\gamma y_i]$ in $\mathbf{H}$ with $\interp{e_i}_{\emptyset} = \gamma y_i $ for all $i$.  
\end{lemma}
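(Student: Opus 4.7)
The plan is to proceed by induction on the length of the LP-Unif reduction, tracking how the extra ``proof'' arguments $y_i$ get refined by the unifiers along the path. The intuition is that the realizability transformation of Definition~\ref{real} designs each transformed clause so that its head carries the term $K_\kappa(y_1,\ldots,y_m)$; therefore whenever a query $A_i[y_i]$ with a fresh $y_i$ is resolved with such a clause, the unifier is forced to bind $y_i$ to precisely the applicative term denoting the proof constructor used. This gives us both the proof evidence $e_i$ and the equation $\interp{e_i}_\emptyset=\gamma y_i$ essentially by construction.

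For the base case the reduction has length zero, which forces $n=0$ and the conclusion holds vacuously. For the inductive step, I would take the first reduction step; by commuting independent reductions (justified by freshness of the $y_j$'s) I may assume without loss of generality that it resolves the first query $A_1[y_1]$ using a transformed clause $\kappa:\forall\underline{x}.\forall\underline{z}.\,A'_1[z_1],\ldots,A'_m[z_m]\Rightarrow B[K_\kappa(z_1,\ldots,z_m)]$ with the $z_j$ chosen fresh. Since $y_1$ is a fresh variable not occurring in $A_1$, the computed unifier $\gamma_1$ must bind $y_1$ to $K_\kappa(z_1,\ldots,z_m)$ (up to subsequent instantiation). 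The remaining reduction then has shape
\[
F(\Phi)\vdash\{\gamma_1 A'_1[z_1],\ldots,\gamma_1 A'_m[z_m],\gamma_1 A_2[y_2],\ldots,\gamma_1 A_n[y_n]\}\leadsto^*_{\gamma}\emptyset
\]
with all the label variables $z_1,\ldots,z_m,y_2,\ldots,y_n$ still fresh, so the inductive hypothesis supplies proofs $e'_j:\,\Rightarrow\gamma A'_j[\gamma z_j]$ with $\interp{e'_j}_\emptyset=\gamma z_j$, and proofs $e_i$ for $i\geq 2$ satisfying the desired identity already.

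For the remaining query I set $e_1:=\kappa\,e'_1\cdots e'_m$; this is well-typed in $\mathbf{H}$ by $\textsc{Axiom}$, $\textsc{Inst}$, and $\textsc{App}$ applied to the type of $\kappa$ in $F(\Phi)$ and to the $e'_j$. By Definition~\ref{fst:rep} and the inductive hypothesis,
\[
\interp{e_1}_\emptyset \;=\; K_\kappa\bigl(\interp{e'_1}_\emptyset,\ldots,\interp{e'_m}_\emptyset\bigr) \;=\; K_\kappa(\gamma z_1,\ldots,\gamma z_m) \;=\; \gamma\bigl(K_\kappa(z_1,\ldots,z_m)\bigr) \;=\; \gamma y_1,
\]
where the last equality uses that $\gamma$ extends $\gamma_1$ and $\gamma_1(y_1)=K_\kappa(z_1,\ldots,z_m)$. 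The main obstacle I anticipate is the careful bookkeeping around the occur-check and freshness: one must argue that no later step of the reduction can re-substitute $y_1$ (so the binding $\gamma y_1=K_\kappa(\gamma z_1,\ldots,\gamma z_m)$ really is preserved), and that permuting the reduction so the $y_1$-resolving step comes first is sound. Both follow from the disjointness of the fresh label variables across the queries, together with the standard closure of unification under composition of idempotent substitutions.
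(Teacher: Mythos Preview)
Your proposal is correct and follows essentially the same inductive argument as the paper: induction on the length of the $\leadsto^*$ reduction, with the key observation that unifying $A_i[y_i]$ against a head $C[K_\kappa(z_1,\ldots,z_m)]$ forces $\gamma_1 y_i = K_\kappa(z_1,\ldots,z_m)$ because $y_i$ is fresh, and then the evidence $\kappa\,p_1\cdots p_m$ built from the inductive hypothesis satisfies $\interp{\kappa\,p_1\cdots p_m}_\emptyset = K_\kappa(\gamma z_1,\ldots,\gamma z_m) = \gamma y_i$.

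Two small remarks. First, the paper takes length one as the base case (a single reduction step on a singleton goal), while you take length zero; both are fine. Second, your appeal to ``commuting independent reductions'' to assume the first step resolves $A_1$ is unnecessary: the paper simply lets the first step resolve an arbitrary $A_i$ and carries the index $i$ through, which avoids any permutation argument altogether. Your anticipated ``obstacle'' about $y_1$ not being re-substituted later is handled exactly as you suggest---by freshness, $y_1$ no longer occurs in the remaining goal set after the first step, so $\gamma y_1 = \gamma'(\gamma_1 y_1) = \gamma'(K_\kappa(\underline{z})) = K_\kappa(\gamma\underline{z})$---and the paper leaves this implicit.
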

\begin{proof}
  By induction on the length of the reduction $F(\Phi) \vdash \{A_1[y_1],..., A_n[y_n]\} \leadsto^*_{\gamma} \emptyset$. 
  \begin{itemize}
  \item Base Case. Suppose the length is one, namely, $F(\Phi) \vdash \{A[y]\} \leadsto_{\kappa, \gamma_1} \emptyset$. Thus there exists $(\kappa : \forall \underline{x} .  \Rightarrow C[f_\kappa]) \in F(\Phi)$(here $f_\kappa$ is a constant), such that $C[f_\kappa] \sim_{\gamma_1} A[y]$.  Thus $ \gamma_1 (C[f_\kappa]) \equiv \gamma_1 A[\gamma_1 y]$. So $\gamma_1 y \equiv f_\kappa$
and $\gamma_1 C \equiv \gamma_1 A$. We have $F(\Phi) \vdash \kappa :\ \Rightarrow  \gamma_1 C[f_\kappa]$ by the \textsc{Inst} rule, thus $F(\Phi) \vdash \kappa :\ \Rightarrow \gamma_1 A[\gamma_1 y]$, hence $F(\Phi) \vdash \kappa : \forall \underline{x} . \Rightarrow \gamma_1 A[\gamma_1 y]$ by the \textsc{Gen} rule and $\interp{\kappa}_{\emptyset} = f_{\kappa}$.

  \item Step Case. Suppose 

\noindent $F(\Phi) \vdash \{A_1[y_1], ..., A_i[y_i],..., A_n[y_n]\} \leadsto_{\kappa, \gamma_1} \{\gamma_1 A_1[y_1],...,  \gamma_1 B_1[z_1],...,  \gamma_1 B_m[z_m],..., \gamma_1 A_n[y_n]\} \leadsto^*_{\gamma} \emptyset$,
 where $\kappa : \forall \underline{x} . \forall \underline{z} . B_1[z_1],..., B_m[z_m] \Rightarrow C[f_\kappa(z_1,..., z_m)] \in F(\Phi)$, and $C[f_\kappa(z_1,..., z_m)] \sim_{ \gamma_1} A_i[y_i]$. So we know $ \gamma_1 C[f_\kappa(z_1,..., z_m)] \equiv \gamma_1 A_i[\gamma_1 y_i]$,  $\gamma_1 y_i \equiv f_\kappa(z_1,..., z_m),  \gamma_1 C \equiv \gamma_1 A_i$ and 

\noindent $\mathrm{dom}(\gamma_1) \cap \{z_1,..., z_m, y_1,..,y_{i-1}, y_{i+1}, y_n\} = \emptyset$. By IH, we know that  $F(\Phi) \vdash e_1 : \forall \underline{x}. \Rightarrow \gamma \gamma_1 A_1[\gamma y_1],..., $

\noindent $F(\Phi) \vdash p_1 : \forall \underline{x}. \Rightarrow \gamma   \gamma_1 B_1[\gamma z_1],..., F(\Phi) \vdash p_m : \forall \underline{x}. \Rightarrow \gamma   \gamma_1 B_m[\gamma z_m],..., F(\Phi) \vdash e_n : \forall \underline{x} . \Rightarrow \gamma \gamma_1 A_n[\gamma y_n]$ and $\interp{e_1}_{\emptyset} = \gamma y_1, ..., \interp{p_1}_\emptyset = \gamma z_1, ..., \interp{p_m}_\emptyset = \gamma z_m, ..., \interp{e_n}_{\emptyset} = \gamma y_n $ . We can construct a proof $e_i = \kappa \ p_1\ ... p_m$ with $e_i : \forall \underline{x} . \Rightarrow \gamma \gamma_1 A_i[\gamma \gamma_1 y_i]$, by first apply the \textsc{Inst} to instantiate the quantifiers of $\kappa$, then applying the \textsc{Cut} rule $m$ times. Moreover, we have $\interp{\kappa \ p_1\ ... p_m}_{\emptyset} = f_\kappa(\interp{p_1}_\emptyset,...,\interp{p_m}_\emptyset) = \gamma (f_\kappa (z_1,..., z_m)) = \gamma \gamma_1 y_i$. 
     \end{itemize}

\end{proof}

\begin{theorem}\label{th7}
\label{record}
 Suppose $F(\Phi) \vdash \{A[y]\} \leadsto^*_{\gamma} \emptyset$. We have $F(\Phi) \vdash p : \forall \underline{x} . \Rightarrow \gamma A[\gamma y]$ in $\mathbf{H}$, where $p$ is in normal form and $\interp{p}_{\emptyset} = \gamma y$. 
\end{theorem}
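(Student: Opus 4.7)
The plan is to derive Theorem \ref{record} as the single-query specialization of Lemma \ref{realII}. Applying that lemma with $n = 1$ to the singleton multiset $\{A[y]\}$ immediately yields proof evidence $p$ such that $F(\Phi) \vdash p : \forall \underline{x}. \Rightarrow \gamma A[\gamma y]$ in $\mathbf{H}$ and $\interp{p}_\emptyset = \gamma y$, which is the full content of the theorem modulo the explicit requirement that $p$ be in normal form.

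Verifying normality requires only an inspection of how the witness in Lemma \ref{realII} is constructed. The induction builds $p$ inductively from axiom constants via iterated application, namely $p = \kappa \, p_1 \cdots p_m$, where each $p_i$ is a first-order normal witness supplied by the induction hypothesis. No $\lambda$-abstraction is ever introduced during this construction, so no $\beta$-redex can appear in $p$, and $p$ is a first-order normal evidence in the sense of Definition \ref{fst:rep}. Consequently $\interp{p}_\emptyset$ is well-defined and equals $\gamma y$ exactly as the lemma asserts.

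I do not anticipate a substantive obstacle: the normal-form clause is discharged automatically by the inductive shape of the witnesses produced in Lemma \ref{realII}. If a more abstract justification independent of the concrete construction is preferred, one can appeal to Theorem \ref{real:sn} (strong normalisation) together with type preservation to replace $p$ by its $\beta$-normal form without altering its type in $\mathbf{H}$; since the term consists only of axiom constants combined by application, the value of $\interp{\cdot}_\emptyset$ is stable under such normalisation. Either route gives the theorem immediately.
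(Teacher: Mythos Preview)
Your proposal is correct and matches the paper's approach exactly: the paper states Theorem~\ref{record} immediately after Lemma~\ref{realII} without giving a separate proof, treating it as the $n=1$ instance of that lemma. Your observation that the witness constructed in the inductive proof of Lemma~\ref{realII} has the shape $\kappa\,p_1\cdots p_m$ (hence is first-order normal and admits $\interp{\cdot}_\emptyset$) is the right way to account for the normal-form clause, which the paper leaves implicit.
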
 

Now we are able to show that realizability transformation will not change the unification reduction behaviour.
\begin{lemma}
If $\Phi \vdash \{A_1,..., A_n\} \leadsto^* \emptyset$, then $F(\Phi) \vdash \{A_1[y_1],..., A_n[y_n]\} \leadsto^* \emptyset$ with $y_i$ fresh for all $i$. 
\end{lemma}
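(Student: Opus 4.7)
The plan is to proceed by induction on the length of the reduction $\Phi \vdash \{A_1,\dots,A_n\} \leadsto^* \emptyset$, showing at each step that the fresh variables $y_i$ added by the realizability transformation allow the corresponding reduction to be simulated in $F(\Phi)$, with the $y_i$ getting instantiated along the way to precisely the proof-term representation guaranteed by Theorem~\ref{record}.

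For the base case (length one), suppose $\Phi \vdash \{A\} \leadsto_{\kappa,\gamma} \emptyset$, so that $(\kappa : \forall\underline{x}.\Rightarrow C) \in \Phi$ with $C \sim_\gamma A$. Then $(\kappa : \forall\underline{x}.\Rightarrow C[K_\kappa]) \in F(\Phi)$, and because $y$ is fresh and hence disjoint from both $\mathrm{FV}(C)$ and $\mathrm{dom}(\gamma)$, the extended substitution $\gamma' = \gamma \cup [K_\kappa/y]$ is a unifier of $C[K_\kappa]$ and $A[y]$. Thus $F(\Phi) \vdash \{A[y]\} \leadsto_{\kappa,\gamma'} \emptyset$.

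For the step case, consider
\[
\Phi \vdash \{A_1,\dots,A_i,\dots,A_n\} \leadsto_{\kappa,\gamma} \{\gamma A_1,\dots,\gamma B_1,\dots,\gamma B_m,\dots,\gamma A_n\} \leadsto^* \emptyset,
\]
using $\kappa : \forall\underline{x}. B_1,\dots,B_m \Rightarrow C$ with $C \sim_\gamma A_i$. Let $z_1,\dots,z_m$ be fresh. The transformed clause $\kappa : \forall\underline{x}.\forall\underline{z}. B_1[z_1],\dots,B_m[z_m] \Rightarrow C[K_\kappa(z_1,\dots,z_m)]$ is in $F(\Phi)$, and since $y_i$ is fresh, the substitution $\gamma' = \gamma \cup [K_\kappa(z_1,\dots,z_m)/y_i]$ unifies $C[K_\kappa(z_1,\dots,z_m)]$ with $A_i[y_i]$. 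Hence
\[
F(\Phi) \vdash \{A_1[y_1],\dots,A_n[y_n]\} \leadsto_{\kappa,\gamma'} \{(\gamma A_1)[y_1],\dots,(\gamma B_1)[z_1],\dots,(\gamma B_m)[z_m],\dots,(\gamma A_n)[y_n]\},
\]
where on the right-hand side every extra argument is a fresh variable (the $y_j$ for $j\neq i$ were fresh to begin with and are untouched by $\gamma'$; the $z_k$ were introduced fresh). Applying the induction hypothesis to the shorter reduction $\Phi \vdash \{\gamma A_1,\dots,\gamma B_1,\dots,\gamma B_m,\dots,\gamma A_n\} \leadsto^* \emptyset$ with this collection of fresh extra arguments yields $F(\Phi) \vdash \{(\gamma A_1)[y_1],\dots,(\gamma B_m)[z_m],\dots,(\gamma A_n)[y_n]\} \leadsto^* \emptyset$, and concatenating gives the required reduction in $F(\Phi)$.

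The only subtle point, and the place I would be most careful, is the freshness bookkeeping: one must verify that $y_i \notin \mathrm{FV}(C)\cup\mathrm{FV}(K_\kappa(z_1,\dots,z_m))\cup \mathrm{dom}(\gamma)$ so that the occurs-check succeeds and $\gamma'$ is well-defined, and then that after the reduction step the remaining extra arguments $y_1,\dots,y_{i-1},z_1,\dots,z_m,y_{i+1},\dots,y_n$ form a collection of fresh variables to which the induction hypothesis applies (possibly after $\alpha$-renaming, which does not affect reducibility). This is the core content of the argument; the rest is essentially a mechanical lifting of each LP-Unif step through the transformation.
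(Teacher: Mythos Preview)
Your proof is correct and follows essentially the same approach as the paper: induction on the length of the LP-Unif reduction, extending the unifier at each step by $[K_\kappa(z_1,\dots,z_m)/y_i]$ to handle the extra argument introduced by the realizability transformation. Your additional remarks on freshness bookkeeping and the appeal to Theorem~\ref{record} are not needed for the argument (and the paper omits them), but they do no harm.
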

\begin{proof}
  By induction on the length of $\Phi \vdash \{A_1,..., A_n\} \leadsto^* \emptyset$.
  \begin{itemize}
     \item Base Case. Suppose the length is one, namely, $\Phi \vdash \{A\} \leadsto_{\kappa, \gamma_1} \emptyset$. There exists $(\kappa : \forall \underline{x} .\  \Rightarrow C) \in \Phi$ such that $C \sim_{\gamma_1} A$.  Thus $\kappa : \forall \underline{x}. \ \Rightarrow C[f_\kappa] \in F(\Phi)$ and $(C[f_\kappa]) \sim_{\gamma_1[f_\kappa/y]} A[y]$. So $F(\Phi) \vdash \{A[y] \} \leadsto_{\kappa,\gamma_1[f_\kappa/y]} \emptyset$.
     \item Step Case. Suppose $\Phi \vdash \{A_1, ..., A_i,..., A_n\} \leadsto_{\kappa, \gamma_1} \{\gamma_1 A_1,...,  \gamma_1 B_1,...,  \gamma_1 B_m,..., \gamma_1 A_n\} \leadsto^*_{\gamma} \emptyset$, where $\kappa : \forall \underline{x}. B_1,..., B_m \Rightarrow C \in \Phi$ and $C \sim_{ \gamma_1} A_i$. So we know that $\kappa : \forall \underline{x}. \forall \underline{z}. B_1[z_1],..., B_m[z_m] \Rightarrow C[f_\kappa(\underline{z})] \in F(\Phi)$ and $C[f_\kappa(\underline{z})] \sim_{\gamma_1[f_\kappa(\underline{z})/y_i]} A_i[y_i]$. Thus we have the following reduction:

       \begin{center}
         $F(\Phi) \vdash \{A_1[y_1], ..., A_i[y_i],..., A_n[y_n]\}
         \leadsto_{\kappa, \gamma_1[f_\kappa(\underline{z})/y_i]}
         \{\gamma_1[f_\kappa(\underline{z})/y_i] A_1[y_1],...,
         \gamma_1[f_\kappa(\underline{z})/y_i] B_1[z_1],...,
         \gamma_1[f_\kappa(\underline{z})/y_i] B_m[z_m],...,
         \gamma_1[f_\kappa(\underline{z})/y_i] A_n[y_n]\} \equiv
         \{\gamma_1 A_1[y_1],..., \gamma_1 B_1[z_1],..., \gamma_1
         B_m[z_m],..., \gamma_1 A_n[y_n]\}$
       \end{center}
\noindent By IH, $F(\Phi) \vdash \{\gamma_1 A_1[y_1],...,  \gamma_1 B_1[z_1],...,  \gamma_1 B_m[z_m],..., \gamma_1 A_n[y_n]\} \leadsto^* \emptyset$.

  \end{itemize}
\end{proof}

\begin{lemma}
If $F(\Phi) \vdash \{A_1[y_1],..., A_n[y_n]\} \leadsto^* \emptyset$ with $y_i$ fresh for all $i$, then $\Phi \vdash \{A_1,..., A_n\} \leadsto^* \emptyset$. 
\end{lemma}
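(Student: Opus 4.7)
The plan is to prove this converse direction by induction on the length of the LP-Unif reduction $F(\Phi) \vdash \{A_1[y_1],\ldots,A_n[y_n]\} \leadsto^* \emptyset$, mirroring the structure of the preceding lemma. The key point I need to track is that the extra ``proof-recording'' positions are populated by fresh variables that do not interfere with the unification of the original atoms, so restricting the computed substitution to the original variables yields a valid LP-Unif reduction in $\Phi$.

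For the base case (length one), suppose $F(\Phi) \vdash \{A[y]\} \leadsto_{\kappa,\gamma} \emptyset$. Then there is a clause $\kappa : \forall \underline{x}. \Rightarrow C[f_\kappa] \in F(\Phi)$ with $C[f_\kappa] \sim_\gamma A[y]$. By Definition~\ref{real}, this clause comes from $\kappa : \forall \underline{x}. \Rightarrow C \in \Phi$. From $\gamma(C[f_\kappa]) \equiv \gamma(A[y])$ I read off two facts: $\gamma C \equiv \gamma A$ (comparing the original argument positions) and $\gamma y \equiv f_\kappa$ (comparing the final position). Since $y$ is fresh and does not occur in $A$, the substitution $\gamma' \mathrel{:=} \gamma \setminus [f_\kappa/y]$ still unifies $C$ with $A$, so $\Phi \vdash \{A\} \leadsto_{\kappa,\gamma'} \emptyset$.

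For the step case, assume
\[
F(\Phi) \vdash \{A_1[y_1],\ldots,A_i[y_i],\ldots,A_n[y_n]\} \leadsto_{\kappa,\gamma_1} \{\gamma_1 A_1[y_1],\ldots,\gamma_1 B_1[z_1],\ldots,\gamma_1 B_m[z_m],\ldots,\gamma_1 A_n[y_n]\} \leadsto^* \emptyset,
\]
where the clause used is $\kappa : \forall \underline{x}. \forall \underline{z}. B_1[z_1],\ldots,B_m[z_m] \Rightarrow C[f_\kappa(\underline{z})] \in F(\Phi)$, which arose from $\kappa : \forall \underline{x}. B_1,\ldots,B_m \Rightarrow C \in \Phi$. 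Since the $z_j$ are fresh in $F(\Phi)$'s clause and $y_i$ is fresh in the query, decomposing $\gamma_1 (C[f_\kappa(\underline{z})]) \equiv \gamma_1(A_i[y_i])$ yields $\gamma_1 C \equiv \gamma_1 A_i$ together with $\gamma_1 y_i \equiv f_\kappa(\underline{z})$. Hence the restriction $\gamma_1'$ of $\gamma_1$ to the variables of $\underline{x}$ unifies $C$ with $A_i$ in $\Phi$, giving
\[
\Phi \vdash \{A_1,\ldots,A_n\} \leadsto_{\kappa,\gamma_1'} \{\gamma_1' A_1,\ldots,\gamma_1' B_1,\ldots,\gamma_1' B_m,\ldots,\gamma_1' A_n\}.
\]
The inductive hypothesis applied to the tail reduction then produces $\Phi \vdash \{\gamma_1' A_1,\ldots,\gamma_1' B_1,\ldots,\gamma_1' B_m,\ldots,\gamma_1' A_n\} \leadsto^* \emptyset$, and splicing these together yields the required reduction.

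The only real obstacle is bookkeeping: one must carefully verify that erasing the extra $y_i$ and $z_j$ positions from $\gamma_1$ preserves idempotency and does not accidentally drop constraints among the original variables. This is where the freshness assumption on the $y_i$ (in the query) and on the $z_j$ (introduced by the clause variant) is essential, since it guarantees that the extra positions contribute only equations of the form $y_i \mapsto f_\kappa(\underline{z})$ which are independent of the original unification problem.
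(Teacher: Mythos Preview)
Your argument is essentially the paper's proof: induction on the length of the $F(\Phi)$-reduction, removing the binding for the extra ``proof'' position from the unifier at each step, and invoking the inductive hypothesis on the tail. The base case matches exactly.

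One imprecision in your step case: you define $\gamma_1'$ as ``the restriction of $\gamma_1$ to the variables of $\underline{x}$'', but that is not what you want (and not what you did in the base case). The mgu $\gamma_1$ may well bind variables of $A_i$ as well as clause variables, and throwing those bindings away would break $\gamma_1' C \equiv \gamma_1' A_i$ and also make the resolvent mismatch the set to which the IH applies. What you need---and what the paper does---is simply $\gamma_1' := \gamma_1 - [f_\kappa(\underline{z})/y_i]$, i.e.\ drop only the single binding introduced by the extra argument position. With that correction your resolvent $\{\gamma_1' A_1,\ldots,\gamma_1' B_1,\ldots,\gamma_1' B_m,\ldots,\gamma_1' A_n\}$ coincides with $\{\gamma_1 A_1,\ldots,\gamma_1 B_1,\ldots,\gamma_1 B_m,\ldots,\gamma_1 A_n\}$ (since $y_i$ occurs in none of these atoms), and the IH applies directly to the tail reduction as you intend. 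Your closing paragraph already identifies exactly this bookkeeping as the delicate point, so this is a slip in phrasing rather than a conceptual gap.
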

\begin{proof}
  By induction on the length of $F(\Phi) \vdash \{A_1[y_1],..., A_n[y_n]\} \leadsto^* \emptyset$.
  \begin{itemize}
     \item Base Case. Suppose the length is one, namely, $F(\Phi) \vdash \{A[y]\} \leadsto_{\kappa, \gamma_1} \emptyset$. 

We know that $(\kappa : \forall \underline{x} .  \Rightarrow C[f_\kappa]) \in F(\Phi)$ with $C[f_\kappa] \sim_{\gamma_1} A[y]$.  Thus $  C \sim_{\gamma_1-[f_\kappa/y]} A$. So $\Phi \vdash \{A\}\leadsto \emptyset$.
     \item Step Case. Suppose we have the following reduction: 

 $F(\Phi) \vdash \{A_1[y_1], ..., A_i[y_i],..., A_n[y_n]\} \leadsto_{\kappa, \gamma_1} \{\gamma_1 A_1[y_1],...,  \gamma_1 B_1[z_1],...,  \gamma_1 B_m[z_m],..., \gamma_1 A_n[y_n]\} \leadsto^*_{\gamma} \emptyset$

\noindent  Note that $\kappa : \forall \underline{x} . \forall \underline{z} . B_1[z_m],..., B_m[z_m] \Rightarrow C[f_\kappa(z_1,..., z_m)] \in F(\Phi)$ and $C[f_\kappa(z_1,..., z_m)] \sim_{ \gamma_1} A_i[y_i]$. So we know $C \sim_{\gamma_1-[f_\kappa(\underline{z})/y_i]} A_i$. Let $\gamma = \gamma_1-[f_\kappa(\underline{z})/y_i]$. We have

\begin{center}
  $\Phi \vdash \{A_1,..., A_i,..., A_n\} \leadsto_{\kappa, \gamma} \{\gamma A_1,...,
  \gamma B_1,..., \gamma B_m, ..., \gamma A_n\} \equiv \{\gamma_1
  A_1,..., \gamma_1 B_1,..., \gamma_1 B_m, ..., \gamma_1 A_n\}$
\end{center}
 By IH, we know $\Phi \vdash \{\gamma_1 A_1,..., \gamma_1 B_1,..., \gamma_1 B_m, ..., \gamma_1 A_n\} \leadsto^* \emptyset$.
  \end{itemize}
\end{proof}

\begin{theorem}\label{th8}
\label{preservation}
  $\Phi \vdash \{A\} \leadsto^* \emptyset$ iff $F(\Phi) \vdash \{A[y]\} \leadsto^* \emptyset$. 
\end{theorem}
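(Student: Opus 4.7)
The plan is to observe that Theorem~\ref{preservation} is precisely the singleton case $n = 1$ of the two lemmas immediately preceding it. Both directions therefore follow by direct specialisation, and the main work has already been done in those lemmas.

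For the forward direction, I would assume $\Phi \vdash \{A\} \leadsto^* \emptyset$ and pick $y$ fresh with respect to $\mathrm{FV}(A)$. Applying the first of the two preceding lemmas with $n = 1$ and $A_1 \equiv A$, $y_1 \equiv y$ yields $F(\Phi) \vdash \{A[y]\} \leadsto^* \emptyset$. For the backward direction, I would assume $F(\Phi) \vdash \{A[y]\} \leadsto^* \emptyset$ with $y$ fresh and apply the second preceding lemma specialised to $n = 1$ to conclude $\Phi \vdash \{A\} \leadsto^* \emptyset$.

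The only subtle point is the freshness side condition. In the statement of Theorem~\ref{preservation} the variable $y$ appears only as the extra argument slot introduced by the realizability transformation, so by the convention established in Definition~\ref{real} (and used already in Example~\ref{ex:conn:real0} and in Theorem~\ref{record}) it is always assumed fresh with respect to the variables occurring in $A$. Under that convention, the specialisation is immediate and no further reasoning is required. Hence the main obstacle is not in this theorem at all; it was absorbed into the two preceding inductive lemmas, each of which proceeded by induction on the length of the LP-Unif reduction and tracked how the labelled unifier $\gamma_1$ on the original program corresponds to the unifier $\gamma_1[f_\kappa(\underline{z})/y_i]$ on the transformed program.
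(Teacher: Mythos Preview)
Your proposal is correct and matches the paper's own approach: the paper states Theorem~\ref{preservation} immediately after the two general lemmas without further proof, so the intended argument is exactly the specialisation to $n=1$ that you describe. Your remark about the freshness convention on $y$ is a reasonable clarification of something the paper leaves implicit.
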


\begin{example}
 \label{ex:conn:real}
Consider the logic program in Example \ref{ex:conn:real0}. Realizability transformation does not change the behaviour of LP-Unif, we still have the 
  following successful unification reduction path for query ${Connect}(x, y, u)$:
   
  \begin{center}

$F(\Phi) \vdash \{{Connect}(x, y, u)\}\leadsto_{\kappa_1, [x/x_1, y/z_1, K_{\kappa_1}(u_3, u_4)/u]} \{{Connect}(x, y_1, u_3), {Connect}(y_1, y, u_4)\}$

$\leadsto_{\kappa_2, [K_{\kappa_2}/u_3,{Node_1}/x, {Node_2}/y_1, {Node_1}/x_1, b/z_1, K_{\kappa_1}(K_{\kappa_2}, u_4)/u]} $

$\{{Connect}({Node_2}, y, u_4)\}$

$\leadsto_{\kappa_3, [K_{\kappa_3}/u_4, K_{\kappa_2}/u_3, {Node_3}/y, {Node_1}/x, {Node_2}/y_1,{Node_1}/x_1, {Node_3}/z_1, K_{\kappa_1}(K_{\kappa_2}, K_{\kappa_3})/u]} \emptyset $

\end{center}
 
\end{example}

There are logic programs that are overlapping and LP-TM nonterminating (as e.g. the program of Example \ref{ex:conn}),  we would still like
 to obtain a meaningful execution behaviour for LP-Struct, especially if LP-Unif aready allows successful derivations for the programs.
Luckily, we can apply realizability transformation to such programs and apply LP-Struct reduction. 

\begin{proposition}
  \label{trans:equiv}
 For any program $\Phi$, $F(\Phi)$ is LP-TM terminating and non-overlapping.
\end{proposition}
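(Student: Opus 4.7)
The plan is to prove the two conjuncts separately, both by exploiting the structure of the extra argument introduced in Definition~\ref{real}. After realizability transformation, every clause head has the shape $B[K_\kappa(y_1,\dots,y_m)]$, where $K_\kappa$ is a function symbol uniquely associated to the axiom name $\kappa$. This single structural feature is essentially responsible for both properties.

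For non-overlapping, I would take two distinct axioms $\kappa_i : \forall \underline{x}.\underline{B} \Rightarrow B[K_{\kappa_i}(\underline{y})]$ and $\kappa_j : \forall \underline{x}.\underline{D} \Rightarrow E[K_{\kappa_j}(\underline{z})]$ in $F(\Phi)$, and suppose for contradiction that there exist substitutions $\sigma,\delta$ with $\sigma(B[K_{\kappa_i}(\underline{y})]) \equiv \delta(E[K_{\kappa_j}(\underline{z})])$. Looking only at the extra argument position, this would require $\sigma K_{\kappa_i}(\underline{y}) \equiv \delta K_{\kappa_j}(\underline{z})$. Since $K_{\kappa_i}$ and $K_{\kappa_j}$ are distinct function symbols (freshly introduced per axiom in Definition~\ref{real}), the outermost symbols disagree, and no substitution can identify them. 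This yields the non-overlapping condition.

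For LP-TM termination, I would argue that every $\to$-step strictly decreases the multiset of sizes of extra arguments of subgoals, under the (well-founded) multiset extension of the strict subterm order. The key observation is: to apply $\kappa$ to a subgoal $A_i$ in an LP-TM step, we need a substitution $\sigma$ with $\sigma(C[K_\kappa(y_1,\dots,y_m)]) \equiv A_i$. Because term-matching substitutes only into the head, the extra argument of $A_i$ must already have the form $K_\kappa(t_1,\dots,t_m)$ for some $t_1,\dots,t_m$, forcing $\sigma(y_j) = t_j$. The new subgoals $\sigma B_1[y_1],\dots,\sigma B_m[y_m]$ then carry extra arguments $t_1,\dots,t_m$, each a proper subterm of the extra argument of $A_i$. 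Hence one element of the multiset (of size strictly greater than any $|t_j|$) is replaced by finitely many strictly smaller ones. By well-foundedness of the multiset order over the subterm order on first-order terms, no infinite $\to$-chain exists.

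The main obstacle I anticipate is not conceptual but careful bookkeeping: one must be precise that the "extra argument" is what term-matching is forced to see, and that variables appearing in the body extra-argument slots of the transformed clause are exactly the $y_j$ bound at the head (so that $\sigma$ determined by matching on the head fixes them to proper subterms). Once this is spelled out from Definition~\ref{real}, both claims follow directly, and no case analysis on the particular shape of $\Phi$ is needed.
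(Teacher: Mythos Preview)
Your proposal is correct and follows essentially the same approach as the paper: the paper's proof is a two-sentence sketch that appeals to exactly the two features you exploit, namely the strict subterm relation on the extra argument for termination and the uniqueness of the head symbol $K_\kappa$ for non-overlapping. Your version is simply more explicit (in particular, your use of the multiset extension of the subterm order makes precise what the paper leaves as ``a decreasing measurement'').
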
 
\begin{proof}
First, we need to show $\to$-reduction is strongly normalizing in $(F(\Phi), \to)$. By Definition \ref{real}, we can establish a decreasing measurement(from right to left, using the strict subterm relation) for each rule in $F(\Phi)$, since the last argument in the head of each rule is strictly larger than the ones in the body. Then, non-overlapping property is due to the fact that all the heads of the rules in $F(\Phi)$ will be \textit{guarded} by the unique function symbol in Definition \ref{real}.
\end{proof}


\begin{corollary}[Equivalence of LP-Unif and LP-Struct]
  \label{equiv}
 $F(\Phi) \vdash \{A_1,..., A_n\} (\to^\mu \cdot \hookrightarrow^1)^* \{B_1,..., B_m\}$ iff $F(\Phi) \vdash \{A_1,..., A_n\} \leadsto^* \{B_1,..., B_m\}$.  
\end{corollary}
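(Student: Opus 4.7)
The plan is to obtain the corollary as a direct consequence of the two results that immediately precede it, namely Theorem \ref{prod-non-overlap} (Equivalence of LP-Struct and LP-Unif for non-overlapping LP-TM terminating programs) and Proposition \ref{trans:equiv} (realizability transformation produces non-overlapping and LP-TM terminating programs). The point is that $F(\Phi)$ satisfies exactly the two hypotheses that Theorem \ref{prod-non-overlap} requires, so the general equivalence specialises to $F(\Phi)$.

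First I would appeal to Proposition \ref{trans:equiv} to conclude that $F(\Phi)$ is both LP-TM terminating and non-overlapping, thereby discharging the hypotheses of Theorem \ref{prod-non-overlap}. For the direction $(\to^\mu \cdot \hookrightarrow^1)^* \Rightarrow \leadsto^*$, part (2) of Theorem \ref{prod-non-overlap} is applied literally: a LP-Struct reduction from $\{A_1,\dots,A_n\}$ to $\{B_1,\dots,B_m\}$ is immediately turned into a $\leadsto^*$ reduction with the same target.

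For the converse direction $\leadsto^* \Rightarrow (\to^\mu \cdot \hookrightarrow^1)^*$, the argument is by induction on the length of the $\leadsto^*$ derivation, using part (1) of Theorem \ref{prod-non-overlap} at each step. A single step $\{A_1,\dots,A_n\} \leadsto \{B_1,\dots,B_m\}$ produces an LP-Struct reduction to some intermediate $\{C_1,\dots,C_l\}$ together with $\{B_1,\dots,B_m\} \to^* \{C_1,\dots,C_l\}$; since $\{C_1,\dots,C_l\}$ is then a normal form of $\{B_1,\dots,B_m\}$ under $\to^\mu$ (which is well-defined because $F(\Phi)$ is LP-TM terminating), the inductive hypothesis applied at $\{C_1,\dots,C_l\}$ extends the LP-Struct trace. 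Concatenation then gives a full $(\to^\mu \cdot \hookrightarrow^1)^*$ path from $\{A_1,\dots,A_n\}$ to $\{B_1,\dots,B_m\}$.

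The main obstacle is not logical but bookkeeping: part (1) of Theorem \ref{prod-non-overlap} produces an LP-Struct reduction ending at a $\to^\mu$-normal form $\{C_1,\dots,C_l\}$ reachable from the LP-Unif resolvent $\{B_1,\dots,B_m\}$ via $\to^*$, rather than directly at $\{B_1,\dots,B_m\}$ itself. To close the argument cleanly I would either reformulate the statement modulo $\to^\mu$-normalisation, or observe that the subsequent $\leadsto$ step in the derivation, followed by another application of part (1), reconciles the mismatch because further $\to^\mu$-reduction of $\{B_1,\dots,B_m\}$ is consumed by the next LP-Struct phase. This is the only place where the argument requires care; everything else is a mechanical application of the preceding theorem and proposition.
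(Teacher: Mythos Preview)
Your proposal is correct and follows the same approach as the paper, which simply cites Theorem~\ref{prod-non-overlap} and Proposition~\ref{trans:equiv} without further detail. Your treatment is actually more careful than the paper's one-line proof: the bookkeeping issue you flag in the $\leadsto^* \Rightarrow (\to^\mu \cdot \hookrightarrow^1)^*$ direction---that part~(1) of Theorem~\ref{prod-non-overlap} lands at the $\to$-normal form $\{C_1,\dots,C_l\}$ rather than at the intermediate $\leadsto$-resolvent---is real and is glossed over in the paper; your proposed resolution via the inductive step absorbing the extra $\to^*$ segment into the next $\to^\mu$ phase is the right way to close it.
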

\begin{proof}
  By Theorem \ref{prod-non-overlap} and Proposition \ref{trans:equiv}. 
\end{proof}

Using the above corollary and soundness and completeness of LP-Unif, we deduce as a corollary that LP-Struct is sound and complete relative to system $\mathbf{H}$ for the transformed logic program.  

\begin{example}
\label{ex:conn:real1}
For the program in Example \ref{ex:conn:real0}, the query $\mathrm{Connect}(x, y, u)$ can be reduced by LP-Struct successfully:   
 
  \begin{center}

$F(\Phi) \vdash \{{Connect}(x, y, u)\} \hookrightarrow_{\kappa_1, [x/x_1, y/z_1, K_{\kappa_1}(u_3, u_4)/u]} \{{Connect}(x, y,K_{\kappa_1}(u_3, u_4))\} \to_{\kappa_1} \{{Connect}(x, y_1, u_3), {Connect}(y_1, y, u_4)\}$

$\hookrightarrow_{\kappa_2, [K_{\kappa_2}/u_3,{Node_1}/x, {Node_2}/y_1, {Node_1}/x_1, b/z_1, K_{\kappa_1}(K_{\kappa_2}, u_4)/u]} \{{Connect}({Node_1},{Node_2}, K_{\kappa_2}), {Connect}({Node_2}, y, u_4)\} \to_{\kappa_2} \{{Connect}({Node_2}, y, u_4)\}$

$\hookrightarrow_{\kappa_3, [K_{\kappa_3}/u_4, K_{\kappa_2}/u_3, {Node_3}/y, {Node_1}/x, {Node_2}/y_1,{Node_1}/x_1, {Node_3}/z_1, K_{\kappa_1}(K_{\kappa_2},K_{\kappa_3})/u]}  \{{Connect}({Node_2}, {Node_3}, K_{\kappa_3})\}  \to_{\kappa_3} \emptyset $
\end{center}
   
\noindent Note that the answer for $u$ is $K_{\kappa_1}(K_{\kappa_2},K_{\kappa_3})$, which is the first-order term representation of the proof of $ \ \Rightarrow {Connect}({Node}_1, {Node}_3)$. 
\end{example}

Realizability transformation uses the extra argument as decreasing measurement
in the program to achieve termination of $\to$-reduction.  At the same time this extra argument makes the program non-overlapping.  
Realizability 
transformation does not modify the proof-theoretic meaning and the execution behaviour of LP-Unif. 
The next example shows that not every transformation technique for obtaining structurally decreasing LP-TM reductions has such properties:

\begin{example}
  \label{overlap}
  
  Consider the following program:

\begin{center}
\noindent $\kappa_1 : \ \Rightarrow P({Int})$  

\noindent $\kappa_2 : \forall x . P(x), P({List}(x)) \Rightarrow P({List}(x))$
\end{center} 

\noindent It is a folklore method to add a structurally decreasing argument as a  measurement to ensure finiteness of $\to^\mu$.

\begin{center}

\noindent $\kappa_1 : \ \Rightarrow P({Int}, 0)$  

\noindent $\kappa_2 : \forall x . \forall y . P(x, y), P({List}(x), y) \Rightarrow P({List}(x), {S}(y))$
\end{center} 

\noindent We denote the above program as $\Phi'$. Indeed with the measurement we add, the term-matching reduction in $\Phi'$ will be finite. But the reduction for query $P(\mathrm{List}(\mathrm{Int}), z)$ using LP-Unif reduction will fail:

\begin{center}
$\Phi' \vdash \{P({List}({Int}), z) \}\leadsto_{\kappa_2, [ {Int}/x,{S}(y_1)/z]} \{P({Int}, y_1), P({List}({Int}), y_1)\}\leadsto_{\kappa_2, [ 0/y_1, {Int}/x,{S}(0)/z]} \{P({List}({Int}), 0)\} \not \leadsto$ 
\end{center}

\noindent However, the query $P({List}({Int}))$ on the original program using unification reduction will diverge. Divergence and failure are operationally different. Thus
adding arbitrary measurement may modify the execution behaviour of a program (and hence the meaning of the program). In contrast, 
by Theorems~\ref{th6}-\ref{th8},
realizability transformation does not modify the execution behaviour of LP-Unif reduction.
\end{example}

\begin{example}
Consider the following non-LP-TM terminating and non-overlapping program and its version after the realizability transformation:

\begin{center}
\emph{Original program:}  $\kappa : \forall x . P(x) \Rightarrow P(x)$\\
\emph{After transformation:}  $\kappa : \forall x . \forall u. P(x, u) \Rightarrow P(x, K_\kappa(u))$
\end{center}

\noindent Both LP-Struct and LP-Unif will diverge for the queries $P(x), P(x, y)$ in both original and transformed versions. LP-Struct reduction diverges for different reasons in the two cases, one is due to divergence of $\to$-reduction: $\Phi \vdash \{P(x)\} \to \{P(x)\} \to \{P(x)\} \to ...$. The another is due to $\hookrightarrow$-reduction: $\Phi \vdash \{P(x, y)\} \hookrightarrow \{P(x, f_k(u))\} \to \{P(x, u)\} \hookrightarrow \{P(x, K_k(u'))\} \to \{P(x, u')\} \hookrightarrow ...$. Note that a single step of LP-Unif reduction for the original program corresponds to infinite steps of term-matching reduction in LP-Struct. For the transformed version, a single step of LP-Unif reduction corresponds to finite steps of LP-Struct reduction.
\end{example}

\section{Related Work}
\label{rw}
\textit{Proof Search, Logic Programming and Type Theory.} To the best of our knowledge, studying logic programming proof-theoretically
dates back to Girard's suggestion to use the cut rule to model resolution for Horn formulas \cite[Chapter 13.4]{Girard:1989}. Miller et. al. \cite{Miller:1989} use cut-free sequent calculus to
represent a proof for a query. More specifically, given a query $Q$ and a
logic program $\mathcal{P}$, 
$Q$ has a refutation iff there is a derivation in cut-free sequent calculus for $\mathcal{P} \vdash Q$.
Using sequent calculus as a proof theoretic framework gives the flexibility to incorporate different kinds of formulas, e.g. classical formulas and linear formulas into this framework. 

 Interactive theorem prover 
Twelf \cite{pfenning1999system} pioneered  implementation of proof search on top of a depedently typed system 
called LF \cite{harper2007mechanizing}. Similar to Twelf, we believe that type systems serve
as a suitable foundation for logic programming. Comparing to Twelf,
we  specify and analyze different resolution strategies (other than SLD-resolution) and
study their intrinsic relations. 
We also pay more attention to various kinds of productivity compared to Twelf. 

\textit{Structural Resolution.} Structural resolution is a result of joint research efforts by Komendantskaya et. al. (\cite{JKK15}, \cite{KomendantskayaP11}, \cite{komendantskaya2014}). 
The goal of the analysis of structural resolution is to support \textit{sound} coinductive reasoning in logic programming.
For example, given the query $Take(S(S(S(Z))),y, z ), Fib(A,B,y)$ in Example \ref{fib}, one may want not only to obtain a substitution for $z$,
but also a guarantee that the queries to $Fib$ are nonterminating and, moreover, that derivations for $Fib$ will not fail if continued to infinity.
To support this, \textit{productivity analysis} has been developed~\cite{KJ15,KJS16} as a compile time technique to detect observational productivity of logic programs.


\textit{Coinductive Logic Programming.} Gupta et al. \cite{Gupta07}'s \textit{coinductive logic programming} (CoLP) extends  SLD-resolution with a method to use
atomic \textit{coinductive hypotheses}.  That is,
during the execution, if the current queries $\{ C_1, ..., C_i, ..., C_n\}$ contain a query $C_i$ that unifies via $\gamma$ with a $C'$ in the earlier execution, 
then the next step of resolution  will be given by $\{\gamma C_1, ..., \gamma C_{i-1}, \gamma C_{i+1}, ..., \gamma C_n\}$. The coinductvie hyposesis mechanism in CoLP can be viewed as a form of
loop detection. However, CoLP cannot detect hypotheses for more complex patterns of coinduction that produce coinductive subgoals that fail to unify. As discussed in introduction, it is not a suitable tool
to analyze the productivity of infinite data structures in logic programming.

\textit{Proof Relevant Corecursive Resolution.} In our previous work \cite{flops16}, we extended system $\mathbf{H}$
  with fixpoint operator to allow constructing  \textit{corecursive proof evidence} (given by proof terms containing fixpoint operator) for certain
nonterminating LP-TM reductions. The type system that we use to justify the corecursive proof evidence is an extension of Howard's system \textbf{H} with the fixpoint typing 
rule. There, the main challenge was to heuristically and automatically construct corecursive evidence for a given query. 

\textit{Logic Programming by Term Matching.}
LP-TM reduction may seem to be a rare kind of resolution, but it underlies many 
applications. 
The process of simplifying type class
constraints is formally described as the notion of \textit{context reduction}
by Peyton Jones et. al.~\cite{Jones97}. The context reduction process uses exactly the LP-TM reduction
that we described in this paper. The logic-based multi-paradigm programming language PiCAT \cite{zhou2013user, zhou2015constraint} makes extensive use of term-matching with explicit unification. For example, the Fibonacci
sequence in PiCAT is defined as follows: 

\begin{verbatim}
fib(0,F) => F=1.
fib(1,F) => F=1.
fib(N,F),N>1 => fib(N-1,F1),fib(N-2,F2),F=F1+F2.
\end{verbatim}
Through the functionalisation process, the existing termination detection techniques from term rewriting~\cite{bezem2003term} can be directly applied to LP-TM.
Thus we think our work in Section \ref{s:func} builds a useful link between LP-TM and term rewriting.


\section{Conclusions}\label{concl}

We have shown that Howard's system $\mathbf{H}$ is a suitable foundation for logic programming. 
We have proven soundness and completeness of LP-Unif with
respect to the type system $\mathbf{H}$.
We have developed a partial LP-Unif resolution strategy based on labels
    to control LP-Unif reductions and achieve a form of lazy computation. Based on partial
    LP-Unif, we have also defined a new  notion of local productivity. 

We have formally defined structural resolution as LP-Struct, and showed that it in fact combines 
   term-matching resolution with unification. We have shown that
  LP-Struct is operationally equivalent to LP-Unif if the program is LP-TM terminating and non-overlapping. Realizability transformation was suggested as an efficient method to
   render all logic programs LP-TM terminating and non-overlapping. 
   We have shown that the realizability transformation preserves the meaning of the logic program relative to $\mathbf{H}$.
   The equivalence of LP-Struct and LP-Unif has been shown, for the transformed program. As a 
   result, we obtained the soundness and completeness of LP-Struct with respect to \textbf{H} as a corollary.

   We have paid a special attention to a study of LP-TM resolution.
   We have defined a process called functionalisation that transforms a logic program without existential variables into a term rewriting system.
   We have shown the exact relation of LP-TM and term rewriting system, and gave an example of using dependency pair technique from term rewriting to detect the termination of LP-TM.

    For future work, we would like to provide a method to establish local productivity
    for a given query and study the relation between global productivity and local productivity
    in more detail. We plan to implement partial LP-Unif and explore its implications.

    \subsection*{Acknowledgements} We thank the LOPSTR'15 anonymous referees for their helpful comments and suggestions for the earlier conference versions of this paper. 
    We thank Tom Schrijvers for exposing us to the nonterminating type class resolution problem
    and many helpful discussions. This work was funded by EPSRC grant EP/K031864/1-2. A large part of this research was done at the University of Dundee.  

\bibliographystyle{alpha}
 
\bibliography{fac-journal} 

\end{document}